\tikzstyle{vertex}=[circle, draw, inner sep=0pt, minimum size=6pt]
\newcommand{\vertex}{\node[vertex]}
 \newcommand\figcaption{\def\@captype{figure}\caption} 
  \newcommand\tabcaption{\def\@captype{table}\caption} 
\theoremstyle{plain}
\newtheorem{Th}{Theorem}[section]
\newtheorem{Lemma}[Th]{Lemma}
\newtheorem{Prop}[Th]{Proposition}
\theoremstyle{definition}
\newtheorem{Def}[Th]{Definition}
\newtheorem{Rem}[Th]{Remark}
\begin{document}

\title{Lattice Identification and Separation: Theory and Algorithm
\thanks{This research was supported by Simons Foundation grant 282311 and 584960.}}
\author{Yuchen He 
 \thanks{royarthur@gatech.edu,
School of Mathematics, Georgia Institute of Technology
686 Cherry Street, Atlanta, GA 30332-0160 USA.}
and 
Sung Ha Kang
\thanks{kang@math.gatech.edu,
http://people.math.gatech.edu/$\sim$kang/,
School of Mathematics, Georgia Institute of Technology
686 Cherry Street, Atlanta, GA 30332-0160 USA.}
}
\date{\vspace{-5ex}}
\maketitle

\begin{abstract}

Motivated by lattice mixture identification and grain boundary detection, we present a framework for lattice pattern representation and comparison, and propose an efficient algorithm for lattice separation.  
We define new scale and shape descriptors, which helps to reduce the size of equivalence classes of lattice bases considerably.  These finitely many equivalence relations are fully characterized by modular group theory.  We construct the lattice space $\mathscr{L}$ based on the equivalent descriptors and define a metric $d_{\mathscr{L}}$  to accurately quantify the visual similarities and differences between lattices.  
Furthermore, we introduce the Lattice Identification and Separation Algorithm (LISA), which identifies each lattice patterns from superposed lattices. LISA finds lattice candidates from the high responses in the image spectrum, then sequentially extracts different layers of lattice patterns one by one.   
Analyzing the frequency components, we reveal the intricate dependency of LISA's performances on particle radius, lattice density, and relative translations.  Various numerical experiments are designed to show LISA's robustness against a large number of lattice layers, moir\'{e} patterns and missing particles. 
\end{abstract}

\section{Introduction}

From material science to wallpaper pattern studies, there is a wide range of fruitful pattern research both in theory and applications.  Earlier studies~\cite{bieber,wallpaper,spacegroup} categorize patterns by symmetries, such as invariance under reflections or rotations.  Frieze and wallpaper groups are applied to identify periodic patterns in computer vision~\cite{regdom}. These pattern recognition typically involves two tasks: representation of regularities and automated classification~\cite{Bishop}, which are closely related.   

Motivated by some of the current developments in material sciences~\cite{jana2016two,voiry2015phase} and crystalline material image analysis~\cite{berkels2007identification,berkels2008extracting,boerdgen2010convex,hirvonen2018grain,lu2018phase,zosso2017two} , we focus on two dimensional lattice, which plays major roles in crystallography \cite{Xray,electdiff}, sampling theory \cite{samplingtheory}, ecology \cite{ecomodel} and many others.   For example, crystal structures of halite (NaCl) and gold (Au) have distinct scales (NaCl constant: $5.640$\AA~\cite{haynes2014crc}; Au constant: $4.065$\AA~\cite{davey1925precision}), which explains their proprietary differences.  
There are considerable research on detecting (non-superposed) lattice patterns from images, e.g., using the peaks of the Fourier power spectrum to identify the lattice structure \cite{FourierLattice}, and propagating an automatically suggested lattice pattern to the whole image by a tracking algorithm~\cite{MSBP}.  In \cite{geogroup}, the authors associated the wallpaper groups with local affine transformations to cluster repeated elements, and Hays et al.~\cite{highorder} propose the higher-order affinities among potential texels to discover visually consistent lattices.  

Our objective is to separate superposed lattices, which is a mixture of multiple two-dimensional lattices laid over another. This structure is referred to as a \textbf{superlattice}~\cite{superlattice}. Subjects characterized by superlattices are explored in solid physics~\cite{thermoelec,elecmob,hardness}, surface waves~\cite{surfacewave,Faraday} and nonlinear optics~\cite{nonlinearopt}. 
One of the most significant discoveries in low-dimensional material sciences is the family of transition metal dichalcogenides (TMDs)~\cite{choi2017recent,novoselov2004electric}, such as $\text{MoS}_{\text{2}}$~\cite{rao2014extraordinary} and $\text{WTe}_{\text{2}}$~\cite{eftekhari2017tungsten}. A single sheet of TMD shows a superlattice structure: the top and the bottom are chalcogen atoms layers, and the middle is a transition metal atoms layer. 

\begin{figure}
\begin{center}
\begin{tabular}{ccc}
(a) & (b) & (c) \\
\includegraphics[height = 1.5in]{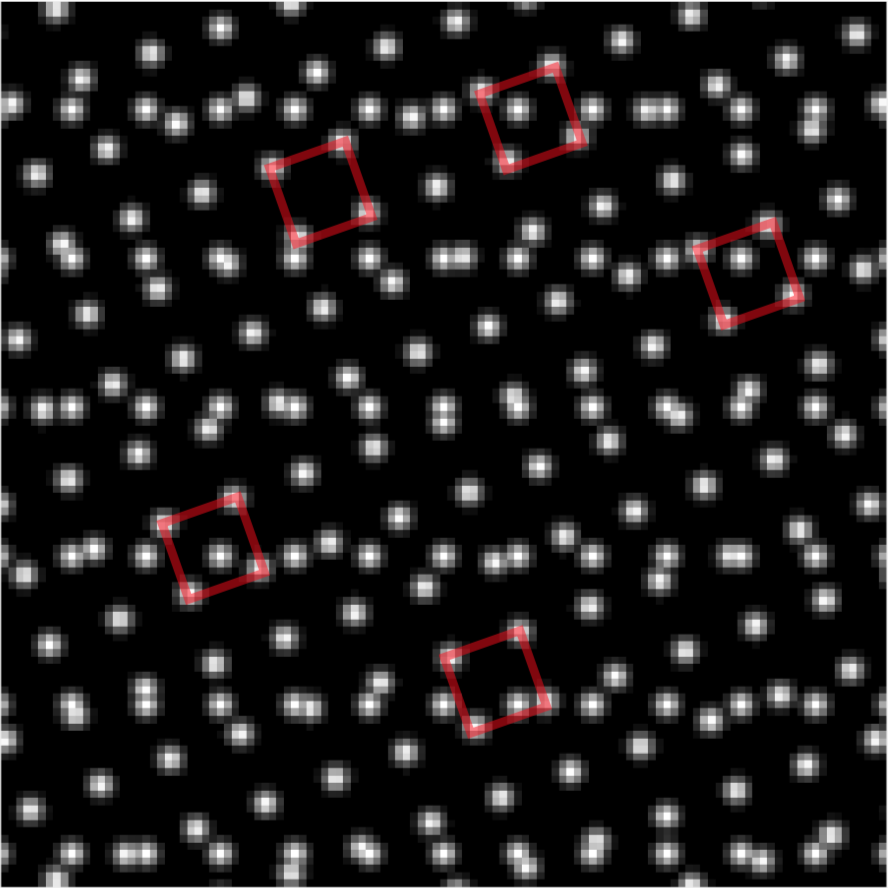} &
\includegraphics[height = 1.5in]{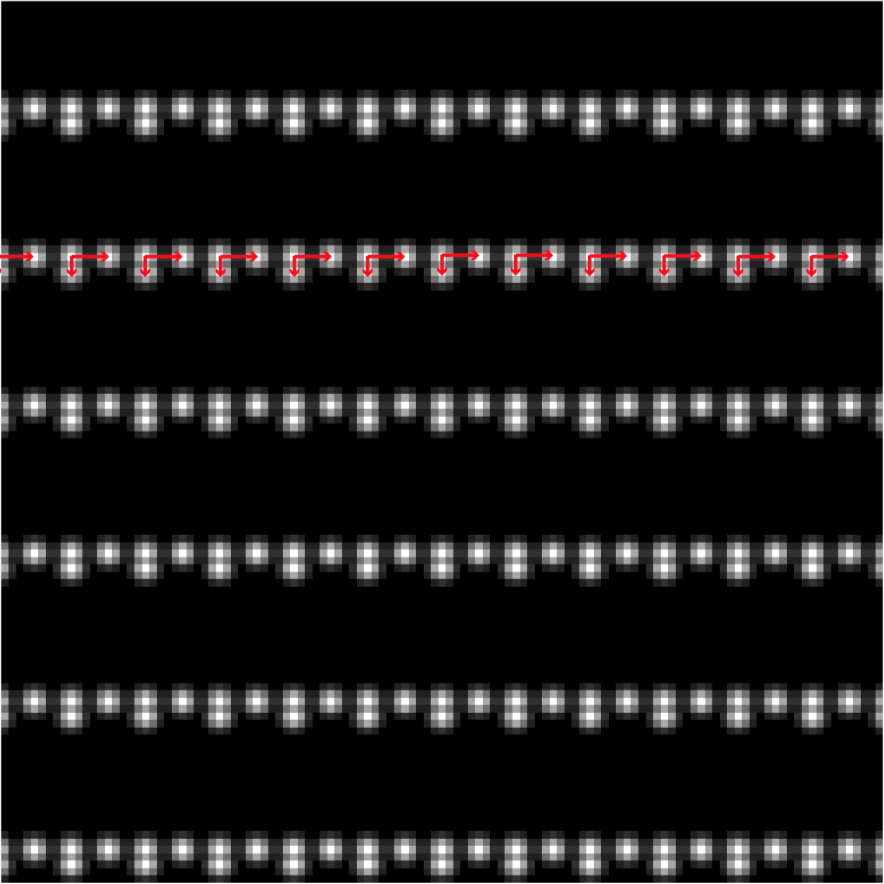} &
\includegraphics[height = 1.5in]{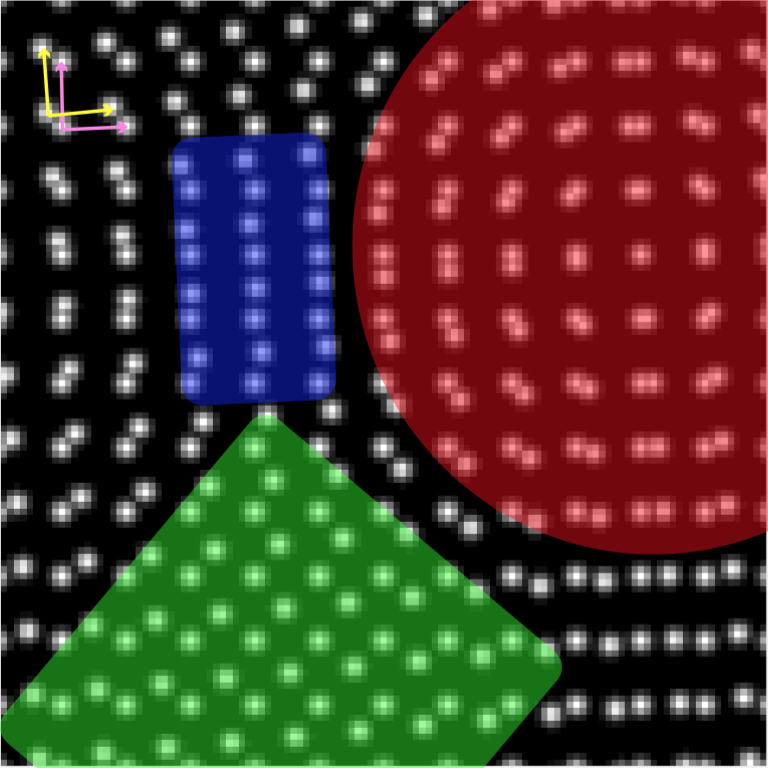} 
\end{tabular}
\end{center}
\caption{[Challenges of pattern separation] Each images have two lattices superposed.  
(a) Each red squares are units of one lattice, yet they have different interior patterns which can confuse the texton approach. 
(b) Using non-superposed lattice identification method, such as~\cite{MSBP}, wrong local feature L-shapes (the red arrows) can be identified.  These red arrows do not correspond to true underlying lattices. (c)  The pink and the yellow L-shapes on the top-left corner denote the true lattice components. There are three different types of Moir\'{e} patterns present (red, blue and green regions).  }\label{F:supchallenge}
\end{figure}

A clear definition of equivalent lattices is a cornerstone to classification.   In many context~\cite{latticespace2,latticespace3, latticespace1}, lattice is considered as an object who shares the structural characteristics with  $\mathbb{Z}^{n}$, $n\in\mathbb{N}_{\geq 1}$, thus all the lattices are equivalent. Focusing on symmetries, the theory of wallpaper groups~\cite{wallpaper} distinguishes five types of lattices: square, rectangular, hexagonal, rhombic and parallelogrammic. Many works in grain boundary detection~\cite{boerdgen2010convex} use the lattice orientation to indicate distinct patterns.  In this paper,  we define equivalent lattices to be identical lattices up to translation.  The scale, as well as rotational differences, are concerned.

Separating individual lattice pattern from a superlattice is challenging.  First, it is difficult to determine the smallest unit, e.g., the texton~\cite{texton}.  Effective methods for non-superposed lattice, such as~\cite{LeughMalik,MSBP} may fail, due to the interaction from different lattice layers.  Figure~\ref{F:supchallenge} (a) shows when the textons of one lattice have inhomogeneous interiors, and (b) shows when local L-shapes~\cite{MSBP} do not represent the correct underlying patterns. 
Secondly, superposed periodic patterns may produce new periodic structures, i.e., moir\'{e} patterns~\cite{moire}, which can confuse the identification process.  (This phenomenon is exploited in some applications~\cite{dislocation,imageresolution}.)  Figure~\ref{F:supchallenge} (c) shows three different moir\'{e} patterns generated by two lattices, whose bases are represented by pink and red L-shapes in the left top corner. 
Thirdly, human supervision~\cite{supervised} can be unreliable.  Psychological evidence ~\cite{density,texton,orientation,Wolfe89} prove that visual search can be interfered by similarities, e.g., less than $15^{\circ}$ of rotational differences between targets and background increases errors~\cite{orientation}, and small differences in densities can interfere target identification~\cite{texton}.

In this paper,  we first establish a framework to model and compare equivalent classes of lattices by constructing the \textbf{lattice space} $\mathscr{L}$ equipped with a new metric $d_{\mathscr{L}}$.  From the positive minimal bases~\cite{lattice}, we derive a new lattice representation using scale and shape descriptors on complex manifolds. Our lattice space consists of equivalent classes of descriptors which represents distinct lattice patterns up to translation. Building upon the Poincar\'{e} metric~\cite{PoinMetric}, a metric structure is then assigned to the lattice space.  

We propose a new Lattice Identification and Separation Algorithm (\textbf{LISA}). It sequentially extracts lattice patterns from a superlattice image without any prior knowledge of the number of layers.  The main idea behind LISA is to measure the periodicities globally by Fourier transform. For higher accuracy of estimating lattice bases, we exploit the Fourier Slice Theorem~\cite{FourierLattice}.   By evaluating pairs of peaks on the power spectrum, the optimal lattice structure is found.  We use a correcting step to obtain a stable estimation. The proposed method is designed to handle the moir\'{e} effect, excessive density, and inhomogeneous texton interior. 
We analytically study the properties of LISA. In particular, we reveal the effects of particle radius, lattice density and relative translations on LISA's performances, and show that LISA is robust against Gaussian perturbation with bounded variation.

Main contributions of this paper are:
\begin{enumerate}
\item{Introduction of the lattice descriptors and a lattice metric space, which gives a unifying representation for lattices and a tool to measure the visual differences and similarities between lattices.}
\item{Proposal for a new efficient lattice identification and separating algorithm. Our method does not require supervision, any prior knowledge of the lattices nor the number of layers.  We provide analytical studies on the construction, efficiency, and robustness of the algorithm. }
\end{enumerate}

This paper is organized as follows.  In Section \ref{sec:pre}, we present a typical notion of the lattice, state assumptions of the image, and review definitions as well as basic concepts.  In Section \ref{sec:lattice}, we introduce and study the descriptors by exploiting minimal bases and modular groups.    The lattice space and its natural metric structure are defined in Section \ref{sec:space}. 
We propose our algorithm LISA in Section \ref{sec:LISA} with analysis on properties of LISA starting in Section \ref{depLISA}.  Various numerical experiments are presented in Section \ref{sec:num}.   We conclude the paper with remarks in section \ref{sec:conc}, followed by Appendix including more discussion about sub-lattices and parent-lattices, and a pseudo-code computing lattice metric. 

\section{Preliminaries and Notations} \label{sec:pre}

A typical definition of lattice starts from two linearly independent vectors, $b_1$ and $b_2$ as basis.  A lattice is a set of linear combination of these basis with integer coefficients.  In two dimensional space, we utilize complex notation, $b_j= x_j+iy_j \in \mathbb{C}$, $x_j,y_j\in\mathbb{R}$, $j=1,2$, for simplicity. 

\begin{Def}[2D Lattice, Basis]\label{latdef} Given a pair of complex numbers $(b_{1},b_{2})\in\mathbb{C}^{2}$ satisfying $b_{1}\neq 0$ and $\text{Im}(b_{2}/b_{1})\neq0$, a 2D lattice determined by $(b_{1},b_{2})$ is defined as the set:
\begin{align*} \Lambda(b_{1},b_{2}) = \{k_{1}b_{1}+k_{2}b_{2}\mid k_{1},k_{2}\in \mathbb{Z}\},
\end{align*}	
and the pair $(b_{1},b_{2})$ is called a basis for $\Lambda(b_{1},b_{2})$.\end{Def}

The condition $\text{Im}(b_{2}/b_{1})\neq0$  represents two vectors $b_1$ and $b_2$ being linearly independent.  For any lattice $\Lambda(b_{1},b_{2})$,  by reordering or multiplying $-1$ if necessary,  we assume that $|b_{1}|\leq |b_{2}|$, and $\text{Im}(b_{2}/b_{1})>0$, i.e. the basis $(b_{1},b_{2})$ is positive. The key to distinguishing lattices depends on the equivalent bases.  
\begin{Def}[Equivalent Bases]\label{equivbases}
Let $(b_{1},b_{2})$ and $(b'_{1},b'_{2})\in\mathbb{C}^{2}$. If $\Lambda(b_{1},b_{2})=\Lambda(b_{1}',b_{2}')$, then $(b_{1},b_{2})$ and $(b'_{1},b'_{2})$ are called a pair of equivalent bases for $\Lambda(b_{1},b_{2})$.
\end{Def}
Given two bases $(b_{1},b_{2})$ and $(b_{1}',b_{2}')$,  they are equivalent if and only if the matrix
\begin{align*}
A=\begin{bmatrix}
	\text{Re}(b'_{1}) & \text{Im}(b'_{1})\\
	\text{Re}(b'_{2}) & \text{Im}(b'_{2})
\end{bmatrix}\begin{bmatrix}
	\text{Re}(b_{1}) & \text{Im}(b_{1})\\
	\text{Re}(b_{2}) & \text{Im}(b_{2})
\end{bmatrix}^{-1}
\end{align*}
has integer entries and the determinant is $\pm 1$.   The definition of lattice using linearly independent vectors is natural and  intuitive, yet, lacks a clear way to define equivalence classes nor has a simple measure for lattice comparison. 

Another important notion is \textbf{minimal basis}~\cite{latgeo}.  A basis $(b_{1},b_{2})$ is minimal if $\max(|b_{1}|,|b_{2}|)\leq|b_{1}\pm b_{2}|$.  Any pair of positive basis can be efficiently transformed to an equivalent minimal basis using the Positive Gauss reduction algorithm~\cite{lattice}. It takes a positive basis $(b_{1},b_{2})$ as the input. While $|b_{2}|<|b_{1}|$, repeat the following until stablization: $(b_{1},b_{2})=(b_{2},-b_{1})$, $q=\lfloor\text{Re}(b_{1}/b_{2})\rceil$, and $b_{2}= b_{2}-qb_{1}$. The output is a minimal basis.
Figure~\ref{F:basisdemo} demonstrates three different bases generating an identical lattice, and one can check that (a)~$(3,4i)$ has the shortest components among all the equivalent bases.

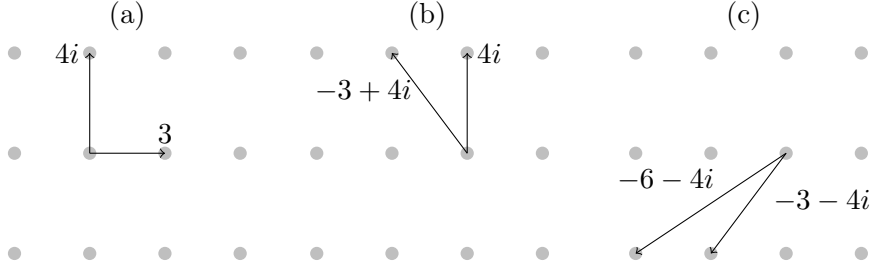
\begin{figure}
\begin{center}
\begin{tabular}{ccccc}
(a) & \hspace{0.5cm} & (b) &\hspace{0.5cm} & (c) \\
  \begin{tikzpicture}
  \node[circle, fill=lightgray, inner sep=0pt,minimum size=5pt] (b) at (0,0) {}; 
  \node[circle, fill=lightgray, inner sep=0pt,minimum size=5pt] (b) at (3/3,0) {};
  \node[circle, fill=lightgray, inner sep=0pt,minimum size=5pt] (b) at (0,4/3) {};
  \node[circle, fill=lightgray, inner sep=0pt,minimum size=5pt] (b) at (3/3,4/3) {};
  \node[circle, fill=lightgray, inner sep=0pt,minimum size=5pt] (b) at (6/3,4/3) {};
  \node[circle, fill=lightgray, inner sep=0pt,minimum size=5pt] (b) at (6/3,0) {};
  \node[circle, fill=lightgray, inner sep=0pt,minimum size=5pt] (b) at (6/3,-4/3) {};
  \node[circle, fill=lightgray, inner sep=0pt,minimum size=5pt] (b) at (-3/3,0) {};
  \node[circle, fill=lightgray, inner sep=0pt,minimum size=5pt] (b) at (-3/3,4/3) {};
  \node[circle, fill=lightgray, inner sep=0pt,minimum size=5pt] (b) at (0,-4/3) {};
  \node[circle, fill=lightgray, inner sep=0pt,minimum size=5pt] (b) at (3/3,-4/3) {};
  \node[circle, fill=lightgray, inner sep=0pt,minimum size=5pt] (b) at (-3/3,-4/3) {};
  \draw[->] (0,0)--(3/3,0) node[above, black] {$3$};
  \draw[->] (0,0)--(0,4/3) node[left, black] {$4i$};
    \end{tikzpicture}
    & &
  \begin{tikzpicture}
  \node[circle, fill=lightgray, inner sep=0pt,minimum size=5pt] (b) at (0,0) {}; 
  \node[circle, fill=lightgray, inner sep=0pt,minimum size=5pt] (b) at (3/3,0) {};
  \node[circle, fill=lightgray, inner sep=0pt,minimum size=5pt] (b) at (0,4/3) {};
  \node[circle, fill=lightgray, inner sep=0pt,minimum size=5pt] (b) at (3/3,4/3) {};
  \node[circle, fill=lightgray, inner sep=0pt,minimum size=5pt] (b) at (-6/3,4/3) {};
  \node[circle, fill=lightgray, inner sep=0pt,minimum size=5pt] (b) at (-6/3,0) {};
  \node[circle, fill=lightgray, inner sep=0pt,minimum size=5pt] (b) at (-6/3,-4/3) {};
  \node[circle, fill=lightgray, inner sep=0pt,minimum size=5pt] (b) at (-3/3,0) {};
  \node[circle, fill=lightgray, inner sep=0pt,minimum size=5pt] (b) at (-3/3,4/3) {};
  \node[circle, fill=lightgray, inner sep=0pt,minimum size=5pt] (b) at (0,-4/3) {};
  \node[circle, fill=lightgray, inner sep=0pt,minimum size=5pt] (b) at (3/3,-4/3) {};
  \node[circle, fill=lightgray, inner sep=0pt,minimum size=5pt] (b) at (-3/3,-4/3) {};
  \draw[->] (0,0)--(0,4/3) node[right, black] {$4i$};
  \draw[->] (0,0)--(-3/3,4/3) node[left,yshift=-0.5cm,xshift=0.4cm,black] {$-3+4i$};
    \end{tikzpicture}
& &
  \begin{tikzpicture}
  \node[circle, fill=lightgray, inner sep=0pt,minimum size=5pt] (b) at (0,0) {}; 
  \node[circle, fill=lightgray, inner sep=0pt,minimum size=5pt] (b) at (3/3,0) {};
  \node[circle, fill=lightgray, inner sep=0pt,minimum size=5pt] (b) at (0,4/3) {};
  \node[circle, fill=lightgray, inner sep=0pt,minimum size=5pt] (b) at (3/3,4/3) {};
  \node[circle, fill=lightgray, inner sep=0pt,minimum size=5pt] (b) at (-6/3,4/3) {};
  \node[circle, fill=lightgray, inner sep=0pt,minimum size=5pt] (b) at (-6/3,0) {};
  \node[circle, fill=lightgray, inner sep=0pt,minimum size=5pt] (b) at (-6/3,-4/3) {};
  \node[circle, fill=lightgray, inner sep=0pt,minimum size=5pt] (b) at (-3/3,0) {};
  \node[circle, fill=lightgray, inner sep=0pt,minimum size=5pt] (b) at (-3/3,4/3) {};
  \node[circle, fill=lightgray, inner sep=0pt,minimum size=5pt] (b) at (0,-4/3) {};
  \node[circle, fill=lightgray, inner sep=0pt,minimum size=5pt] (b) at (3/3,-4/3) {};
  \node[circle, fill=lightgray, inner sep=0pt,minimum size=5pt] (b) at (-3/3,-4/3) {};
  \draw[->] (0,0)--(-3/3,-4/3) node[right,xshift=0.7cm,yshift=0.7cm,black] {$-3-4i$};
  \draw[->] (0,0)--(-6/3,-4/3) node[above,xshift=0.4cm,yshift=0.7cm,black] {$-6-4i$};
      \end{tikzpicture}
    \end{tabular}
    \end{center}
\caption{[Equivalent lattice and minimal bases] (a) $\Lambda(3,4i)$, (b) $\Lambda(4i,-3+4i)$, and (c) $\Lambda(-3-4i,-6-4i)$ are all equivalent.  (a) is a minimal basis: $|\text{Re}(\frac{4i}{3})|=0<\frac{1}{2}$. (b) is not minimal: $|\text{Re}(\frac{-3+4i}{4i})|=1>\frac{1}{2}$, and (c) is not positive: $\text{Im}(\frac{-6-4i}{-3-4i})=-\frac{12}{25}<0$.}\label{F:basisdemo}
\end{figure}

We assume that the given image $U:\mathbb{R}^{2}\to[0,1]$ contains $N$ lattices $\{\mathcal{T}_{\mu_{j}}\Lambda_{j}:=\mathcal{T}_{\mu_{j}}\Lambda_{j}(b_{j,1},b_{j,2})\}_{j=1}^{N}$,
 \begin{align}
U(x,y) = \max_{j=1,\cdots, N}\mathcal{T}_{\mu_{j}}\Lambda_{j}(b_{j,1},b_{j,2})+R(x,y),\, (x,y)\in\mathbb{R}^{2}.\label{imagemodel}
 \end{align}
Here $\mathcal{T}_{\mu}\Lambda(b_{1},b_{2})$ denotes a lattice translated from $0$ by $\mu\in\mathbb{C}$,  and  $R$ is the residual term.  For visualization of lattice points, we put a point spread function (PSF) of Gaussian $G_{\sigma}$ with standard deviation $\sigma$ to each lattice point location \cite{PSF}, i.e.
\begin{align*}
\mathcal{T}_{\mu}\Lambda(b_{1},b_{2}) = \sum_{k_{1},k_{2}\in\mathbb{Z}}G_{\sigma}*\delta(k_{1}b_{1}+k_{2}b_{2}+\mu-x-iy), \;\;\; (x,y)\in\mathbb{R}^{2},
\end{align*}
where $\delta$ is the Dirac function on $\mathbb{C}$ defined by $\delta(x+iy)=1$ if $x+iy=0$, and $\delta(x+iy)=0$ otherwise. All the visible particles are assumed to be homogeneous, i.e., even if multiple lattice points overlapping at the same location, the height is bounded by 1.  This condition is ensured by the normalization in section~\ref{sec:LISA}.

To capture the periodicities of the lattice pattern, we utilize the Fourier and Radon transforms. In complex representation, arguments of the bases are important features and polar coordinate is more efficient in locating the peaks. For example, in Cartesian coordination, for a peak $(\xi,\nu)$ in the frequency domain, the argument estimation error $\Delta \theta$ at $\theta$ and the spatial discretization $\Delta \xi,\Delta \nu$ are related by $|\Delta\theta|\approx|\frac{\xi\Delta\nu-\nu\Delta\xi}{\xi^2+\nu^2}|$. To control $\Delta\theta$, the grid size must vary according to peak locations. We exploit the Fourier Slice Theorem~\cite{FourierSlice} to switch the coordinate system and use polar coordinates in this paper. 
\begin{Th}[Fourier Slice Theorem]\label{FourierSlice}
Consider a function $f:\mathbb{R}^{2}\to\mathbb{R}$, and denote~$\hat{}$~as the Fourier transform, then:
\begin{align*}
	\hat{f}(\gamma\cos\alpha,\gamma\sin\alpha)=\widehat{\mathcal{R}_{\alpha}[f]}(\gamma), \forall\gamma\in\mathbb{R}, \alpha\in[0,\pi),
\end{align*}
where $\mathcal{R}_{\alpha}[f](\gamma):=\mathcal{R}[f](\gamma,\alpha)$, and $\mathcal{R}[f]$ is the radon transform of $f$ defined by:
\begin{align*}
\mathcal{R}[f](\gamma,\alpha)	:=\int_{-\infty}^{+\infty}f(\gamma\cos\alpha-t\sin\alpha,\gamma\sin\alpha+t\cos\alpha)\,dt, \gamma\in \mathbb{R}, \alpha\in [0,\pi).
\end{align*}
\end{Th}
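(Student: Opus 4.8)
The plan is to reduce the two-dimensional Fourier transform evaluated along a radial line to a one-dimensional Fourier transform of a line integral, by means of a single rotation of the spatial variables. First I would write out the left-hand side directly from the definition of the 2D Fourier transform. Fixing the convention $\hat{f}(\xi,\nu)=\int_{\mathbb{R}^{2}}f(x,y)\,e^{-2\pi i(x\xi+y\nu)}\,dx\,dy$ (the same convention being used for the 1D transform on the right), evaluation at the point $(\xi,\nu)=(\gamma\cos\alpha,\gamma\sin\alpha)$ yields an integral whose phase is $-2\pi i\gamma(x\cos\alpha+y\sin\alpha)$.

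The central step is a change of variables rotating the plane by the angle $\alpha$, namely $x=\gamma'\cos\alpha-t\sin\alpha$ and $y=\gamma'\sin\alpha+t\cos\alpha$, where $\gamma'$ is the new radial coordinate and $t$ is the transverse coordinate chosen to match the Radon integration variable. Since this map is orthogonal, its Jacobian determinant equals $1$ and the area element is preserved. A short computation shows the phase collapses: $x\cos\alpha+y\sin\alpha=\gamma'(\cos^{2}\alpha+\sin^{2}\alpha)=\gamma'$, so the exponent becomes $-2\pi i\gamma\gamma'$, independent of the transverse variable $t$.

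With the phase no longer depending on $t$, I would invoke Fubini's theorem to perform the $t$-integration first. The inner integral $\int_{-\infty}^{\infty}f(\gamma'\cos\alpha-t\sin\alpha,\gamma'\sin\alpha+t\cos\alpha)\,dt$ is precisely $\mathcal{R}[f](\gamma',\alpha)=\mathcal{R}_{\alpha}[f](\gamma')$ by the stated definition of the Radon transform. The remaining outer integral $\int_{-\infty}^{\infty}\mathcal{R}_{\alpha}[f](\gamma')\,e^{-2\pi i\gamma\gamma'}\,d\gamma'$ is exactly the one-dimensional Fourier transform $\widehat{\mathcal{R}_{\alpha}[f]}(\gamma)$, which establishes the identity.

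The only genuine subtlety is the hypothesis needed to justify the interchange of the order of integration; assuming $f\in L^{1}(\mathbb{R}^{2})$ ensures both that the 2D transform is well defined and that Fubini applies, since the integrand has constant modulus $|f|$. The algebraic content---the rotation and the phase collapse---is routine, so I do not expect any real obstacle beyond stating the integrability assumption cleanly and keeping the Fourier normalization consistent between the 1D and 2D transforms so that no stray $2\pi$ factors appear on either side.
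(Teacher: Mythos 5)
Your proof is correct: the rotation by $\alpha$ has unit Jacobian, the phase collapses to $-2\pi i\gamma\gamma'$, and Fubini (justified by $f\in L^{1}(\mathbb{R}^{2})$, since the integrand has modulus $|f|$) turns the inner $t$-integral into $\mathcal{R}_{\alpha}[f](\gamma')$ and the outer integral into the one-dimensional transform $\widehat{\mathcal{R}_{\alpha}[f]}(\gamma)$. Note that the paper itself offers no proof to compare against --- Theorem~\ref{FourierSlice} is stated in the preliminaries as a classical result with a citation to Bracewell~\cite{FourierSlice} --- and your argument is precisely the standard projection-slice proof, including the correct observation that an integrability hypothesis (absent from the paper's bare statement $f:\mathbb{R}^{2}\to\mathbb{R}$) is what licenses the interchange of integration order.
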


To construct a metric space, we review the following concepts~\cite{metricgeo} to be used in Section \ref{sec:space}.

\begin{Def}[Quotient pseudometric]
	Suppose $(X,D)$ is a metric space, and $\sim$ is an equivalence relation defined on $X$. Then the quotient pseudometric $\overline{D}$ for $X/\sim$ is defined as follows:
	\begin{align*}
		\overline{D}([x],[y])=\inf\{D(p_{1},q_{1})+\cdots +D(p_{n},q_{n})\},
	\end{align*}
	where $\inf$ is taken over all finite sequences $p_{1},\cdots, p_{n}$ and $q_{1},\cdots, q_{n}$ in $X$ such that $[p_{1}]=[x]$, $[q_{n}]=[y]$ and $[p_{i+1}]=[q_{i}]$, $i=1,2,...,n-1$. 
\end{Def}

The spaces in our work are Kolmogorov spaces, i.e., for every pair of distinct points, each has a neighborhood not containing the other. Hence all the quotient pseudometrics in this paper are in fact metrics.

\begin{Def}[Product Metric]
	Suppose $(X_{1},d_{1}), (X_{2},d_{2}),\cdots, (X_{n},d_{n})$ are metric spaces, and $D$ is an Euclidean norm on $\mathbb{R}^{n}$, then the product metric $D_{d_{1},\cdots,d_{n}}$ associated with $d_{1},\cdots, d_{n}$ for the space $X_{1}\times \cdots X_{n}$ is defined as:
	\begin{align*}
		D_{d_{1},\cdots,d_{n}}((x_{1},\cdots,x_{n}),(y_{1},\cdots,y_{n}))=D((d_{1}(x_{1},y_{1}),\cdots,d_{n}(x_{n},y_{n}))).
	\end{align*}
\end{Def}

\begin{Rem}
The formal definition of minimal basis that involves successive minima can be found in~\cite{latgeo}. The minimal basis is a special case of the reduced basis for a lattice in general dimension. Variations of this notion include well-known Minkowski-reduced~basis~\cite{Minred,Minred2}, generalized Gauss-reduced~basis~\cite{Hemred}, Hermite-Korkine-Zolotarev-reduced~basis~\cite{Cass,KZred}, and Lenstra-Lenstra-Lov\'{a}sz-reduced~basis~\cite{LLL}. They consider different relaxations, since finding the shortest vector using $L_{2}$-norm is NP-hard for randomized reductions~\cite{Aji}.
\end{Rem}

\begin{Rem}
Vall\'{e}e and Vera \cite{lattice} also include discussions about acute bases, which is the situation where $\text{Re}(b_{2}/b_{1})\geq0$. If $(b_{1},b_{2})$ is a positive basis, then the orientation is guaranteed, but it is not necessary that $b_{1}$ and $b_{2}$ have acute angle.  If the basis is acute, then it loses the orientability.  In this paper, we prioritize the orientability, thus focus on positive bases.	
\end{Rem}

\section{New Lattice Representation: descriptors $\beta$ and $\rho$}\label{sec:lattice}

We explore a new representation for a lattice using a pair of complex numbers $(\beta,\rho)\in\mathbb{C}^{2}$, which we call descriptors.  These are  derived from the positive minimal bases~\cite{lattice}, and  the key observation is that a lattice is realized as a transformed unit lattice.  
Transformations such as zoom-in, zoom-out and rotation are encoded in the scale descriptor $\beta$, and sheering, skew elongation or shrinking are controlled by the shape descriptor $\rho$. 
One of the advantages of descriptors is that, compared to the Definition~\ref{latdef}, the number of equivalent representations is dramatically reduced from infinite to only a few. These equivalence relations can be fully characterized by exploiting the modular group theory~\cite{modform}.  Descriptors modulo these relations are used as elements for the lattice space in Section~\ref{sec:space}.

\begin{Def}[Scale and Shape Descriptor] \label{D:betarho}
 	Given a lattice $\Lambda(b_{1},b_{2})$ where   $(b_1, b_2)$ is a minimal basis, we define:
 	\begin{align*}
 	&\text{Scale descriptor:} ~\beta=b_{1};\\
 	&\text{Shape descriptor:} ~\rho=b_{2}/b_{1}.	
 	\end{align*}
We denote $\Lambda\langle \beta,\rho\rangle$ to be a  lattice, which is spanned by $\beta$ and $\beta\rho$, i.e. $\Lambda\langle \beta,\rho\rangle = \Lambda(\beta,\beta\rho)$. 
 \end{Def}
 
 \begin{figure}
\begin{center}
\begin{tabular}{ccccc}
(a) &\hspace{0.5cm} & (b)  &\hspace{0.5cm} & (c) \\
  \begin{tikzpicture}
  \node[circle, fill=gray, inner sep=0pt,minimum size=3pt] (b) at (0,-0.5) {}; 
  \node[circle, fill=gray, inner sep=0pt,minimum size=3pt] (b) at (-0.5,0) {};
  \node[circle, fill=gray, inner sep=0pt,minimum size=3pt] (b) at (-0.5,0.5) {};  
  \node[circle, fill=gray, inner sep=0pt,minimum size=3pt] (b) at (-0.5,1) {};
  \node[circle, fill=gray, inner sep=0pt,minimum size=3pt] (b) at (0.5,-0.5) {};
  \node[circle, fill=gray, inner sep=0pt,minimum size=3pt] (b) at (-0.5,-0.5) {};
  \node[circle, fill=gray, inner sep=0pt,minimum size=3pt] (b) at (1,-0.5) {};
  \node[circle, fill=gray, inner sep=0pt,minimum size=3pt] (b) at (0,0) {}; 
  \node[circle, fill=gray, inner sep=0pt,minimum size=3pt] (b) at (0.5,0) {};
  \node[circle, fill=gray, inner sep=0pt,minimum size=3pt] (b) at (0,0.5) {};
  \node[circle, fill=gray, inner sep=0pt,minimum size=3pt] (b) at (0.5,0.5) {};
  \node[circle, fill=gray, inner sep=0pt,minimum size=3pt] (b) at (1,0) {};
  \node[circle, fill=gray, inner sep=0pt,minimum size=3pt] (b) at (0,1) {};
  \node[circle, fill=gray, inner sep=0pt,minimum size=3pt] (b) at (1,0.5) {};
  \node[circle, fill=gray, inner sep=0pt,minimum size=3pt] (b) at (1,1) {};
  \node[circle, fill=gray, inner sep=0pt,minimum size=3pt] (b) at (0.5,1) {};
  \node[circle, fill=gray, inner sep=0pt,minimum size=3pt] (b) at (1.5,1) {};
  \node[circle, fill=gray, inner sep=0pt,minimum size=3pt] (b) at (1.5,0.5) {};
  \node[circle, fill=gray, inner sep=0pt,minimum size=3pt] (b) at (1.5,0) {};
  \node[circle, fill=gray, inner sep=0pt,minimum size=3pt] (b) at (1.5,-0.5) {};
  \node[circle, fill=gray, inner sep=0pt,minimum size=3pt] (b) at (0,1.5) {};
  \node[circle, fill=gray, inner sep=0pt,minimum size=3pt] (b) at (-0.5,1.5) {};
  \node[circle, fill=gray, inner sep=0pt,minimum size=3pt] (b) at (0.5,1.5) {};
  \node[circle, fill=gray, inner sep=0pt,minimum size=3pt] (b) at (1,1.5) {};
  \node[circle, fill=gray, inner sep=0pt,minimum size=3pt] (b) at (1.5,1.5) {};
    \end{tikzpicture}
& &
  \begin{tikzpicture}
  \node[circle, fill=gray, inner sep=0pt,minimum size=3pt] (b) at (-0.5,0.5) {};  
  \node[circle, fill=gray, inner sep=0pt,minimum size=3pt] (b) at (0.5,-0.5) {};
  \node[circle, fill=gray, inner sep=0pt,minimum size=3pt] (b) at (-0.5,-0.5) {};
  \node[circle, fill=gray, inner sep=0pt,minimum size=3pt] (b) at (0.5,0.5) {};
  \node[circle, fill=gray, inner sep=0pt,minimum size=3pt] (b) at (1.5,0.5) {};
  \node[circle, fill=gray, inner sep=0pt,minimum size=3pt] (b) at (1.5,-0.5) {};
  \node[circle, fill=gray, inner sep=0pt,minimum size=3pt] (b) at (-0.5,1.5) {};
  \node[circle, fill=gray, inner sep=0pt,minimum size=3pt] (b) at (0.5,1.5) {};
  \node[circle, fill=gray, inner sep=0pt,minimum size=3pt] (b) at (1.5,1.5) {};
    \end{tikzpicture}
& &
  \begin{tikzpicture}
  \foreach \x in {-2,-1,0,1,2} {
  	\foreach \y in {-2,-1,0,1,2}{
            \node[circle, fill=gray, inner sep=0pt,minimum size=3pt] (b) at (1.7/4*\x-1/4*\y,1/4*\x+1.7/4*\y) {};}}
    \end{tikzpicture}
\\ & & &\\
(d)  &\hspace{0.5cm} & (e)  &\hspace{0.5cm} & (f) \\
  \begin{tikzpicture}
  \foreach \x in {-2,-1,0,1,2} {
  	\foreach \y in {-1,0,1}{
            \node[circle, fill=gray, inner sep=0pt,minimum size=3pt] (b) at (0.5*\x,\y) {};
            }
        }  
    \end{tikzpicture}
& &
  \begin{tikzpicture}
  \foreach \x in {-2,-1,0,1,2} {
  	\foreach \y in {-2,-1,0,1,2}{
            \node[circle, fill=gray, inner sep=0pt,minimum size=3pt] (b) at (0.5*\x-1/4*\y,1.7/4*\y) {};
            }
        }  
    \end{tikzpicture}
& &
  \begin{tikzpicture}
  \foreach \x in {-2,0,2} {
  	\foreach \y in {-2,0,2}{
            \node[circle, fill=gray, inner sep=0pt,minimum size=3pt] (b) at (0.5*\x-1/4*\y,1.7/4*\y) {};
            }
        }  
    \end{tikzpicture}
\end{tabular}
\end{center}
\caption{[Descriptors $\beta$ and $\rho$]  (a) $\Lambda\langle 1,i\rangle $, (b) $\Lambda\langle 2,i\rangle$, (c) $\Lambda\langle e^{i\pi/6},i\rangle$, (d) $\Lambda\langle 1,2i\rangle$, (e) $\Lambda\langle 1,e^{2\pi i/3}\rangle$, and (f) $\Lambda\langle 2,e^{2\pi i/3}\rangle$.  From (a) to (b), only $\beta$ changed from 1 to 2.  From (a) to (c), $\beta$ is rotated.  From (a) to (d), $\rho$ changed from $i$ to $2i$.  From (a) to (e), $\rho$ is rotated,. From (a) to (f), both $\beta$ changed and $\rho$ rotated.}\label{F:descriptors}
\end{figure}
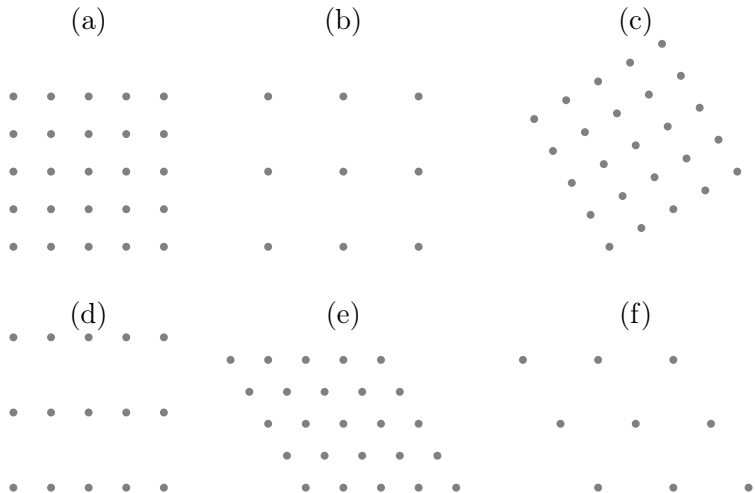

Figure~\ref{F:descriptors} illustrates various effects of changing $\beta$ and $\rho$. From (a) to (b), only  $\beta$ changed from 1 to 2, and from (a) to (c), $\beta$ is rotated.   From (a) to (d)  $\rho$ changed from $i$ to $2i$, and  from (a) to (e), $\rho$ is rotated.  From (a) to (f) both $\beta$ changed and  $\rho$ rotated. Varying the scale descriptor $\beta$ corresponds to zooming and rotations, while the shape descriptor $\rho$ corresponds to  sheering, skew elongation and skew shrinking. More importantly, equivalent bases determine a simple algebraic relation between their descriptors.

\begin{Prop}[Necessary condition]\label{necprop}
If two lattices $\Lambda\langle \beta,\rho\rangle$ and $\Lambda\langle \beta',\rho'\rangle$ are equivalent, then there exists $k_{i}\in\mathbb{Z}$, $i=1,2,3,4$ with $k_{1}k_{4}-k_{2}k_{3}= 1$, such that the following hold:
\begin{align}
&\beta'=e^{i\text{Arg}(k_{1}+k_{2}\rho)}\beta\label{scalecon},~\text{and}\\
&\rho'=(k_{3}+k_{4}\rho)/(k_{1}+k_{2}\rho)\label{shapecon}.
\end{align}
\end{Prop}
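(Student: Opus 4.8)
The plan is to convert the geometric hypothesis that $\Lambda\langle\beta,\rho\rangle$ and $\Lambda\langle\beta',\rho'\rangle$ are equivalent into the algebraic change-of-basis relation recalled after Definition~\ref{equivbases}, and then read off the two identities by passing from the bases $(b_1,b_2),(b_1',b_2')$ to their descriptors. First I would use the characterization that equivalent bases are related by an integer matrix of determinant $\pm1$: writing $A=(k_{ij})$ with $\det A=\pm1$ for the matrix carrying the real coordinate matrix of $(b_1,b_2)$ to that of $(b_1',b_2')$, this is the complex-linear relation $b_1'=k_1 b_1+k_2 b_2$ and $b_2'=k_3 b_1+k_4 b_2$, where I relabel the four entries of $A$ as $k_1,k_2,k_3,k_4$. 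Substituting $b_1=\beta$ and $b_2=\beta\rho$ gives $b_1'=\beta(k_1+k_2\rho)$ and $b_2'=\beta(k_3+k_4\rho)$; dividing the second relation by the first immediately yields $\rho'=b_2'/b_1'=(k_3+k_4\rho)/(k_1+k_2\rho)$, which is~\eqref{shapecon}.

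Next I would upgrade $\det A=\pm1$ to $k_1k_4-k_2k_3=+1$ using orientation. Both bases are positive, so $\mathrm{Im}(\rho)>0$ and $\mathrm{Im}(\rho')>0$; since the real coordinate matrix of a positive basis has determinant $|\beta|^2\,\mathrm{Im}(\rho)>0$, both coordinate matrices have strictly positive determinant. The change-of-basis identity multiplies these determinants by $\det A$, forcing $\det A>0$ and hence $\det A=1$. This gives the required unimodularity $k_1k_4-k_2k_3=1$.

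It remains to prove the scale identity~\eqref{scalecon}, and this is where I expect the real content to lie. From $b_1'=\beta(k_1+k_2\rho)$ we have $\beta'=b_1'=\beta(k_1+k_2\rho)$ \emph{exactly}, so the asserted equality $\beta'=e^{i\,\mathrm{Arg}(k_1+k_2\rho)}\beta$ is equivalent to the single scalar claim $|k_1+k_2\rho|=1$, i.e.\ $|\beta'|=|\beta|$. The key fact I would invoke is that in a minimal (Gauss/Lagrange-reduced) basis the first vector realizes a shortest nonzero lattice vector, so $|b_1|$ equals the first successive minimum $\lambda_1(\Lambda)$, a quantity depending only on $\Lambda$ and not on the chosen basis. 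Applying this to both minimal bases $(b_1,b_2)$ and $(b_1',b_2')$ of the same lattice gives $|\beta|=|b_1|=\lambda_1(\Lambda)=|b_1'|=|\beta'|$, whence $|k_1+k_2\rho|=1$ and therefore $\beta'=\beta(k_1+k_2\rho)=e^{i\,\mathrm{Arg}(k_1+k_2\rho)}\beta$, establishing~\eqref{scalecon}.

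If the short-vector invariance is not taken as known, I would supply it directly from minimality: the inequalities $\max(|b_1|,|b_2|)\le|b_1\pm b_2|$ yield $|\mathrm{Re}(b_1\bar b_2)|\le\tfrac12|b_1|^2$, and then for any nonzero $v=mb_1+nb_2$ the estimate $|v|^2\ge|b_1|^2\bigl(m^2-|mn|+n^2\bigr)\ge|b_1|^2$ shows $|b_1|=\lambda_1(\Lambda)$. The main obstacle is thus not the algebra of the fractional-linear action on $\rho$, which is routine, but recognizing that the modulus in the scale descriptor collapses to a pure phase precisely because the leading length of a minimal basis is a lattice invariant; the positivity assumption is what separately forces the determinant $+1$ rather than $\pm1$.
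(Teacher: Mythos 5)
Your proposal is correct and follows essentially the same route as the paper's own proof: a unimodular change-of-basis matrix relating $(b_1,b_2)$ and $(b_1',b_2')$, division of the two linear relations to get~\eqref{shapecon}, minimality forcing $|b_1|=|b_1'|$ so that the scale factor $k_1+k_2\rho$ is a pure phase giving~\eqref{scalecon}, and positivity forcing $\det=+1$. The only difference is that you supply proofs of the two facts the paper asserts without argument (the shortest-vector invariance $|b_1|=\lambda_1(\Lambda)=|b_1'|$ via the estimate $|mb_1+nb_2|^2\geq|b_1|^2(m^2-|mn|+n^2)$, and the orientation computation showing $\det>0$), which makes your write-up more self-contained.
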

\begin{proof}
Note that $\Lambda\langle\beta,\rho\rangle=\Lambda\langle\beta',\rho'\rangle$ if and only if there is a unimodular matrix $U=\begin{bmatrix}
k_{1}&k_{2}\\
k_{3}&k_{4}	
\end{bmatrix}
$, $k_{i}\in \mathbb{Z}$, $i=1,2,3,4$, such that $U\begin{bmatrix}
b_{1}\\b_{2}	
\end{bmatrix}=\begin{bmatrix}
b'_{1}\\b'_{2}	
\end{bmatrix}
$, where $b_{1}=\beta$, $b_{2}=\beta\rho$, $b_{1}'=\beta'$ and $b_{2}'=\beta'\rho'$ are the associated bases respectively.
From the matrix multiplication, (\ref{shapecon}) follows immediately.  Because the bases are minimal, $|b_{1}|=|b_{1}'|$ implies $b_{1}'=e^{i\theta}b_1$ for some $\theta\in[0,2\pi]$. Combining this with $b_{1}'=k_1b_1+k_2b_2$ gives $k_1+k_2\rho = e^{i\theta}$, thus $\theta =\text{Arg}(k_1+k_2\rho)$ and (\ref{scalecon}) follows. In addition, since $(b_{1},b_{2})$ and $(b_{1}',b_{2}')$ are positive, $\text{det}\,U=k_1k_4-k_2k_3=1$.
\end{proof}

In the following, we apply the modular group theory to prove the converse of Proposition \ref{necprop}, hence whether two descriptors generate an identical lattice can be easily determined. As a preparation, we state a lemma.
\begin{Lemma}\label{lemmaeqv}
The converse of Proposition \ref{necprop} holds if 
\begin{align*}
|k_1+k_2\rho|=1.
\end{align*}
\end{Lemma}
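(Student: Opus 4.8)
The plan is to prove the converse directly by exhibiting the unimodular matrix that links the two descriptor-bases, and then invoking the equivalence criterion recorded just after Definition~\ref{equivbases} (namely, that two bases span the same lattice iff their coordinate change matrix has integer entries and determinant $\pm1$). Throughout I write $b_{1}=\beta$, $b_{2}=\beta\rho$ and $b_{1}'=\beta'$, $b_{2}'=\beta'\rho'$ for the bases associated with the two descriptors, as in Definition~\ref{D:betarho}. The entire role of the hypothesis $|k_{1}+k_{2}\rho|=1$ is to upgrade the phase factor appearing in \eqref{scalecon} to an exact equality: since $k_{1}+k_{2}\rho$ has modulus one, we may replace $e^{i\text{Arg}(k_{1}+k_{2}\rho)}$ by the complex number $k_{1}+k_{2}\rho$ itself.

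First I would substitute this identity into \eqref{scalecon} to compute $b_{1}'$. This yields $b_{1}'=\beta'=(k_{1}+k_{2}\rho)\beta=k_{1}\beta+k_{2}\beta\rho=k_{1}b_{1}+k_{2}b_{2}$, an integer combination of $b_{1}$ and $b_{2}$. Next I would treat $b_{2}'$ via \eqref{shapecon}: since $\beta'=(k_{1}+k_{2}\rho)\beta$ and $\rho'=(k_{3}+k_{4}\rho)/(k_{1}+k_{2}\rho)$, the factor $k_{1}+k_{2}\rho$ cancels, giving $b_{2}'=\beta'\rho'=(k_{3}+k_{4}\rho)\beta=k_{3}b_{1}+k_{4}b_{2}$. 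Collecting these two relations shows that the coordinate vectors satisfy $(b_{1}',b_{2}')^{\mathsf T}=U\,(b_{1},b_{2})^{\mathsf T}$ with $U=\left[\begin{smallmatrix}k_{1}&k_{2}\\ k_{3}&k_{4}\end{smallmatrix}\right]$, which has integer entries and $\det U=k_{1}k_{4}-k_{2}k_{3}=1$ by hypothesis.

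Finally I would apply the matrix criterion: passing to the $2\times2$ real coordinate matrices $B,B'$ whose rows are the real and imaginary parts of the bases, the relation above reads $B'=UB$, so the change-of-basis matrix $A=B'B^{-1}=U$ has integer entries and determinant $1=\pm1$. Hence $(b_{1},b_{2})$ and $(b_{1}',b_{2}')$ are equivalent bases, i.e.\ $\Lambda\langle\beta,\rho\rangle=\Lambda\langle\beta',\rho'\rangle$, which is precisely the converse of Proposition~\ref{necprop} under the stated restriction.

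I do not expect a genuine obstacle in this computation; rather, the interesting point is \emph{why} the restriction $|k_{1}+k_{2}\rho|=1$ is needed, and this is what I would emphasize. Without it, \eqref{scalecon} only forces $b_{1}'$ to have the same \emph{argument} as $(k_{1}+k_{2}\rho)\beta$, differing from it by the positive real scalar $|k_{1}+k_{2}\rho|$, so $b_{1}'$ is generally not a $\mathbb{Z}$-linear combination of $b_{1},b_{2}$ and the clean unimodular relation collapses. This is exactly why removing the modulus-one hypothesis to obtain the full converse of Proposition~\ref{necprop} requires the modular-group analysis advertised after this lemma; Lemma~\ref{lemmaeqv} isolates the tractable case in which the minimality-driven constraint $|k_{1}+k_{2}\rho|=1$ holds.
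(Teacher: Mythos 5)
Your proof is correct and takes essentially the same route as the paper's: both use the hypothesis $|k_{1}+k_{2}\rho|=1$ to replace the phase factor $e^{i\text{Arg}(k_{1}+k_{2}\rho)}$ by $k_{1}+k_{2}\rho$ itself, compute $b_{1}'=k_{1}b_{1}+k_{2}b_{2}$ and $b_{2}'=k_{3}b_{1}+k_{4}b_{2}$ via the cancellation in \eqref{shapecon}, and conclude from the integrality and unimodularity of the coefficient matrix. The only difference is cosmetic --- the paper carries the constant $c=1/|k_{1}+k_{2}\rho|$ symbolically and lets it collapse to $1$, whereas you substitute the hypothesis at the outset and then cite the matrix criterion following Definition~\ref{equivbases} explicitly.
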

\begin{proof}
Denote $c = \frac{1}{|k_1+k_2\rho|}=\frac{e^{i\text{Arg}(k_1+k_2\rho)}}{k_1+k_2\rho}$, then from (\ref{scalecon}), we have $b_1'=\beta'=c(k_1+k_2\rho)\beta=c(k_1b_1+k_2b_2)$. Notice that (\ref{shapecon}) reads $b_{2}'=b_1'\frac{k_3+k_4\rho}{k_1+k_2\rho}=c(k_1+k_2\rho)\beta\frac{k_3+k_4\rho}{k_1+k_2\rho}=c(k_3b_1+k_4b_2)$. The lemma is thus proved.
\end{proof}

Since the equivalence condition~(\ref{scalecon}) shows the dependency of $\beta$ on $\rho$, we start with the details of shape descriptor $\rho$ in (\ref{shapecon}). 

\subsection{Equivalence class of shape descriptor $\rho$}\label{shapeDes}

The definition \ref{D:betarho} of the new descriptors starts from the minimal basis notation (see Section \ref{sec:pre}).  Using basic geometry, it is straightforward to show that the definition of the minimal basis
 is equivalent to the ratio $\rho=b_{2}/b_{1}$ belonging to the following region: 
\begin{align}
\mathcal{P}:=\{z\in\mathbb{C}\mid |z|\geq 1, |\text{Re}(z)|\leq\frac{1}{2}, \text{Im}(z)>0\}\subset\mathbb{C}.\label{Pregion}
\end{align}  

\begin{figure}
\begin{center}
\begin{tabular}{ccc}
(a)  &\hspace{0.5cm} & (b) \\
  \begin{tikzpicture}
   \draw[lightgray] (0,0) circle (2cm);
   \draw[lightgray] (-1,1.7321)--(-1,4);  
   \draw[lightgray] (1,1.7321)--(1,4); 
   \draw[draw=none,samples=35,fill=lightgray]
     plot [domain=0:1] (\x,{sqrt(4-\x^2)})
  -- plot [domain=1.7321:4] (1,\x)
  -- plot [domain=1:-1] (\x,4)
  -- plot [domain=4:1.7321] (-1,\x)
  -- plot [domain=1:0] (-\x,{sqrt(4-\x^2)})
  -- cycle; 
  \draw[gray,->] (-3,0)--(3,0) node[right] {$\text{Re}$};
   \draw[gray,->] (0,-2.5)--(0,4) node[left,yshift=-8]{$\text{Im}$};  
   \draw (2,2.5) coordinate node[right] {$\mathbb{C}$};
   \draw [decorate,decoration={brace},yshift=1.5pt]
(0,0) -- (2,0) node [black,midway,above] 
{\footnotesize $1$};
\draw [decorate,decoration={brace,mirror},yshift=-1.5pt]
(0,1.73) -- (1,1.73) node [black,midway,below] 
{\footnotesize $1/2$};
   \draw[->,thick] (0,0)--(0,3.5) node[midway,yshift=0.8cm,right] {$\rho$};
   \draw[->,dashed] (0,0)--(-2,3/2) node[midway,left,yshift=-2] {$\rho'$};
   \draw[->,dashed] (0,0)--(68/25,-24/25) node[midway,right,xshift=1] {$\rho''$};
   \draw[thick] (-1,1.7321)--(-1,4);
   \draw[thick] (1,1.7321)--(1,4);
    \draw [thick,domain=60:120] plot ({2*cos(\x)}, {2*sin(\x)});
    \end{tikzpicture}
& &
  \begin{tikzpicture}
   \draw[lightgray] (0,0) circle (2cm);
   \draw[lightgray] (-1,1.7321)--(-1,4);  
   \draw[lightgray] (1,1.7321)--(1,4); 
   \draw[draw=none,samples=35,fill=lightgray]
     plot [domain=0:1] (\x,{sqrt(4-\x^2)})
  -- plot [domain=1.7321:4] (1,\x)
  -- plot [domain=1:-1] (\x,4)
  -- plot [domain=4:1.7321] (-1,\x)
  -- plot [domain=1:0] (-\x,{sqrt(4-\x^2)})
  -- cycle; 
  \draw[gray,->] (-3,0)--(3,0) node[right] {$\text{Re}$};
   \draw[gray,->] (0,-2.5)--(0,4) node[left,yshift=-8]{$\text{Im}$};  
   \draw (2,2.5) coordinate node[right] {$\mathbb{C}$};
   \draw[->] (0,0)--(1,2.5) node[midway,right] {$\rho+1$};
   \draw[->,thick] (0,0)--(-1,2.5) node[midway,right] {$\rho$};
   \draw[thick] (-1,1.7321)--(-1,4);
    \draw [thick,domain=90:120] plot ({2*cos(\x)}, {2*sin(\x)});
    \end{tikzpicture} 
\end{tabular}
\end{center} 
\caption{The shaded region in (a) with the boundary is $\mathcal{P}$. Vectors $\rho$, $\rho'$ and $\rho''$ are the shape descriptors for the bases $\Lambda(3,4i)$, $\Lambda(4i,-3+4i)$, and $\Lambda(-3-4i,-6-4i)$  in Figure~\ref{F:basisdemo} (a)-(c) in order.  All represents the same lattice, while  $\rho$ for  $\Lambda(3,4i)$ is a minimal basis.  
(b) A fundamental set of the modular group $\Gamma$ acting on the upper half plane $\mathcal{H}$. If $\text{Re}(\rho)=-1/2$, $\rho$ and $\rho+1$ are in the same orbit.}\label{F:Pdemo}
\end{figure}
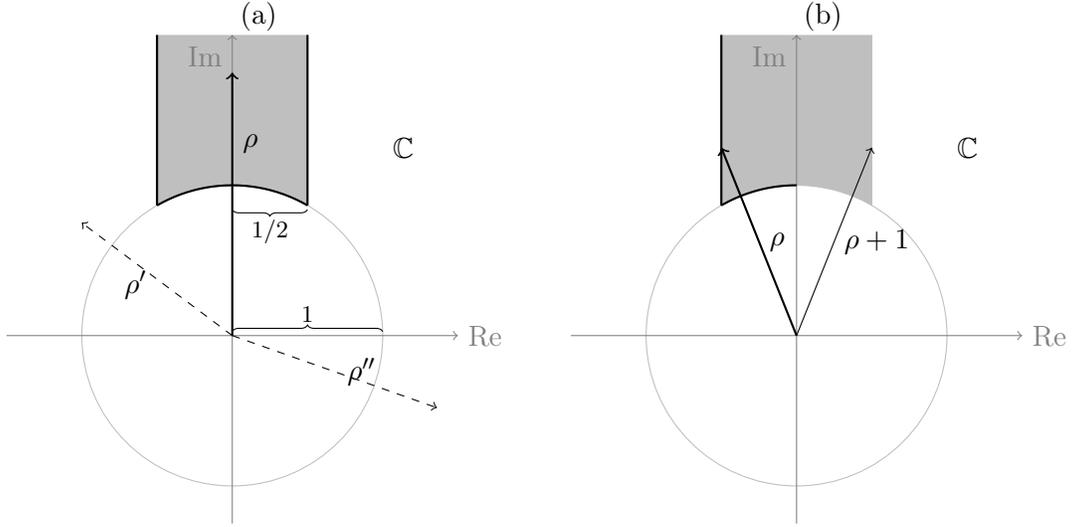

Figure \ref{F:Pdemo} shows $\mathcal{P}$ as the shaded region in (a) with the boundary. As an example, we consider vectors $\rho$, $\rho'$ and $\rho''$ which are the shape descriptors for the bases $\Lambda(3,4i)$, $\Lambda(4i,-3+4i)$, and $\Lambda(-3-4i,-6-4i)$  in Figure~\ref{F:basisdemo} (a)-(c) in order.  These represent the same lattice, while $\rho$ from  $\Lambda(3,4i)$ is  a minimal basis.

The equivalence condition~(\ref{shapecon})  can be viewed as a transformation defined on the upper-half plane $\mathcal{H}=\{z\mid \text{Im}(z)>0\}$ restricted to $\mathcal{P}$. It is expressed as:
\begin{align*}
	z\mapsto\frac{k_{3}+k_{4}z}{k_{1}+k_{2}z},~\{k_{i}\}_{i=1}^{4}\subset\mathbb{Z}, \text{such that~}k_{1}k_{4}-k_{2}k_{3}=1,\forall z\in\mathcal{H},
\end{align*}
Each function of this form is a special case of M\"{o}bius transforms, and the set of these transformations under function composition gives the well-known modular group~\cite{modform}, denoted by $\Gamma$. The elements in $\Gamma$ act on the upper half plane $\mathcal{H}$ naturally.  Within the context of group actions, the equivalence condition (\ref{shapecon}) indicates that equivalent $\rho$ live in the same orbit of $\Gamma$-actions.

The invocation of the modular group $\Gamma$ also reveals the significance of the region $\mathcal{P}$ defined in~(\ref{Pregion}). This $\mathcal{P}$ minus half of its boundary:
\begin{align*}
	 \mathcal{P}\setminus\big(\{z\in\mathcal{H}\mid \text{Re}(z)=\frac{1}{2}\}\cup \{z\in\mathcal{H}\mid 0<\text{Re}(z)<\frac{1}{2},|z|=1\}\big),
\end{align*}
is a fundamental set for $\Gamma$-actions~\cite{modform}.   See Figure \ref{F:Pdemo} (b).  Every element in the fundamental set is a representative of one and only one orbit, and every orbit corresponds to a unique representative. No two shape descriptors in the difference set are equivalent. This result provides a key insight that equivalent shape descriptors only occur on the boundary of $\mathcal{P}$.

Following the approach of Alperin on the modular group~\cite{mod}, we can enumerate all the classes of equivalent shape descriptors systematically.  Any $\Gamma$-action is a composition of a finite sequence of two basic transformations: translation $T$ and inversion followed by reflection $S$. For example, if we denote:
\begin{align*}
T: z\mapsto z+1,~\text{and}~S: z\mapsto -1/z,~\text{for any~}z\in\mathcal{H}	,
\end{align*}
then any element in $\Gamma$ can be written as $S^{k_{1}}T^{l_{1}}S^{k_{2}}T^{l_{2}}\cdots S^{k_{m}}T^{l_{m}}$ for some $k_{j}\in\{0,1\}$, $l_{j}\in\mathbb{Z}$, and $j=1,2,\cdots,m$, where $m\in\mathbb{N}$. Focusing on the sequences of $S$ and $T$ acting on $\mathcal{P}$, we arrive at a full characterization of equivalence classes of shape descriptors.  
\begin{Prop}\label{useful}Given a shape descriptor $\rho\in\mathcal{P}$, we list all the shape descriptors equivalent to it, based on the location of $\rho$ in $\mathcal{P}$ as follows:
\begin{center}  
\begin{table}[H]
\centering
    \begin{tabular}{|c|c|} \hline 
\textbf{Location of $\rho$} & \textbf{All the equivalent shape descriptors}\\\hline
 $\{z\in\mathcal{P}\mid |z|>1,|\text{Re}(z)|<1/2\}$&$\rho$\\\hline
$\{z\in\mathcal{P}\mid \text{Re}(z)=-1/2,|z|>1\}$&$\rho$, $T\rho$\\\hline
$\{z\in\mathcal{P}\mid \text{Re}(z)=1/2,|z|>1\}$&$\rho$, $T^{-1}\rho$\\\hline
$\{z\in\mathcal{P}\mid |z|=1, 0\leq|\text{Re}(z)|<1/2\}$&$\rho$, $S\rho$\\\hline
$e^{i2\pi/3}$&$\rho$, $S\rho$, $T\rho$, $T^{-1}S\rho$, $ST\rho$, $TST\rho$\\\hline
$e^{i\pi/3}$&$\rho$, $S\rho$, $T^{-1}\rho$, $TS\rho$, $ST^{-1}\rho$, $STS\rho$\\\hline
\end{tabular}   
\end{table}
\end{center}
\end{Prop}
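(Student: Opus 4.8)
The plan is to recast the entire statement in the language of the $\Gamma$-action and then read off each row from the geometry of the fundamental set together with the stabilizers of the two elliptic corners. By Proposition \ref{necprop}, equivalent shape descriptors lie in the same $\Gamma$-orbit, so the proposition amounts to determining, for each $\rho\in\mathcal{P}$, the finite set $\{g\in\Gamma : g\rho\in\mathcal{P}\}$, and the listed words are exactly representatives of this set. The geometric enumeration I describe below is primary; once I have produced each orbit-partner in $\mathcal{P}$ I will observe that the connecting map preserves $\mathrm{Im}(\rho)$, hence $|k_1+k_2\rho|=1$, so Lemma \ref{lemmaeqv} certifies that every enumerated partner is a genuine equivalent descriptor and not merely an orbit point.

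First I would record the side-pairings and elliptic data of $\mathcal{P}$ by direct computation with $T:z\mapsto z+1$ and $S:z\mapsto-1/z$. One checks that $T$ carries the left vertical edge $\{\mathrm{Re}(z)=-1/2\}$ onto the right edge $\{\mathrm{Re}(z)=1/2\}$, that $S$ acts on the unit arc by $e^{i\theta}\mapsto e^{i(\pi-\theta)}$ and so reflects the left half of the arc onto the right half while fixing $i$, and that the two corners satisfy $Se^{i2\pi/3}=Te^{i2\pi/3}=e^{i\pi/3}$ and $Se^{i\pi/3}=e^{i2\pi/3}$. Using the relations $S^2=I$ and $(ST)^3=I$ in $\Gamma$ (verified by the matrix identity $(ST)^3=-I$), the stabilizers come out trivial at a generic point, $\langle S\rangle$ of order $2$ at $i$, and $\langle ST\rangle$ of order $3$ at $e^{i2\pi/3}$ (respectively $\langle TS\rangle$ at $e^{i\pi/3}$), since $ST$ fixes the root $e^{i2\pi/3}$ of $z^2+z+1=0$.

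Next I would exploit the organizing identity $\{g\in\Gamma : g\rho\in\mathcal{P}\}=\bigcup_j g_j\,\mathrm{Stab}(\rho)$, where the $g_j$ realize the distinct orbit points lying in $\mathcal{P}$. Because the fundamental set (here $\mathcal{P}$ minus its right-hand boundary) meets every orbit exactly once, the only further orbit points inside $\mathcal{P}$ arise through the boundary identifications above: in the interior there is just $\rho$; on a vertical edge there are $\rho$ and its $T^{\pm1}$-partner; on the arc there are $\rho$ and $S\rho$ (coinciding when $\rho=i$); and at a corner there are the two points $e^{i2\pi/3},e^{i\pi/3}$. Multiplying the number of distinct points by the order of the stabilizer reproduces the row sizes $1,2,2,2,6,6$. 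For $e^{i2\pi/3}$ the stabilizer contributes $\{I,ST,T^{-1}S\}$ (using $(ST)^2=T^{-1}S$) and the coset $T\cdot\mathrm{Stab}=\{T,TST,S\}$ sends it to $e^{i\pi/3}$, giving exactly $\rho,S\rho,T\rho,T^{-1}S\rho,ST\rho,TST\rho$; the symmetric computation at $e^{i\pi/3}$ with $\langle TS\rangle$ and the coset $S\cdot\mathrm{Stab}$ yields the remaining row.

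The edge bookkeeping is routine, reducing to the explicit formulas for $S$ and $T$. The main obstacle is the corner analysis: one must show that no seventh element of $\Gamma$ keeps the corner inside $\mathcal{P}$ and that the six listed words are pairwise-correct representatives. This is precisely where the elliptic structure does the work — the relation $(ST)^3=I$ simultaneously bounds the stabilizer at order three and lets one rewrite every admissible word into one of the six canonical forms, so exhaustiveness follows from the fundamental-domain property combined with the order-three stabilizer rather than from any open-ended case-chase.
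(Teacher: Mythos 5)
Your proposal is correct, and its skeleton --- Proposition~\ref{necprop} to reduce descriptor equivalence to the $\Gamma$-orbit relation, then the fundamental-set property of $\mathcal{P}$ to confine all coincidences to the boundary --- is the same one the paper relies on. Where you genuinely depart from the paper is in how exhaustiveness of each row is certified. The paper follows Alperin~\cite{mod}: every element of $\Gamma$ is a word $S^{k_{1}}T^{l_{1}}\cdots S^{k_{m}}T^{l_{m}}$, and one tracks which such words can return a point of $\mathcal{P}$ to $\mathcal{P}$; the six corner entries are produced by that word-chase. You instead organize the count through the identity $\{g\in\Gamma \mid g\rho\in\mathcal{P}\}=\bigcup_{j}g_{j}\,\mathrm{Stab}(\rho)$, where the $g_{j}$ realize the orbit points lying in $\mathcal{P}$, so each row size is the number of such orbit points times the order of the stabilizer, and the corner rows become transparent: $\mathrm{Stab}(e^{i2\pi/3})=\{I,ST,T^{-1}S\}$ (via $(ST)^{2}=T^{-1}S$) together with the coset $T\cdot\mathrm{Stab}=\{T,TST,S\}$ reproduces the paper's six words exactly, and symmetrically at $e^{i\pi/3}$. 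This buys a closed-form exhaustiveness argument with no open-ended enumeration, at the price of invoking the classification of elliptic points of $\mathrm{PSL}_{2}(\mathbb{Z})$ --- trivial stabilizers generically, order $2$ at $i$, order $3$ at the corners --- which, like the fundamental-set property itself, is standard background both you and the paper take from~\cite{modform,mod}. A second difference is placement of the sufficiency step: you note inside the proof that every connecting map preserves $\mathrm{Im}(\rho)$, hence $|k_{1}+k_{2}\rho|=1$, so Lemma~\ref{lemmaeqv} upgrades orbit-partnership to genuine lattice equivalence; the paper makes exactly this observation, but only in Subsection~\ref{scaleDes}, where it is combined with Proposition~\ref{useful} to obtain Theorem~\ref{fundtheorem}. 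Your arrangement makes the proposition self-contained, which is a mild improvement rather than a discrepancy.
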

Geometrically, the small sizes of equivalence classes come from the restriction, that both $\rho$ and $\rho'$ belong to $\mathcal{P}$. 
In effect, the fundamental principle lurking behind the reduction is the uniqueness of successive minima of a finite dimensional lattice. This requires that the transformations relating two shape descriptors must preserve norm, and they form a proper subset of the modular group.  

\begin{Rem}
Relating to wallpaper groups \cite{wallpaper}, the notion of shape descriptor $\rho$ is compatible with the 5 classes of lattices.  For a lattice $\Lambda\langle \beta,\rho\rangle$, if $\rho=\pm\frac{1}{2}+i\frac{\sqrt{3}}{2}$, then it is hexagonal; if $\rho = i$, then it is square; if $\text{Re}(\rho)=0$, then it is rectangular; if $|\text{Re}(\rho)|=\frac{1}{2}$ or $|\rho|=1$, then it is rhombic; otherwise, it is parallelogrammic. The shape descriptor $\rho$ recognizes finer differences, and with the scale descriptor $\beta$, they represents all lattice patterns up to translation.
\end{Rem}

\subsection{Equivalence class of scale descriptor $\beta$}\label{scaleDes}
The condition~(\ref{scalecon}) shows the dependency of equivalence relations of scale descriptors $\beta$ on that of shape descriptors $\rho$. The choice of $\Gamma$-action that achieves equivalence relation between $\rho$ and $\rho'$ restricts the angles between $\beta$ and its equivalent elements.  Every $\Gamma$ action is associated with a matrix $\begin{bmatrix}
k_{4}&k_{3}\\
k_{2}&k_{1}	
\end{bmatrix}
$, whose entries in the first row are the coefficients in the numerator in~(\ref{shapecon}), and those in the second row the coefficients in the denominator. Corresponding to the nontrivial actions in Proposition~\ref{useful}, the matrix representations are:
\begin{align*}
&T=\begin{bmatrix}
1&1\\
0&1	
\end{bmatrix}, T^{-1}=\begin{bmatrix}
1&-1\\
0&1	
\end{bmatrix}, S=\begin{bmatrix}
0&-1\\
1&0	
\end{bmatrix},
T^{-1}S=\begin{bmatrix}
-1&-1\\
1&0	
\end{bmatrix}, TS=\begin{bmatrix}
1&-1\\
1&0	
\end{bmatrix},\nonumber\\
&TST=\begin{bmatrix}
1&0\\
1&1	
\end{bmatrix}, ST=\begin{bmatrix}
0&-1\\
1&1	
\end{bmatrix},  ST^{-1}=\begin{bmatrix}
0&-1\\
1&-1	
\end{bmatrix}, STS=\begin{bmatrix}
-1&0\\
1&-1	
\end{bmatrix}.
\end{align*}	
This entire list of possible $\Gamma$-actions that relate equivalent shape descriptors contains critical information. First, observe that for any $\rho\in\mathcal{P}$, the corresponding $\Gamma$-actions in Proposition \ref{useful} always satisfy $|k_1+k_2\rho|=1$, where the action is expressed as a matrix $\begin{bmatrix}
k_{4}&k_{3}\\
k_{2}&k_{1}	
\end{bmatrix}$. Therefore, combining Proposition \ref{useful} and Lemma \ref{lemmaeqv} yields our fundamental result.
\begin{Th}
[Equivalent descriptors]\label{fundtheorem}
Two lattices $\Lambda\langle \beta,\rho\rangle$ and $\Lambda\langle \beta',\rho'\rangle$ are equivalent \textbf{if and only if} there exists $k_{i}\in\mathbb{Z}$, $i=1,2,3,4$ with $k_{1}k_{4}-k_{2}k_{3}= 1$, such that the following hold:
\begin{align*}
&\beta'=e^{i\text{Arg}(k_{1}+k_{2}\rho)}\beta,~\text{and}\\
&\rho'=(k_{3}+k_{4}\rho)/(k_{1}+k_{2}\rho).
\end{align*}
\end{Th}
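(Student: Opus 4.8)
The plan is to prove the two directions separately, with the necessity being immediate and the sufficiency carrying the real content. For the \textbf{necessity} direction (``only if''), I observe that it is exactly the content of Proposition \ref{necprop}: if $\Lambda\langle\beta,\rho\rangle$ and $\Lambda\langle\beta',\rho'\rangle$ are equivalent, that proposition already produces integers $k_1,\dots,k_4$ with $k_1k_4-k_2k_3=1$ satisfying both displayed identities, so nothing new is required here.

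The substance is the \textbf{sufficiency} direction (``if''). Assume we are handed integers $k_1,\dots,k_4$ with $k_1k_4-k_2k_3=1$ for which the scale identity (\ref{scalecon}) and shape identity (\ref{shapecon}) hold. By Lemma \ref{lemmaeqv}, the lattices are equivalent as soon as we know $|k_1+k_2\rho|=1$, so the entire task reduces to verifying this single scalar condition for the given coefficients. First I would recall that both $\rho$ and $\rho'$ are shape descriptors, hence both lie in the fundamental region $\mathcal{P}$ defined in (\ref{Pregion}), and that (\ref{shapecon}) exhibits $\rho'$ as the image of $\rho$ under the $\Gamma$-action with matrix $\left[\begin{smallmatrix} k_4 & k_3\\ k_2 & k_1\end{smallmatrix}\right]$.

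To extract $|k_1+k_2\rho|=1$ I would use the standard transformation rule for the imaginary part under a unimodular M\"obius map, namely $\text{Im}(\rho')=\text{Im}(\rho)/|k_1+k_2\rho|^2$, which follows directly from (\ref{shapecon}) together with $k_1k_4-k_2k_3=1$. It therefore suffices to show $\text{Im}(\rho)=\text{Im}(\rho')$. This is where the fundamental-domain structure behind $\mathcal{P}$ enters: two $\Gamma$-equivalent points that both lie in $\mathcal{P}$ either coincide or are related by one of the boundary identifications ($T$ on the vertical edges, $S$ on the unit arc), and every such identification preserves the imaginary part. Concretely, this is already tabulated by Proposition \ref{useful}, whose complete list of equivalent shape descriptors is realized by actions whose matrices $\left[\begin{smallmatrix} k_4 & k_3\\ k_2 & k_1\end{smallmatrix}\right]$ all satisfy $|k_1+k_2\rho|=1$; since the given $(k_i)$ must realize the same pair $(\rho,\rho')$ and $|k_1+k_2\rho|$ is unchanged under replacing the matrix by its negative, the value must be $1$. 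With $|k_1+k_2\rho|=1$ in hand, Lemma \ref{lemmaeqv} closes the argument.

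I expect the main obstacle to be the reduction step rather than any computation: one must argue cleanly that an \emph{arbitrary} admissible $(k_i)$ satisfying the two identities---not merely one of the canonical representatives listed in Proposition \ref{useful}---necessarily gives $|k_1+k_2\rho|=1$. The safest way to avoid a case-by-case dependence on which generator string was used is to route the argument through the imaginary-part identity and the equality $\text{Im}(\rho)=\text{Im}(\rho')$, which holds for \emph{any} two $\Gamma$-equivalent points of $\mathcal{P}$; Proposition \ref{useful} then serves as the explicit confirmation that no admissible action can inflate or deflate the imaginary part while keeping both points in $\mathcal{P}$.
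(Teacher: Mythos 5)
Your proposal is correct, and its skeleton coincides with the paper's: necessity is delegated to Proposition \ref{necprop}, and sufficiency is reduced, via Lemma \ref{lemmaeqv}, to the single scalar condition $|k_1+k_2\rho|=1$, with Proposition \ref{useful} supplying the structural input. Where you genuinely diverge is in how that condition is obtained. The paper simply observes that every $\Gamma$-action tabulated in Proposition \ref{useful} has matrix $\left[\begin{smallmatrix} k_4 & k_3\\ k_2 & k_1\end{smallmatrix}\right]$ satisfying $|k_1+k_2\rho|=1$ and immediately invokes Lemma \ref{lemmaeqv}; this implicitly assumes that the \emph{arbitrary} integer matrix furnished by the theorem's hypothesis is, up to sign, one of the tabulated ones. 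That assumption is in fact true --- the table exhausts all realizing actions, including the nontrivial stabilizer elements at $i$, $e^{i\pi/3}$ and $e^{2i\pi/3}$ --- but the paper never argues it. You instead route through the identity $\text{Im}(\rho')=\text{Im}(\rho)/|k_1+k_2\rho|^2$, valid for any real M\"{o}bius map of determinant $1$, so that the scalar condition follows from $\text{Im}(\rho)=\text{Im}(\rho')$, a statement about the pair of points alone and independent of which matrix realizes the equivalence; that equality in turn follows from Proposition \ref{useful}, since $T^{\pm1}$ preserves imaginary parts and $S$ does so on the unit circle (where $-1/\rho=-\overline{\rho}$). This buys you exactly the quantifier robustness that the paper's one-line argument glosses over, at the cost of one extra, standard computation; the paper's version buys brevity and concreteness. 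Both are valid, but yours is the more airtight rendering of what is essentially the same proof.
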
Second, this list allows us to summarize all the variants of~(\ref{scalecon}).  
\begin{Prop}
Given a scale descriptor $\beta\in\mathbb{C}\setminus\{0\}$ and two shape descriptors $\rho,\rho'\in\mathcal{P}$. If $\rho$ and $\rho'$ are equivalent using the $\Gamma$-actions in the left column of the following table, then all the scale descriptors that satisfy the scale condition with $\beta$ are listed in the right column correspondingly. 
\begin{center}
\begin{table}[H]
\centering
    \begin{tabular}{|c|c|} \hline 
\textbf{$\Gamma$-actions} & \textbf{Scale condition satisfied with}\\\hline
$I$, $T$ ,$T^{-1}$&$\pm\beta$\\\hline
$S$, $T^{-1}S$, $TS$&$\exp(\pm i\text{Arg}(\rho))\beta$\\\hline
$TST$ ,$ST$& $\exp(\pm i\text{Arg}(1+\rho))\beta$\\\hline
$ST^{-1}$ ,$STS$& $\exp(\pm i\text{Arg}(1-\rho))\beta$\\\hline
\end{tabular}    
\end{table}	
\end{center}
\end{Prop}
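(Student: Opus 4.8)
The plan is to obtain each row of the table by substituting the explicit matrices listed just before Theorem~\ref{fundtheorem} into the scale condition~(\ref{scalecon}). By Theorem~\ref{fundtheorem}, once $\Lambda\langle\beta,\rho\rangle=\Lambda\langle\beta',\rho'\rangle$ is realized through a $\Gamma$-action with coefficients $k_1,k_2,k_3,k_4$, the scale descriptor is pinned down as $\beta'=e^{i\mathrm{Arg}(k_1+k_2\rho)}\beta$. Hence the entire statement reduces to evaluating the single quantity $k_1+k_2\rho$ for each admissible action and then applying $e^{i\mathrm{Arg}(\cdot)}$. I would also record that, as already noted in the text, $|k_1+k_2\rho|=1$ for every one of these actions, so Lemma~\ref{lemmaeqv} guarantees they are genuine equivalences.

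First I would fix the bookkeeping set up in the text: a $\Gamma$-action is encoded by the matrix $\begin{bmatrix}k_4&k_3\\k_2&k_1\end{bmatrix}$, so the pair $(k_2,k_1)$ controlling the scale factor is precisely its bottom row. Reading the bottom rows off the displayed list, I would sort the ten actions of Proposition~\ref{useful} by the value of $k_1+k_2\rho$: the actions $I,T,T^{-1}$ have bottom row $(0,1)$, giving $k_1+k_2\rho=1$; the actions $S,T^{-1}S,TS$ have bottom row $(1,0)$, giving $k_1+k_2\rho=\rho$; the actions $TST,ST$ have bottom row $(1,1)$, giving $k_1+k_2\rho=1+\rho$; and $ST^{-1},STS$ have bottom row $(1,-1)$, giving $k_1+k_2\rho=\rho-1$. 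This four-way split accounts for all ten actions, so no case is omitted.

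Plugging these four denominators into~(\ref{scalecon}) yields the first three entries at once, since $e^{i\mathrm{Arg}(1)}=1$ and the factors $e^{i\mathrm{Arg}(\rho)}$, $e^{i\mathrm{Arg}(1+\rho)}$ are already in the claimed form. For the last group I would use $\rho-1=-(1-\rho)$, whence $\mathrm{Arg}(\rho-1)=\mathrm{Arg}(1-\rho)\pm\pi$ and $e^{i\mathrm{Arg}(\rho-1)}=-e^{i\mathrm{Arg}(1-\rho)}$, so the factor coincides with $e^{i\mathrm{Arg}(1-\rho)}$ up to sign. The $\pm$ in the right column is then supplied by the observation that every $\Gamma$-action, being a M\"{o}bius transformation, has exactly the two unimodular lifts $\pm U$, both of determinant $1$ and inducing the same map~(\ref{shapecon}) on $\rho$; replacing $U$ by $-U$ sends $k_1+k_2\rho$ to $-(k_1+k_2\rho)$ and hence multiplies the scale factor by $-1$. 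This produces the $\pm\beta$ of the first row and, in the fourth row, absorbs the sign discrepancy between $\rho-1$ and $1-\rho$ into the listed $e^{i\mathrm{Arg}(1-\rho)}\beta$ together with its negative.

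The arithmetic is routine; the one step requiring care---and the main obstacle---is the two-to-one correspondence between unimodular matrices and $\Gamma$-actions. I would verify explicitly that both lifts $\pm U$ produce legitimate positive minimal bases (they do, because negating a positive minimal basis keeps it positive, minimal, and generating the same lattice), so that both signs of $\beta'$ are genuine scale descriptors rather than spurious. As a final consistency check I would test the corner $\rho=e^{2\pi i/3}$: there the six equivalent shape descriptors collapse onto the three scale factors $1,\,e^{i\mathrm{Arg}(\rho)},\,e^{i\mathrm{Arg}(1+\rho)}$, which together with their sign companions recover exactly the six rotations $e^{ik\pi/3}\beta$, $k=0,\dots,5$, of the hexagonal lattice---confirming that the right column is the complete rotational stabilizer and not merely a single representative.
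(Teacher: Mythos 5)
Your proposal is correct and is essentially the paper's own argument: the paper states this proposition without a separate proof, reading it off exactly as you do from the displayed matrix list, the bottom-row quantity $k_{1}+k_{2}\rho$ substituted into the scale condition~(\ref{scalecon}), with the two unimodular lifts $\pm U$ of each M\"{o}bius transformation supplying the sign ambiguity. One point you should state explicitly rather than gloss over: your derivation produces the sets $\pm e^{i\text{Arg}(\rho)}\beta$, $\pm e^{i\text{Arg}(1+\rho)}\beta$, $\pm e^{i\text{Arg}(1-\rho)}\beta$ (sign \emph{outside} the exponential), while the table as printed reads $e^{\pm i\text{Arg}(\rho)}\beta$, etc., and these sets genuinely differ for generic boundary $\rho$ --- for instance with $\rho=e^{2i\pi/5}$ the value $e^{-i\text{Arg}(\rho)}\beta$ is not even a point of $\Lambda\langle\beta,\rho\rangle$, whereas $-e^{i\text{Arg}(\rho)}\beta$ is a legitimate scale descriptor; your version is the correct one (as your hexagonal stabilizer check also confirms), so you have silently repaired a notational slip in the statement rather than left a gap.
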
 
Using the matrix representation, we can identify the group of M\"{o}bius transforms with the projective general linear group $\text{PGL}_{2}(\mathbb{C})$, and the modular group with the projective special linear group $\text{PSL}_{2}(\mathbb{Z})$. In this sense, Subsection~\ref{shapeDes} and~\ref{scaleDes} establish the connection between equivalent lattices with the subgroup $\text{PSL}_{2}(\mathbb{Z})\leq \text{PGL}_{2}(\mathbb{C})$. In the appendix, we extend similar algebraic correspondence to link sub-lattices with monoids and find that it is intrinsically hard to search for a particular sub-lattice or a parent-lattice of a given lattice systematically.

\section{New Definition of Lattice Space and Metric}\label{sec:space}

Using the descriptors, we present the lattice space $\mathscr{L}$ equipped with a metric $d_\mathscr{L}$. 
The equivalence relations allow every lattice pattern be uniquely represented by a point in this space $\mathscr{L}$.  
\begin{Def}[Lattice Space]\label{latfo}
Let $\mathcal{P}$ be the set of shape descriptors  $\rho$ (\ref{Pregion}), and $\mathcal{K}:=\mathbb{C}\setminus\{0\}$ be the set of scale descriptors $\beta$. The lattice space $\mathscr{L}$ is defined as follows:
\begin{align}
	\mathscr{L}=\big(\mathcal{K}/\sim_{1}\times\mathcal{P}/\sim_{2}\big)/\sim_{3},\label{latticedef}
\end{align}
where the three equivalence relations are:
\begin{enumerate}
\item{$\beta\sim_{1}-\beta,~\forall \beta\in \mathcal{K}$,  i.e.,
	$\Lambda\langle \beta,\rho\rangle = \Lambda\langle -\beta,\rho\rangle$}
\item{$\rho\sim_{2}\rho'$, $\forall \rho,\rho'\in\mathcal{P}$, i.e.,
$\Lambda\langle \beta,\rho\rangle = \Lambda\langle \beta,\rho'\rangle$, 
for  $\text{Im}(\rho)=\text{Im}(\rho')$ and $|\text{Re}(\rho)|=|\text{Re}(\rho')|=1/2$, }
\item{ $\langle [\beta]_{1},[\rho]_{2} \rangle \sim_{3} \langle [\beta \rho]_{1},[-1/\rho]_{2} \rangle$, $\forall \beta\in \mathcal{K}$, $\forall \rho\in\mathcal{P}$ ,i.e.,  $\Lambda\langle \beta,\rho\rangle = \Lambda \langle \beta\rho,-1/\rho\rangle$,  for  $|\rho|=1$,}
\end{enumerate}
and $\mathscr{L}$ has the induced topology.  We denote $[\beta,\rho]$ as an element in $\mathscr{L}$ considering  the equivalence relations.
\end{Def}

The first equivalence relation, $\mathcal{K}/\sim_{1}$,  consists of scale descriptors $\beta$ up to sign, which is equivalent to only considering the upper-half plane $\mathcal{H}$ union the positive real axis.  
The second equivalence relation comes from the region $\mathcal{P}$ in Figure \ref{F:Pdemo}, which is naturally turned into a hyperbolic surface when the Poincar\'{e} metric~\cite{PoinMetric} is applied. 
Gluing together the left and right boundaries of $\mathcal{P}$, $\mathcal{P}/\sim_{2}$ becomes homeomorphic to a truncated cylindrical surface.
The third equivalence relation $\sim_{3}$ represents a particular case when the basis vectors have an identical length, i.e. $|b_1|=|b_2|$.  In such a case, there are a number of different representations for the same lattice pattern.  This introduces many different paths for length computation, and these different paths are carefully considered for metric definition below.

\begin{figure} 
\begin{center}
\begin{tabular}{ccccc}
(a) & \hspace{0.5cm} & (b) & \hspace{0.5cm} & (c) \\
    \begin{tikzpicture}   
     \foreach \x in {-1,0,1} {
  	\foreach \y in {-1,0,1}{
            \node[circle, fill=gray, inner sep=0pt,minimum size=3pt] (b) at (\x,\y) {}; } }
        \draw[->,thick,blue] (0,0)--(1,0) node[] {};
   \draw[->,thick,red] (0,0)--(0,1) node[] {};
    \end{tikzpicture}
& &     \includegraphics[scale=0.24]{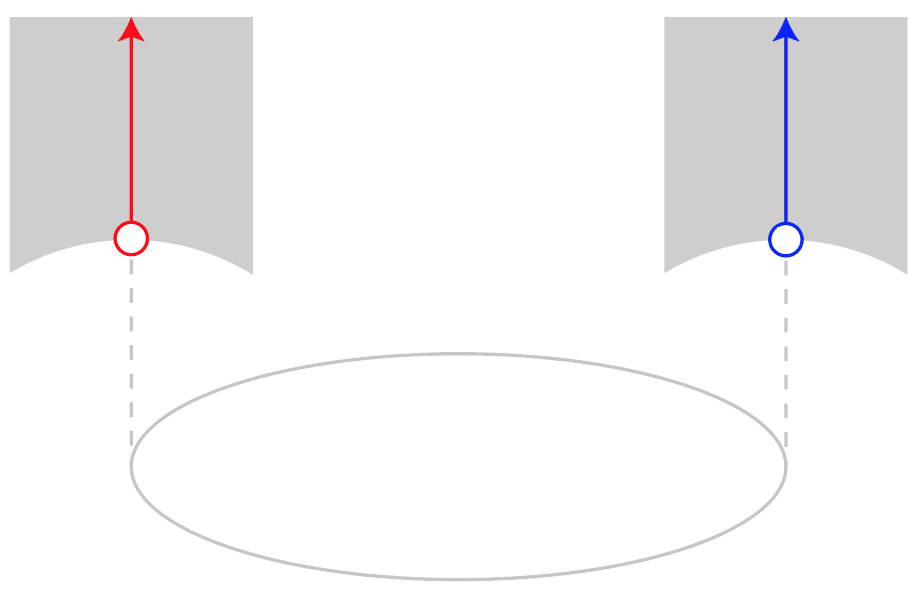} 
& &     \includegraphics[scale=0.24]{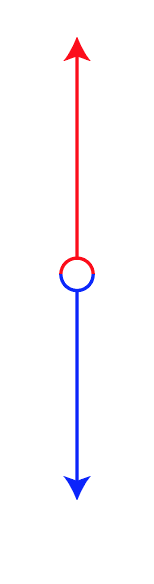} \\
(d) & & (e) &  & (f) \\
   \begin{tikzpicture}  
     \foreach \x in {-1,0,1} {
  	\foreach \y in {-1, 0,1,2}{
            \node[circle, fill=gray, inner sep=0pt,minimum size=3pt] (b) at (1*\x-1/2*\y,1.7/2*\y) {};}}
        \draw[->,thick,blue] (0,0)--(1,0) node[] {};
   \draw[->,thick,green] (0,0)--(0.5,1.7/2) node[] {};
   \draw[->,thick,red] (0,0)--(-0.5,1.7/2) node[] {}; 
    \end{tikzpicture}
 & &
     \includegraphics[scale=0.24]{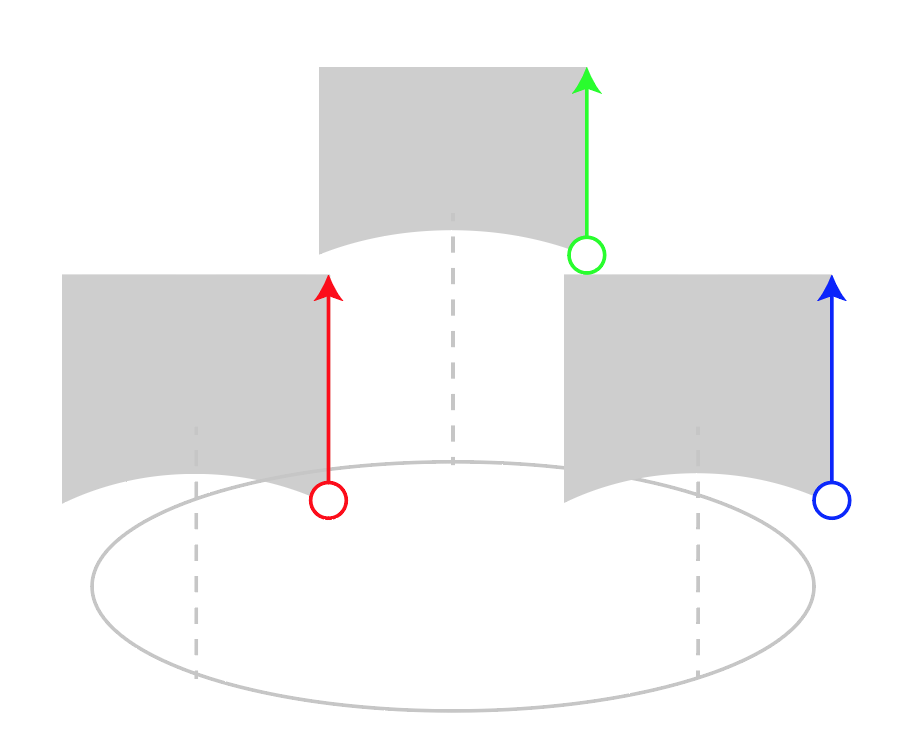} 
& &
    \includegraphics[scale=0.3]{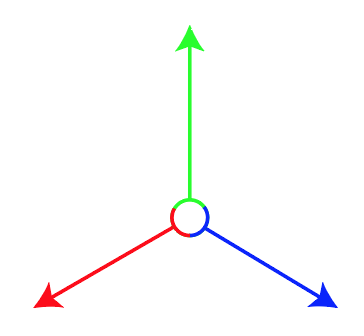} 
\end{tabular}
\end{center}
  \caption{[Examples of subspaces of $\mathscr{L}$] (a) A square lattice  $\Lambda\langle \beta, i \rangle$.  The red and blue arrows indicate two directions. Stretching along them represents two different families of lattices. They form a subspace of $\mathscr{L}$ shown in (b), and it is homeomorphic to $\mathbb{R}$ as in (c).  The second row (d) shows a lattice $\Lambda\langle \beta, e^{i\pi/3} \rangle$. Stretching along the three marked directions generates three distinct families of lattices. (e) is the subspace they form in $\mathscr{L}$, which is homeomorphic to (f). }\label{Hexagonal}
\end{figure}

To give more insights into the topologies of  lattice space $\mathscr{L}$, we present a couple of special types of lattices.  The top row of Figure~\ref{Hexagonal}, (a) illustrates the set of rectangular lattices $\Lambda \langle \beta,i \rangle$ with the red and blue lines.  It is homeomorphic to the real axis $\mathbb{R}$ as shown in (b) and (c).  The midpoint of which represents the square lattice $\Lambda \langle \beta,i \rangle$. A point $r>0$ on the positive side (the red half in (b)) is a lattice of the form $\Lambda \langle \beta,i+r\rangle $, and a point $r<0$ on the negative side (the blue half in (b)) is a lattice of the form $\Lambda \langle \beta e^{i\pi/2},i-r\rangle $. They correspond to stretching a square lattice in two directions in (c), resulting two families of distinct lattices, i.e., a bifurcate structure in $\mathscr{L}$. 
Another example is the union of hexagonal lattices and rhombic lattices whose shape descriptors have magnitude greater than $1$.  Take its subset of lattices having scale descriptors equivalent to $\beta$, as the union of the red, green and blue lines in Figure~\ref{Hexagonal} (d).  This is homeomorphic to (e), which consists of three half lines glued together at their endpoints.  They represent three directions along which stretching a hexagonal lattice gives three distinct families of lattices, rendering a trifurcate structure in $\mathscr{L}$.

On $\mathscr{L}$, we now construct a metric structure starting from defining a metric $D$ on $\mathcal{K}/\sim_{1}\times\mathcal{P}/\sim_{2}$.  Given any two descriptor pairs $(\beta,\rho), (\beta',\rho')\in\mathcal{K}\times\mathcal{P}$, we define
\begin{align}
D((\beta,\rho), (\beta',\rho'))=\sqrt{d_{\mathcal{K}}(\beta,\beta')^{2}+ d_{\mathcal{P}}(\rho,\rho')^{2}},\label{Dist}
\end{align} 
where equivalence relations will be incorporated into the definition of $d_{\mathcal{K}}$ and $d_{\mathcal{P}}$ respectively. Let $D_{\mathcal{K}}$ be a simple metric on $\mathcal{K}$,  which separates the length differences and angle differences as:
\begin{align*}
D_{\mathcal{K}}(\beta,\beta')=	\sqrt{w(|\beta|-|\beta'|)^{2}+(1-w)(\cos^{-1}\frac{\text{Re}(\beta\overline{\beta'})}{|\beta||\beta'|})^{2}}.
\end{align*}
Here $w$ is a parameter which adjusts the sensitivity between angle and length.  We use $w=0.05$ through out this paper.  The quotient metric on $\mathcal{K}$ is then defined as: 
\begin{align*}
d_{\mathcal{K}}(\beta,\beta')=\min\{D_{\mathcal{K}}(\beta,\beta'), D_{\mathcal{K}}(-\beta,\beta')\}.
\end{align*}
Let $D_{\mathcal{P}}$ be the well-known Poincar\'{e} metric~\cite{PoinMetric} restricted to $\mathcal{P}$ computed via:
\begin{align*}
D_{\mathcal{P}}(\rho,\rho')=2\ln\frac{|\rho-\rho'|+|\rho-\overline{\rho'}|}{2\sqrt{\text{Im}(\rho)\text{Im}(\rho')}},	
\end{align*}
and the corresponding quotient metric be
\begin{align*}
d_{\mathcal{P}}(\rho,\rho')= \min\{D_{\mathcal{P}}(\rho,\rho'),D_{\mathcal{P}}(\rho-1,\rho'),D_{\mathcal{P}}(\rho+1,\rho')\}.
\end{align*}

To complete the definition of $d_{\mathscr{L}}$, we consider the third equivalence relation $\sim_{3}$.   This corresponds to a particular class of lattices whose minimal bases satsify: $|b_1| =|b_2|$. They have multiple representations in the lattice space $\mathscr{L}$ using the pairs of descriptors $(\beta, \rho)$.  When considering all the path connecting any two points $(\beta,\rho), (\beta',\rho') \in \mathscr{L}$, we must consider the path passing through points in  $E = \{(\beta,\rho) \mid \beta\in\mathcal{K}, |\rho|=1, \rho\in\mathcal{P}\}$ for the third equivalence relation.  There are eight such cases:  
\begin{align}
\begin{split}
D_{1}&:(\beta,\rho)\rightarrow(\beta',e^{i\phi'})\rightarrow(\beta',\rho'),\\
D_{2}&:(\beta,\rho)\rightarrow(e^{i\phi'}\beta',-e^{-i\phi'})
\dashrightarrow(\beta',e^{i\phi'})\rightarrow(\beta',\rho'),\\
D_{3}&:(\beta,\rho)\rightarrow(\beta,e^{i\phi})\rightarrow(\beta',\rho'),\\
D_{4}&:(\beta,\rho)\rightarrow(\beta,e^{i\phi})\rightarrow(\beta',e^{i\phi'})\rightarrow(\beta',\rho'),\\
D_{5}&:(\beta,\rho)\rightarrow(\beta,e^{i\phi})\rightarrow(e^{i\phi'}\beta',-e^{-i\phi'})\dashrightarrow(\beta',e^{i\phi'})\rightarrow(\beta',\rho'),\\
D_{6}&:(\beta,\rho)\rightarrow(\beta,e^{i\phi})\dashrightarrow(e^{i\phi}\beta,-e^{-i\phi})\rightarrow(\beta',\rho'),\\
D_{7}&:(\beta,\rho)\rightarrow(\beta,e^{i\phi})\dashrightarrow(e^{i\phi}\beta,-e^{-i\phi})\rightarrow(\beta',e^{i\phi'})\rightarrow(\beta',\rho'),\\
D_{8}&:(\beta,\rho)\rightarrow(\beta,e^{i\phi})\dashrightarrow(e^{i\phi}\beta,-e^{-i\phi})
\rightarrow(e^{i\phi'}\beta',-e^{-i\phi'})
\dashrightarrow(\beta',e^{i\phi'})\rightarrow(\beta',\rho'),
\end{split}
 \label{E:Di}
\end{align}
here $\rightarrow$ indicates the distance between two nodes using $D$ in~(\ref{Dist}), and  $\dashrightarrow$ represents a path of zero length because of the third equivalence relations $\sim_{3}$. The angles $\phi,\phi'$ lie in $[\pi/3,2\pi/3]$. Notice all the involved points other than $(\beta, \rho)$ and $(\beta', \rho')$ are in $E$.   Figure~\ref{digraph} illustrates these paths as a diagram.  Figure~\ref{latticedemo} illustrates the path in the lattice space $\mathscr{L}$ showing the examples of $D$ in green, $D_3$ from (\ref{E:Di}) in blue, and $D_8$ from (\ref{E:Di}) in red.   

\begin{figure}
\centering
\begin{tikzpicture}[x=1.3cm, y=1cm,
    every edge/.style={
        draw,
        postaction={decorate,
                    decoration={markings,mark=at position 0.4 with {\arrow{>}}}  }        } ]
	\vertex (c6) at (3,-1.5) [label=right:{$(e^{i\phi'}\beta',-e^{-i\phi'})$}]{};
	\vertex (c5) at (3,0) [label=right:{$(\beta',e^{i\phi'})$}]{};
	\vertex[fill] (c1) at (-3,1.5) [label=left:{$(\beta,\rho)$}]{};
	\vertex[fill] (c4) at (3,1.5) [label=right:{$(\beta',\rho')$}]{};
	\vertex (c2) at (-3,0) [label=left:{$(\beta,e^{i\phi})$}]{};
	\vertex (c3) at (-3,-1.5) [label=left:{$(e^{i\phi}\beta,-e^{-i\phi})$}]{};
	\path 
		(c1) edge (c5)
		(c1) edge (c6)
		(c2) edge (c4)
		(c2) edge (c5)
		(c2) edge (c6)
		(c3) edge (c4)
		(c3) edge (c5) 
		(c3) edge (c6)
		(c1) edge (c2) 
		(c2) edge [dash dot](c3) 
		(c5) edge (c4) 
		(c6) edge [dash dot](c5);
\end{tikzpicture}
\caption{[Paths through $E$]  This is an illustration of the 8 types of paths, $D_1-D_8$ connecting $(\beta, \rho)$ and $(\beta', \rho')$ via points in $E= \{(\beta,\rho) \mid \beta\in\mathcal{K}, |\rho|=1, \rho\in\mathcal{P}\}$.  Notice all four points other than $(\beta, \rho)$ and $(\beta', \rho')$ are in $E$.  The solid line represents the path length computed by $D$, while the dash line represents the third equivalence relation $\sim_{3}$ (no length added).  }\label{digraph}
\end{figure}
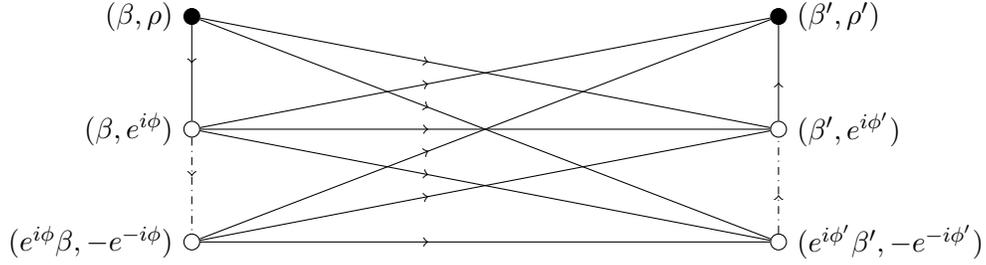 
\begin{figure}
  \centering
\includegraphics[scale=0.35]{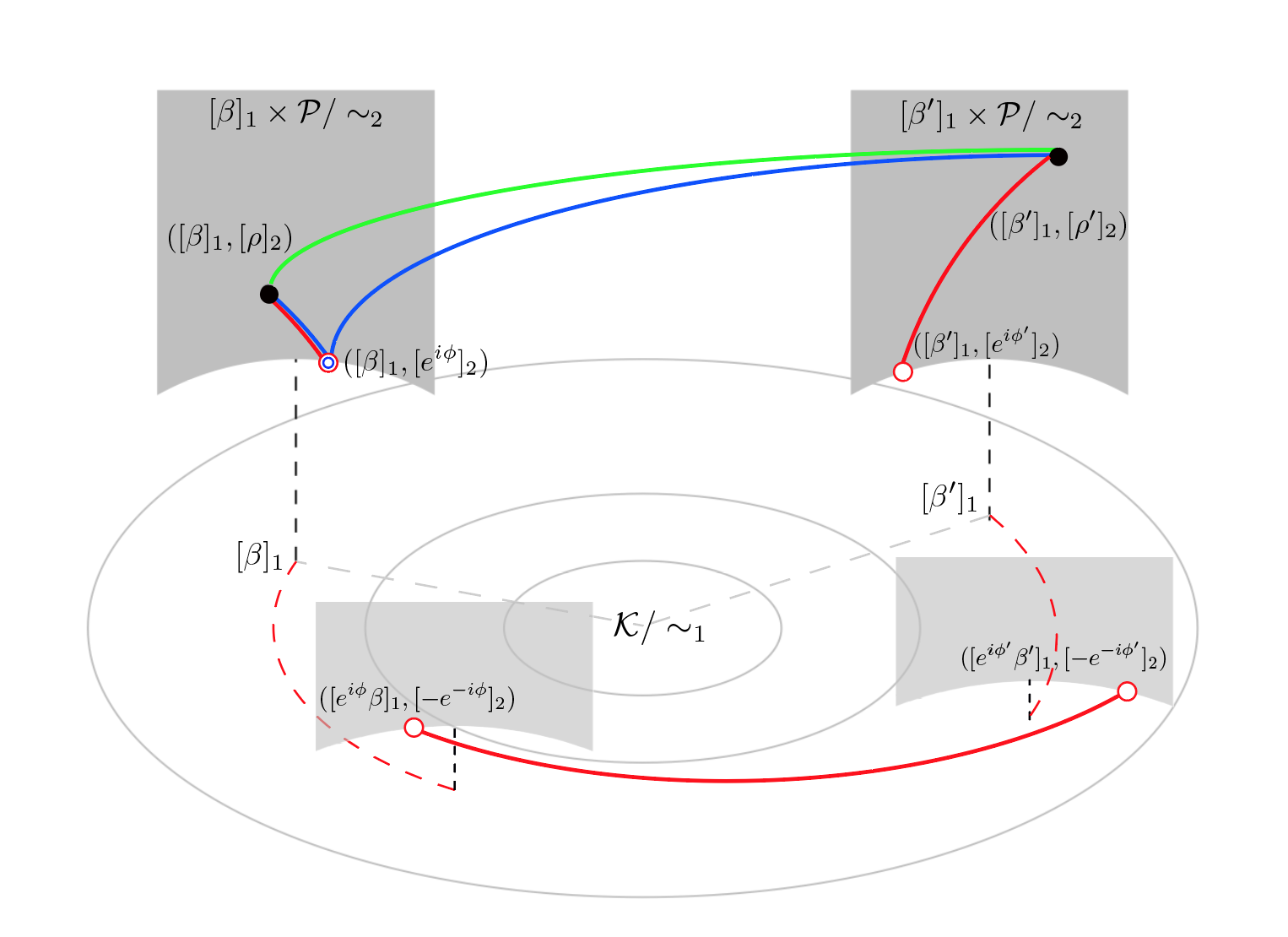}  
\caption{
The lattice space $\mathscr{L}$ is a product space $\mathcal{K}/\sim_{1}\times \mathcal{P}/\sim_{2}$ modulo the points through $E$.  
The distance $d_{\mathscr{L}}((\beta, \rho),(\beta',\rho'))$ is the minimal lengths of the paths. Here the green line shows $D$ in (\ref{Dist}), the blue line is $D_3$, and the red is $D_8$ in (\ref{E:Di}).  Dash lines represent the third equivalence relations (no length).}\label{latticedemo}
\end{figure}

Combining all these distance, \textbf{ the metric $d_{\mathscr{L}}$} between any  two lattices $\Lambda(\beta,\rho)$ and $\Lambda(\beta',\rho')\in\mathscr{L}$ is defined as 
\begin{align}
	d_{\mathscr{L}}((\beta,\rho),(\beta',\rho'))=\min\{D,\min_{\phi,\phi'\in[\pi/3,2\pi/3]}D_{j}(\phi,\phi'), j=1,\dots,8\},\label{dLdef}
\end{align}
here $D$ is from (\ref{Dist}), and  $\{D_{j}(\phi,\phi')\}_{j=1}^{8}$  are from (\ref{E:Di}).   For completeness, we present a pseudo-code for computing $d_{\mathscr{L}}$ in Appendix \ref{A:code}.
This metic $d_{\mathscr{L}}$ is the minimum among all the paths in $\mathcal{K}/\sim_{1}\times \mathcal{P}/\sim_{2}$ connecting any two points $(\beta,\rho)$ and $(\beta',\rho')$, thus $d_{\mathscr{L}}$ is a pseudometric on $\mathscr{L}$. 
In fact, $d_{\mathscr{L}}((\beta,\rho),(\beta',\rho'))=0$ if and only if $[\beta,\rho]=[\beta',\rho']$, hence $d_{\mathscr{L}}$ becomes a metric on $\mathscr{L}$.

\begin{Rem}
Notice that $d_{\mathscr{L}}$ is invariant under translation. It takes inputs from the lattice space $\mathscr{L}$, where only translational lattices are concerned. The visual difference between a lattice and its translated copy can be regarded as a consequence of the boundedness of the image domain, thus it is not intrinsic to the patterns.
\end{Rem}

\subsection{Visual validation of the lattice space $\mathscr{L}$ and metirc $d_{\mathscr{L}}$}

For the purpose of comparison, one may define the following 4-tuple from the classical definition \ref{latdef} of lattice.  For any minimal lattice basis $(b_{1},b_{2})$, 
\begin{align}
(|b_{1}|,|b_{2}|,\theta,\psi):=(|b_{1}|,|b_{2}|,\text{Arg}\,b_{1},\cos^{-1}(\frac{\text{Re}(b_{1}\overline{b}_{2})}{|b_{1}||b_{2}|})),\label{vecpar}
\end{align}
where $\theta$ taking values from $(-\pi/2,\pi/2]$ is the angle of $b_{1}$ to the positive real axis, and $\psi\in(0,\pi]$ is the angle between $b_{1}$ and $b_{2}$. The differences in these parameters also reflect the visual differences in the lattice patterns. 

Figure~\ref{metriccomp} and its table show effects of using the setting of $(b_1,b_2)$.  Comparing a pair of very similar lattices in  (a) $\Lambda_A=\Lambda(12,12.5,10^{\circ},90^{\circ})$, and (b)  $\Lambda_B=\Lambda(12,12.5,-80^{\circ},90^{\circ})$ expressed using the 4-tuples in (\ref{vecpar}),  (\ref{vecpar}) gives a large relative difference for $\theta$, 900\%, while $d_{\mathscr{L}}=0.0816$ gives a small value.   When $|b_{1}|\approx|b_{2}|$, minor numerical errors trigger large relative errors in $\theta$-component due to the equivalence relations. 
The lattices (a) $\Lambda_A$, (c)  $\Lambda_C$, and   (d)  $\Lambda_D$, are more distinguishable, yet the differences in the second and third rows of the table  fail to reflect this.  $d_{\mathscr{L}}$ is more stable and consistent in representing the similarity and differences. 
\begin{figure}
\begin{center}
\begin{tabular}{cccc}
(a) $\Lambda_A$ & (b) $\Lambda_B$ & (c) $\Lambda_C$ & (d) $\Lambda_D$ \\
\includegraphics[width=1.3in]{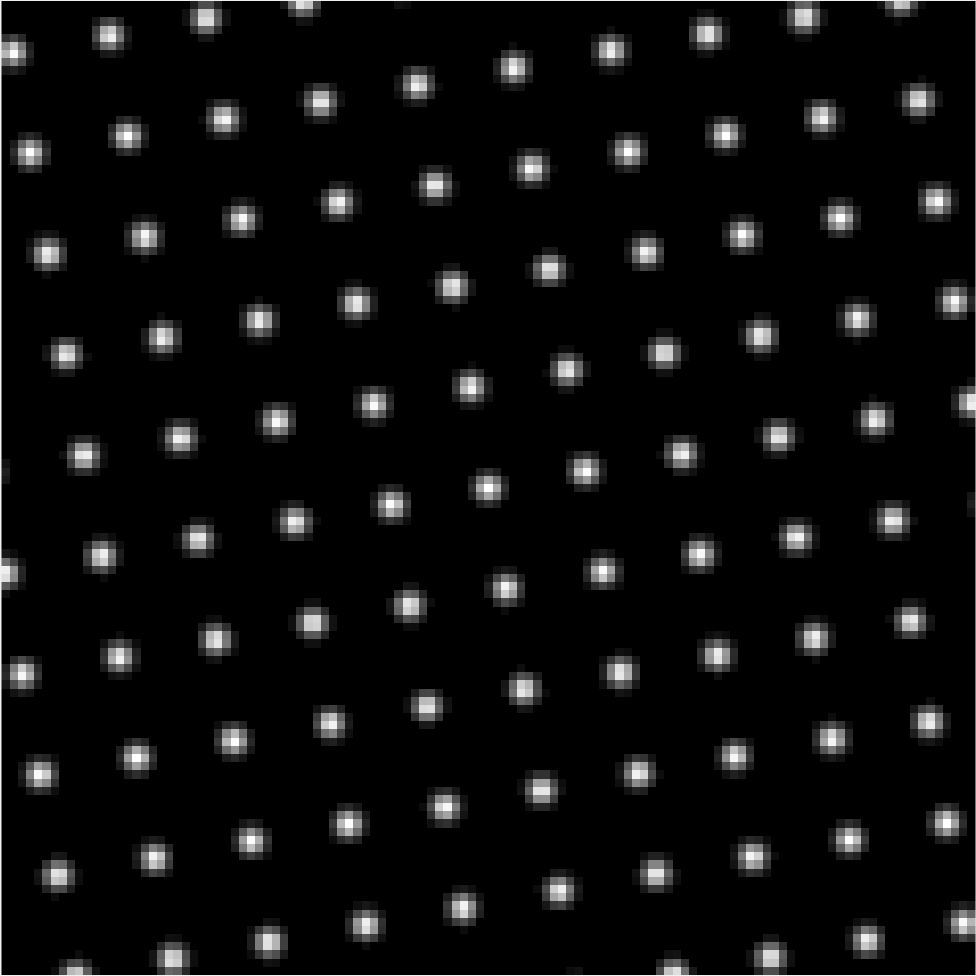} &
    \includegraphics[width=1.3in]{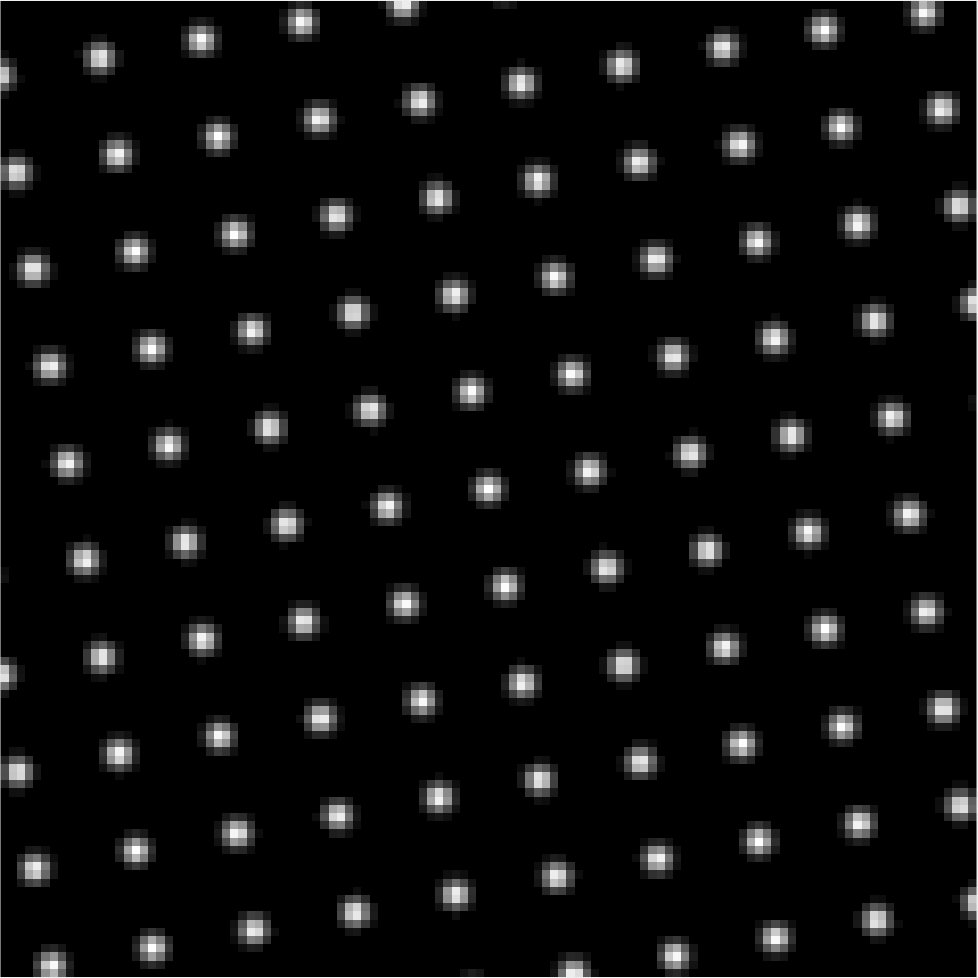} &
    \includegraphics[width=1.3in]{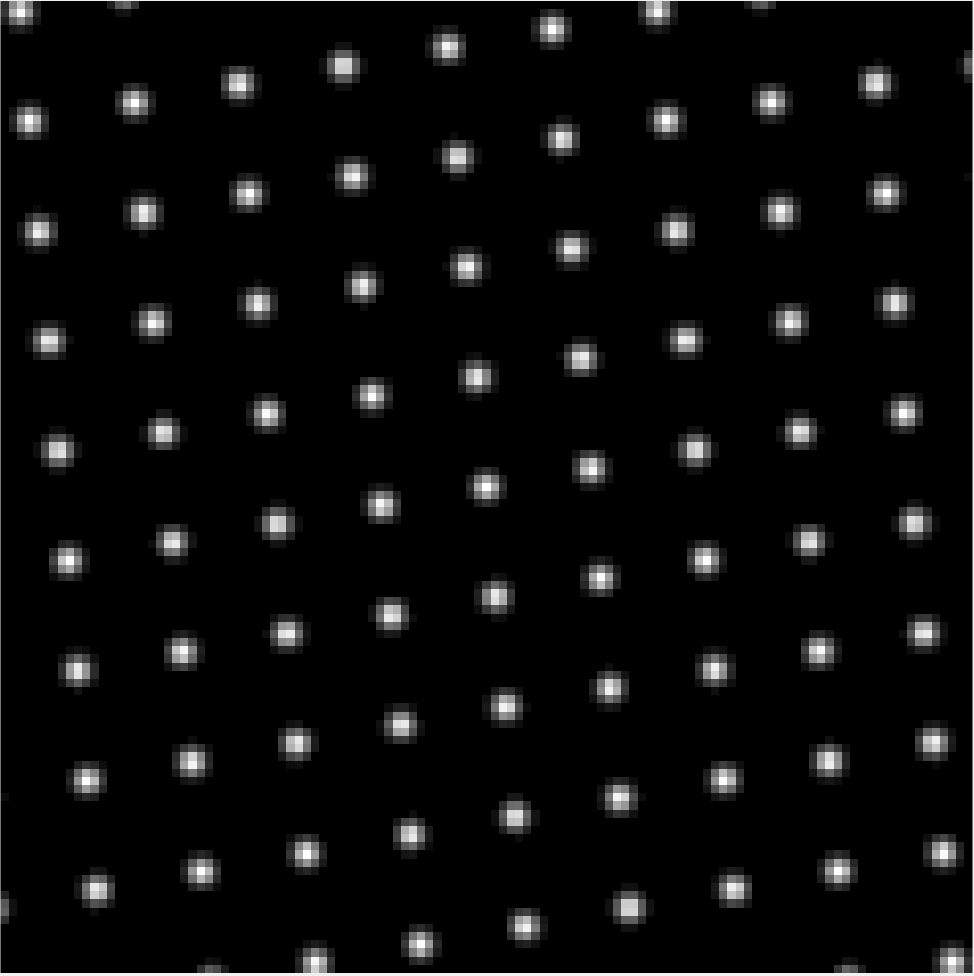} &
    \includegraphics[width=1.3in]{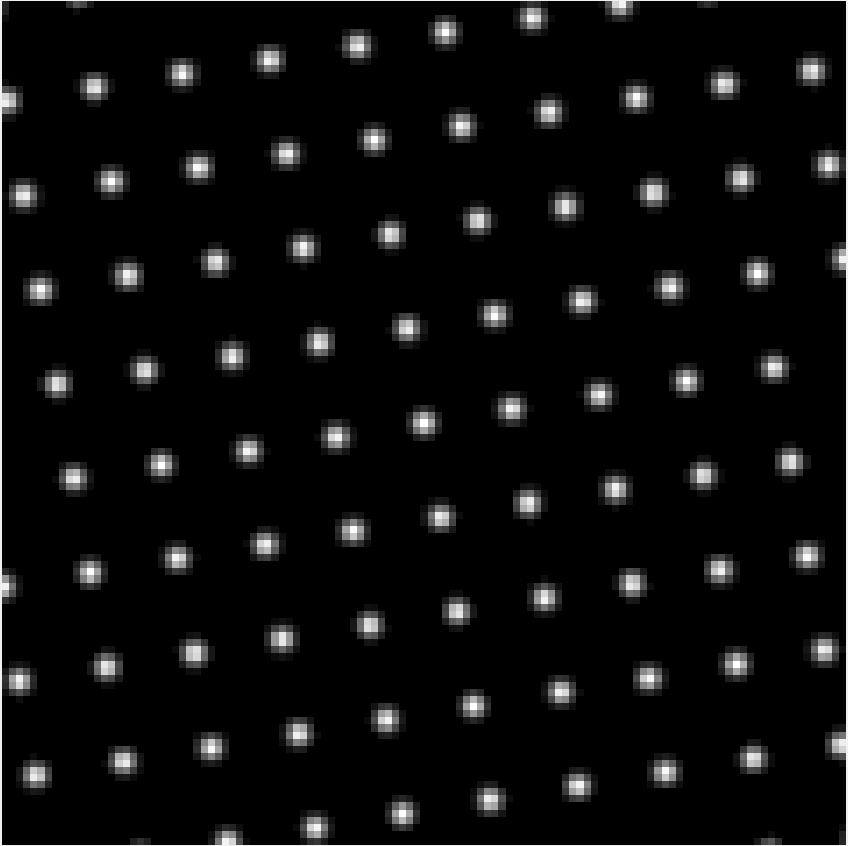} \\
    \end{tabular}
\begin{tabular}{|c||c|c|c|c||c|} \hline 
   Lattice pair  &\multicolumn{4}{|c||}{4-tuple measure system (\ref{vecpar})}& $d_{\mathscr{L}}$ \\
      &$||b_{1}'|-|b_{1}||/|b_{1}|$ & $||b_{2}'|-|b_{2}||/|b_{2}|$ &  $|\theta'-\theta|/|\theta|$ & $|\psi'-\psi|/|\psi|$&
       \\\hline
      $\Lambda_{A},\Lambda_{B}$&$0\%$ & $0\% $&  $900\%$ & $0\%$ &$    0.0816$ \\\hline
	  $\Lambda_{A},\Lambda_{C}$&$8.3333\%$ & $8\% $&  $0\%$ & $5.5556\%$ &$ 0.2401$ \\\hline
	  $\Lambda_{A},\Lambda_{D}$&$4.1667\%$ & $8\% $&  $10\%$ & $1.1111\%$ &$ 0.1200$ \\\hline
    \end{tabular}  
    \end{center}
      \caption{[Metric Comparison] Lattice (a) $\Lambda_A=\Lambda(11.8177 + 2.0838i,-2.1706 +12.3101i)$  and (b) $\Lambda_B=\Lambda(2.0838 -11.8177i,12.3101 + 2.1706i)$ are visually similar.  4-tuple measure shows instability in the values, while $d_{\mathscr{L}}$ give a small value.  The lattices (a), (c) $\Lambda_C=\Lambda(-1.1766 +13.4486i,-2.0838 +11.8177i)$ and (d) are more distinguishable.  While 4-tuple measure  doesn't reflect this consistently comparing the second and third row of the Table, $d_{\mathscr{L}}$ is more stable and consistent in representing the similarity and differences.}\label{metriccomp}
\end{figure}

Figure~\ref{dL} presents more examples of lattice patterns and their pairwise distances.  There are five different lattices patterns shown in (a)-(e).  Comparing lattices $\Lambda_A$ to $\Lambda_B$ or $\Lambda_C$, visually lattice  $\Lambda_C$ seems more different from $\Lambda_A$ than $\Lambda_B$. The corresponding distances $d_{\mathscr{L}}(\Lambda_A, \Lambda_C) = 0.7083>d_{\mathscr{L}}(\Lambda_A, \Lambda_B) = 0.5493$ are consistent with this observation.  Among the lattices, visually $\Lambda_A$ and $\Lambda_D$ seems the most similar and $d_{\mathscr{L}}(\Lambda_A, \Lambda_D) = 0.0203$ is the smallest.  The differences between lattice  $\Lambda_B$ and  $\Lambda_C$, and the differences between lattice  $\Lambda_D$ and  $\Lambda_E$, seems similar, and this is well represented by the distance $d_{\mathscr{L}}(\Lambda_B, \Lambda_C) =  0.4472$ and $d_{\mathscr{L}}(\Lambda_D, \Lambda_E) =0.4472$ being close.  Also the differences between $\Lambda_B$ and  $\Lambda_D$, and the differences between lattice  $\Lambda_C$ and  $\Lambda_E$, are also similar $d_{\mathscr{L}}(\Lambda_B, \Lambda_D) =  0.5293$ and $d_{\mathscr{L}}(\Lambda_C, \Lambda_E) =0.5293$. 
 
\begin{figure} 
\begin{center}
\begin{tabular}{ccccc}
(a) &  (b) & (c) & (d)  & (e)\\
 \includegraphics[width=1.2in]{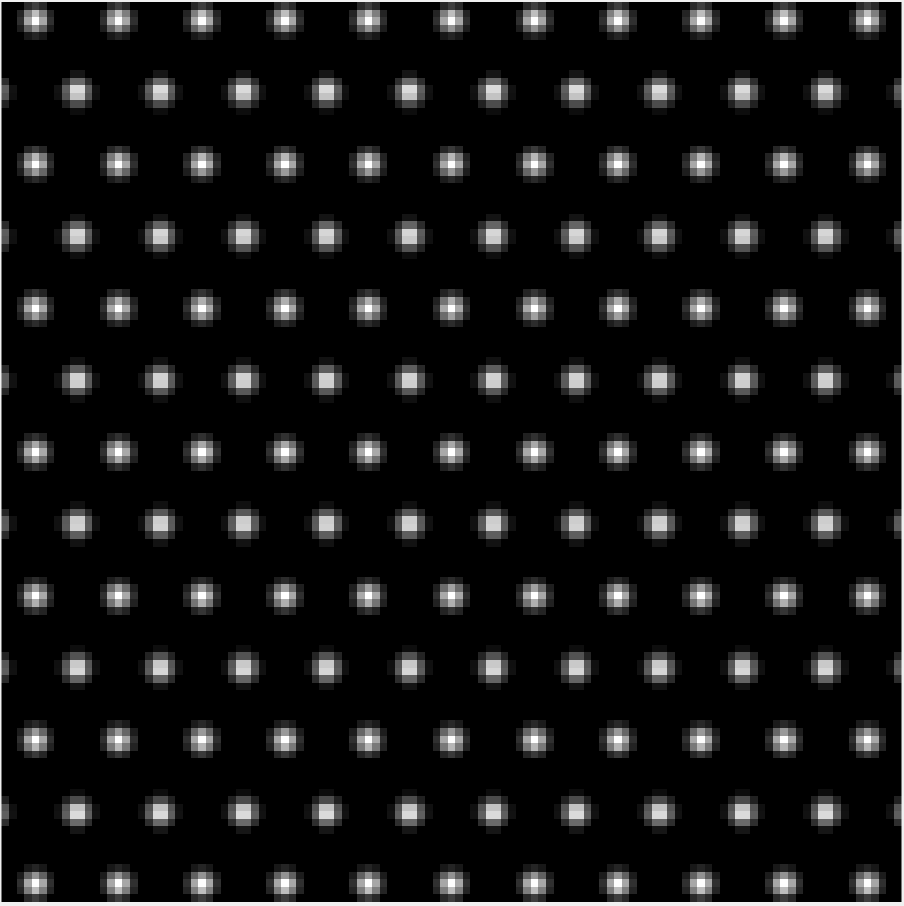}  & 
 \includegraphics[width=1.2in]{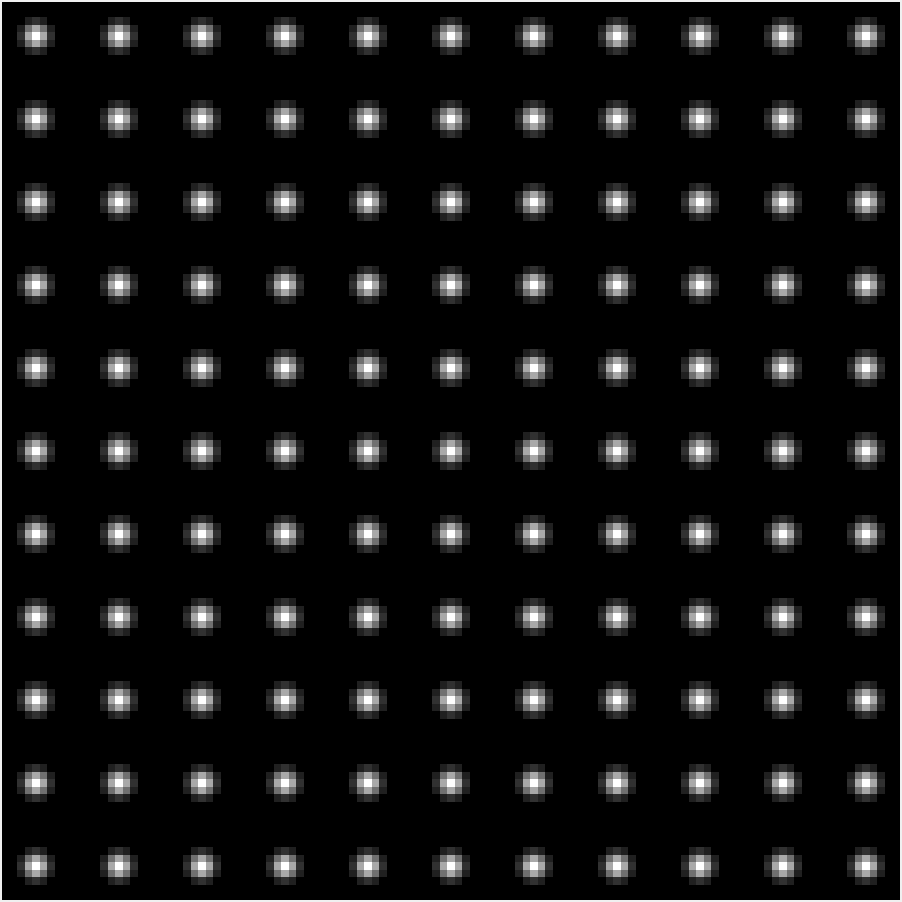} & 
 \includegraphics[width=1.2in]{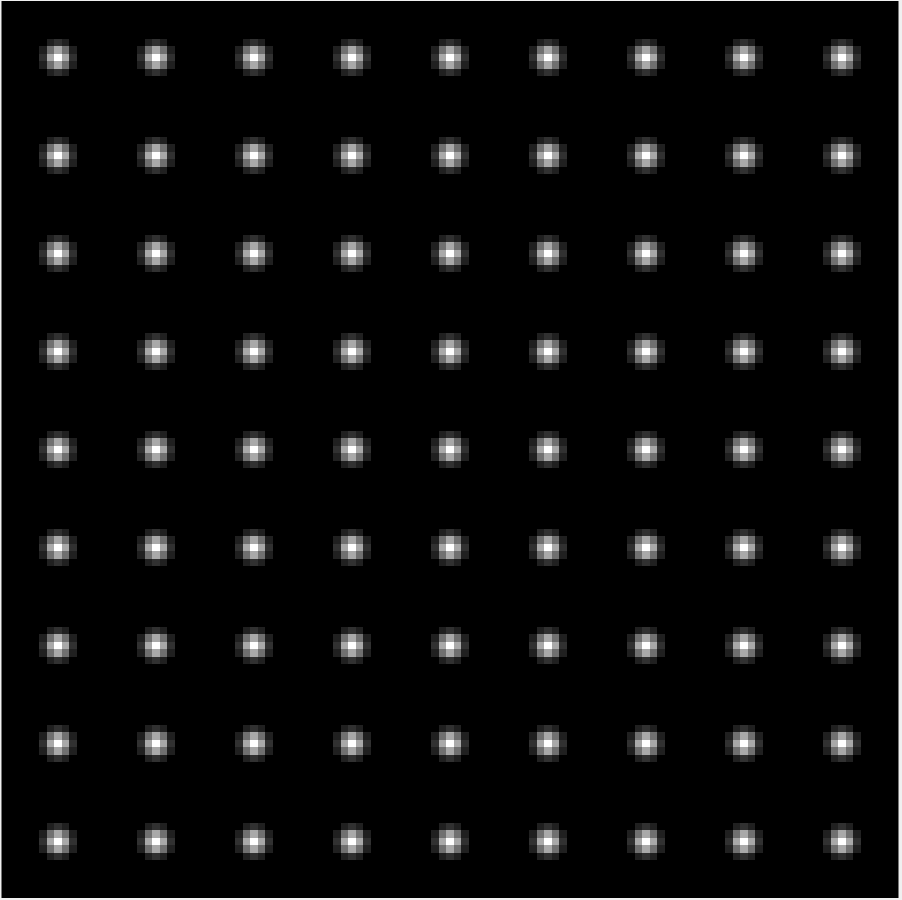} &
 \includegraphics[width=1.2in]{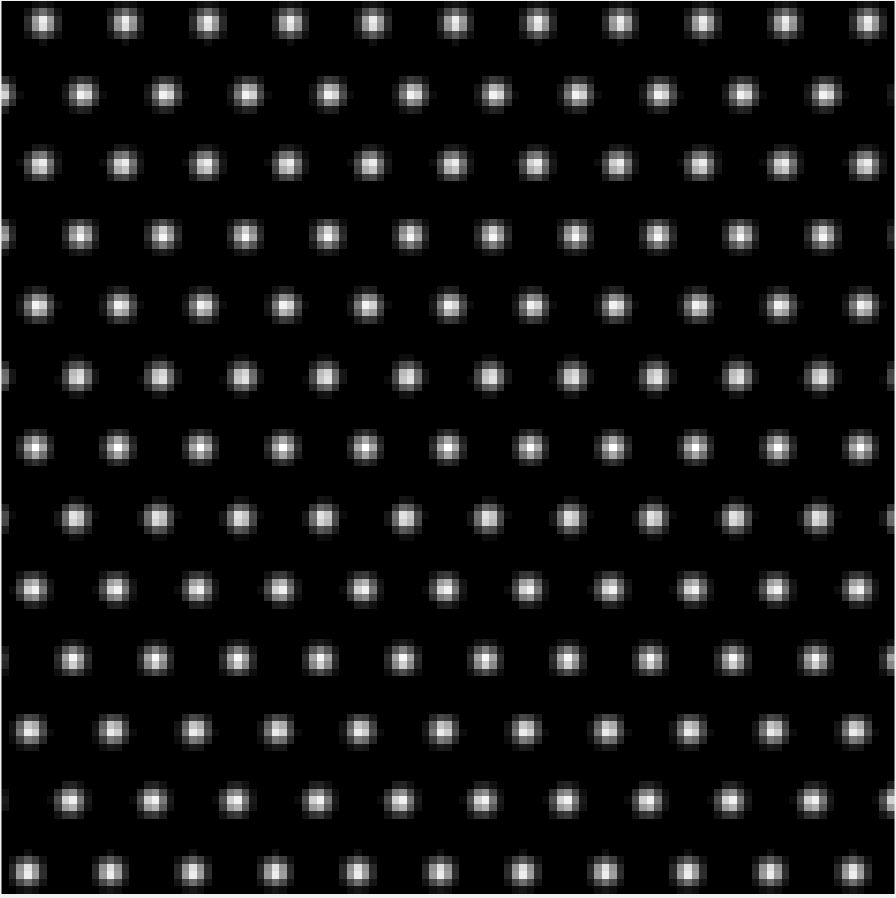}  & 
  \includegraphics[width=1.2in]{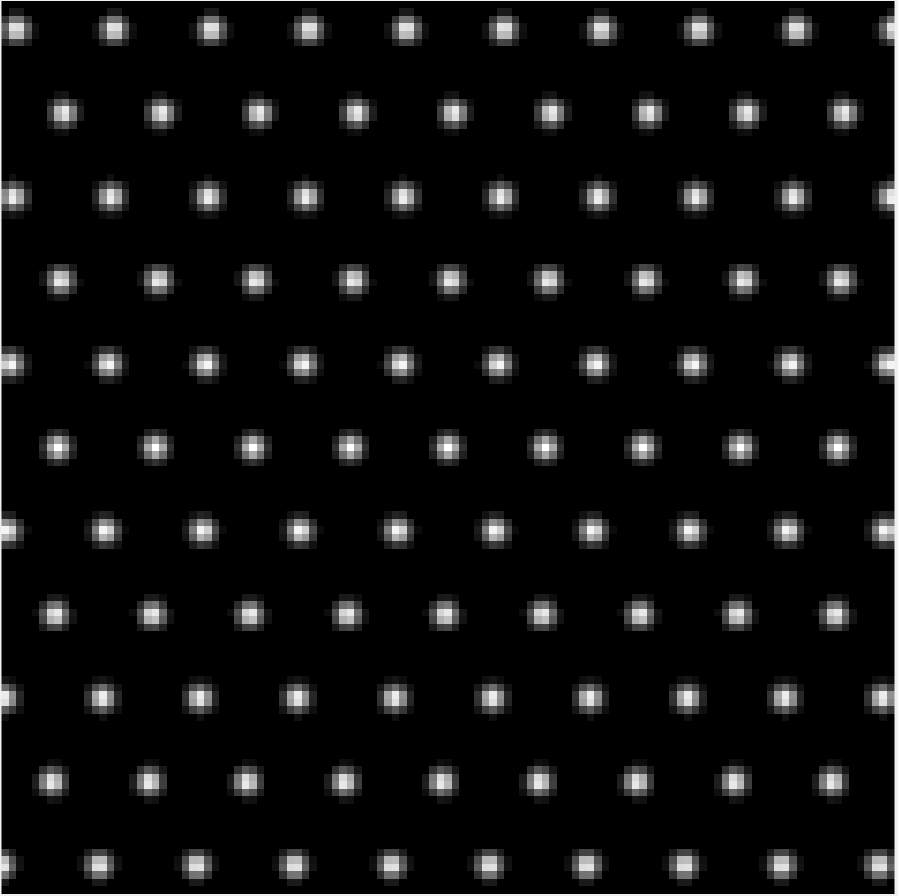} \\
\end{tabular}
\end{center}
  \caption{[Visual effects of $d_{\mathscr{L}}$] Five different lattices: (a) $\Lambda_{A}=\Lambda\langle 11,e^{i\pi/3}\rangle$ (b) $\Lambda_{B}=\Lambda\langle 11,e^{i\pi/2}\rangle$ (c) $\Lambda_{C}=\Lambda\langle 13,e^{i\pi/2}\rangle$ (d) $\Lambda_{D}=\Lambda\langle 11,e^{i61\pi/180}\rangle$ (e)  $\Lambda_{E}=\Lambda\langle 13,e^{i61\pi/180}\rangle$.  Pairwise distances: $d_{\mathscr{L}}(\Lambda_{A},\Lambda_{B})=0.5493$, $d_{\mathscr{L}}(\Lambda_{A},\Lambda_{C})=0.7083$,$d_{\mathscr{L}}(\Lambda_{A},\Lambda_{D})=0.0203$, $d_{\mathscr{L}}(\Lambda_{A},\Lambda_{E})=0.4477$, $d_{\mathscr{L}}(\Lambda_{B},\Lambda_{C})=0.4472$, $d_{\mathscr{L}}(\Lambda_{B},\Lambda_{D})=0.5293$, $d_{\mathscr{L}}(\Lambda_{B},\Lambda_{E})=0.6929$,  $d_{\mathscr{L}}(\Lambda_{C},\Lambda_{D})=0.6929$, $d_{\mathscr{L}}(\Lambda_{C},\Lambda_{E})=0.5293$,  $d_{\mathscr{L}}(\Lambda_{D},\Lambda_{E})=0.4472$. These values correspond well with the visual perception of the lattice differences. }
  \label{dL}
\end{figure}

\section{New Lattice Identification and Separation Algorithm (LISA)}\label{sec:LISA}

In this section, we propose an efficient algorithm to separate each lattice pattern from a superlattice in practice.  First, we present the variational formulation for lattice identification and separation.  Then we introduce our algorithm which does not require any prior knowledge of the number lattice mixture. 

\subsection{Variational Model for Lattice Separation}\label{varsetup}
For a given image with mixture of lattices $U:\Omega\subseteq\mathbb{R}^{2}\to[0,1]$ as in (\ref{imagemodel}), we identify the underlying lattice patterns by minimizing the following energy:
\begin{align}
\min_{K\in\mathbb{N}, \Lambda_{j}\in\mathscr{L},\mu_{j}\in\mathbb{C}}\int_{\Omega}|U-\max_{j=1,\cdots,K}\mathcal{T}_{\mu_{j}}\Lambda_{j}|\,dx\,dy+hK,\label{varprob}
\end{align}
where $dx\,dy$ is the Lebesgue measure on $\mathbb{R}^{2}$, and $h>0$ is a  penalty coefficient. To avoid identifying multiple sub-lattices to approximate a single denser lattice, we suppress the number of different lattice pattern while fitting the mixture to the given image.  See more discussion on sub-lattice in Appendix \ref{A:subL}.

For a fixed $K$, this energy is balancing two competing terms. Using $|a-b|=a+b-2\min(a,b)$ for any $a,b\in\mathbb{R}$, the minimization (\ref{varprob}) becomes:
\begin{align}
\min_{\Lambda_{j}\in\mathscr{L},\mu_{j}\in\mathbb{C}}\{\int_{\Omega}U-\min(U,\max_{j=1,\dots,K}\mathcal{T}_{\mu_{j}}\Lambda_{j})\,dx\,dy+\int_{\Omega}\sum_{j=1}^{K}\mathcal{T}_{\mu_{j}}\Lambda_{j}-\min(U,\max_{j=1,\dots,K}\mathcal{T}_{\mu_{j}}\Lambda_{j})\,dx\,dy\}.\label{varprob2}
\end{align}
These integrals are equivalent to counting particles: if $U$ and $\mathcal{T}_{\mu_{j}}\Lambda_{j}$, $j=1,\dots, K$ denote the sets of particles they contain respectively, then we have the following correspondences:  
\begin{align*}
\int_{\Omega}U-\min(U,\max_{j=1,\dots,K}\mathcal{T}_{\mu_{j}}\Lambda_{j})\,dx\,dy&\Longleftrightarrow U-U\bigcap\bigcup_{j=1}^{K}\mathcal{T}_{\mu_{j}}\Lambda_{j},\\
\int_{\Omega}\max_{j=1,\dots,K}\mathcal{T}_{\mu_{j}}\Lambda_{j}-\min(U,\max_{j=1,\dots,K}\mathcal{T}_{\mu_{j}}\Lambda_{j})\,dx\,dy&\Longleftrightarrow \bigcup_{j=1}^{K}\mathcal{T}_{\mu_{j}}\Lambda_{j}-U\bigcap\bigcup_{j=1}^{K}\mathcal{T}_{\mu_{j}}\Lambda_{j}.
\end{align*}
The sets on the right hand sides can be further expressed as
\begin{align*}
\bigcap_{j=1}^{K}\big(U\bigcap(\mathcal{T}_{\mu_{j}}\Lambda_{j})^{c}	\big),~\text{and}~\bigcup_{j=1}^{K}\big(\mathcal{T}_{\mu_{j}}\Lambda_{j}\bigcap U^{c}\big).
\end{align*}
The problem (\ref{varprob2}) is thus equivalent to:
\begin{align}
\min_{\Lambda_{j}\in\mathscr{L},\mu_{j}\in\mathbb{C}}&\{\underbrace{\int_{\Omega}U-\max_{j=1,\dots,K}(\min(U,\mathcal{T}_{\mu_{j}}\Lambda_{j}))\,dx\,dy}_\text{under-fitting}+\underbrace{\int_{\Omega}\max_{j=1,\dots, K}(\mathcal{T}_{\mu_{j}}\Lambda_{j}-\min(U,\mathcal{T}_{\mu_{j}}\Lambda_{j}))\,dx\,dy}_\text{over-fitting}\}.
\label{varprob3}	
\end{align}
The first term in the objective function is to measure the remaining intensities of $U$ after points being extracted by $K$ lattices, i.e., the under-fitting. The second term evaluates the total excessive intensities of these $K$ lattices, i.e., the over-fitting. As $K$ increases, the under-fitting decreases. If we control the over-fitting to be $0$, i.e., every lattice candidate has no extra lattice points, then by adding more layers, (\ref{varprob2}) reaches the minimum. Therefore, we solve~(\ref{varprob}) using a greedy strategy, which leads to LISA.

\subsection{Lattice Identification and Separation Algorithm (LISA)}

\begin{table}
\noindent\rule{\textwidth}{1pt}
\textbf{Lattice Identification and Separation Algorithm (LISA)} \\
\noindent\rule{\textwidth}{0.5pt}
\textbf{Inputs:} \begin{enumerate}\vspace{-0.3cm}
\item $U$: given gray scale image with particles;\vspace{-0.3cm}
\item $J$: a parameter to control the number of lattice candidates;\vspace{-0.3cm}
\item (\textit{Optional}) $K$: number of iteration for the optional step 3.
\end{enumerate}
Let $j=1$. While TRUE:\\
\textbf{Step 1.}  Compute Fourier transform of $U$ on polar coordinate.  Collect the local maximum of spectrum surface of  $U$, $C_{j}=\{x_{1},x_{2},\cdots,x_{M}\}$  within $J$ connected components.\\
\textbf{Step 2.} For every pair $(x_{k},x_{l})\in C_{j}$, $k\neq l$, construct a lattice pattern and compute the translation  $\mu_{k,l}$ to get $\mathcal{T}_{\mu_{k,l}}\Lambda_{(k,l)}$.  Take $\mathcal{T}_{\mu_{j}}\Lambda_{j}=\arg\min_{k,l=1,\dots,M;k\neq l}\,\mathcal{E}(\mathcal{T}_{\mu_{k,l}}\Lambda_{(k,l)})$ as in (\ref{eval}). \\
\textbf{Step 3.} (\textit{Optional}) correction of $\mathcal{T}_{\mu_{j}}\Lambda_{j}$ using K iteration. \\
\textbf{Step 4.} For the identified optimal $\mathcal{T}_{\mu_{j}}\Lambda_{j}$, if $\text{mean}(U-\mathcal{T}_{\mu_{j}}\Lambda_{j})<0.01$, terminate the algorithm. Otherwise, set $U= \mathcal{F}(U-\mathcal{T}_{\mu_{j}}\Lambda_{j})$, $j=j+1$ and repeat. \\
\rule{\textwidth}{1pt}
\caption{Lattice Identification and Separation Algorithm}\label{LISAtabel}
\end{table} 

\begin{figure}
\centering
\includegraphics[width=6in]{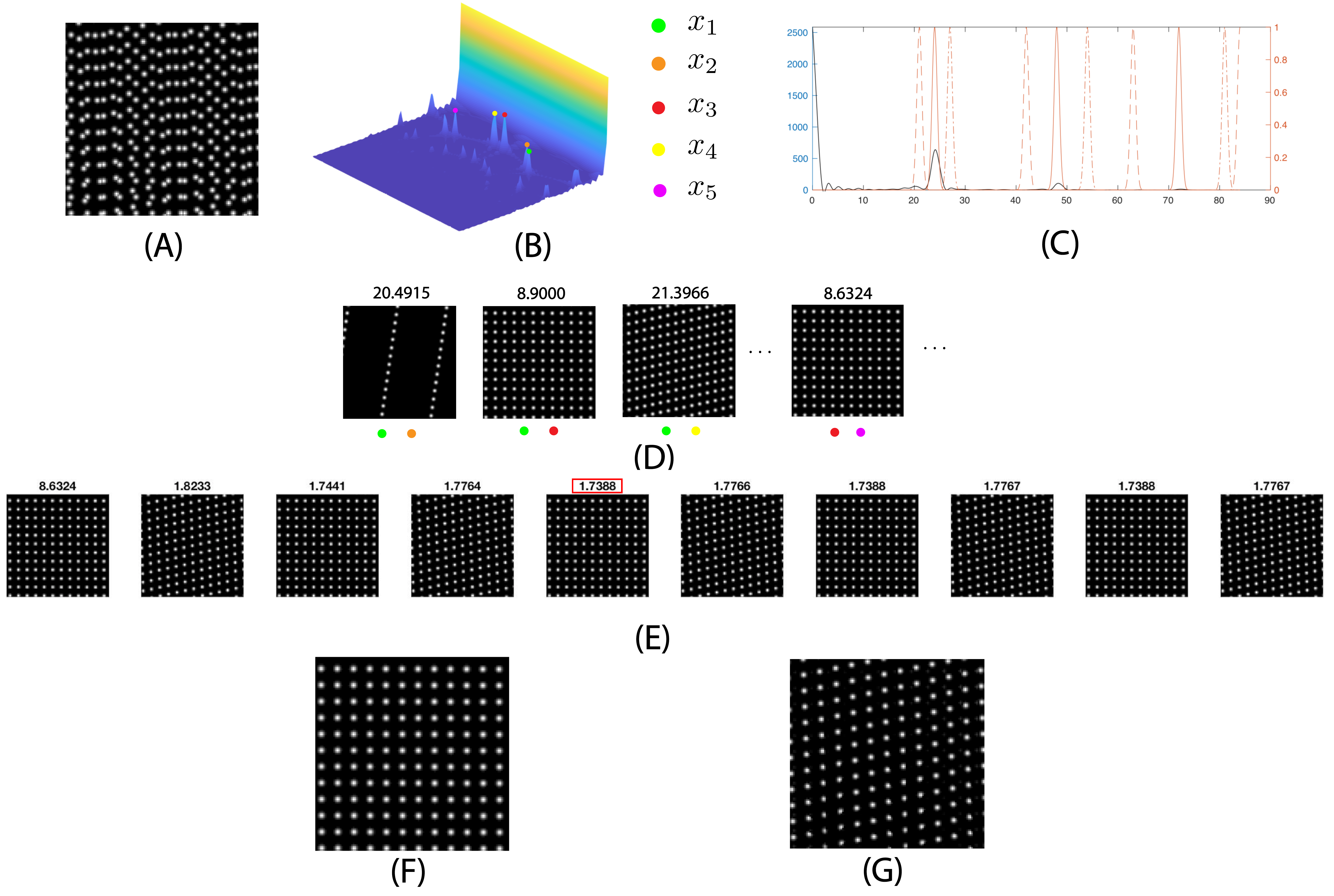}
\caption{[Steps of LISA] (A) An image processed by $\mathcal{F}$ (\ref{Hdef}). \textbf{Step1:} (B) the spectrum surface via polar coordinate, and the high responses  $x_{1},\cdots, x_{5}$ above a threshold $J$. (C) peak location refinement via repeating Gaussian impulses.  
\textbf{Step 2:} (D) generates lattice candidates $\mathcal{T}_{\mu_{k,l}}\Lambda_{(k,l)}$, $k,l=1,2,\cdots, 5, k\neq l$ for each pair of high peaks, and the energy  (\ref{eval}) is computed for each candidate.  Pick $(x_{3},x_{5})$ to be the optimal $\mathcal{T}_{\mu_{1}}\Lambda_{1}$, for it has the lowest energy.  
\textbf{Step 3} (optional correction step): (E) Among $\mathcal{T}_{\mu_{1}}\Lambda_{1}^{(i)}$, $i=1$ to $K$ ($K=10$) $\mathcal{T}_{1}\Lambda_{\mu_{1}}^{(5)}$ has the lowest score, hence replaces $\mathcal{T}_{\mu_{1}}\Lambda_{1}$.  
\textbf{Step 4:} (F) The optimal lattice $\mathcal{T}_{\mu_{1}}\Lambda_{1}$ for this iteration. (G) The remainder image.  The error is above the accuracy criteria $0.0710>0.01$, goes to next iteration.}	
\label{LISA}
\end{figure}

The outline of the algorithm is presented in  Table \ref{LISAtabel} and a demonstration of the workflow is in Figure~\ref{LISA}.  

In real applications,  there are inhomogeneities in the particle sizes, shapes, and intensities, which complicates the identification. After background denoising (e.g., Otsu's method~\cite{Otsu}) if necessary, we process the image by replacing each local maximum in the image with a uniform size Gaussian PSF $G_{\sigma}$. We denote this processing as an operator $\mathcal{F}$:
\begin{align}
	\mathcal{F}(U) = G_{\sigma}*\delta(|\nabla U|).\label{Hdef}
\end{align}
For low or medium resolution, we apply the Gaussian approximation method to calibrate the peak locations as necessary (also see~\cite{algoTEST,Qua,GR}). Let $U(x,y)$ be a discrete local maxima, i.e., $U(x,y)\geq U(x',y')$, for $x'=x\pm 1$ and $y'=y\pm 1$. The approximated coordinate for the real local maxima becomes:
\begin{align*}
\widehat{x} &= x-\frac{\log(U(x+1,y))-\log(U(x-1,y))}{2(\log(U(x+1,y))+\log(U(x-1,y))-2\log(U(x,y)))}, \\	
\widehat{y} &= y-\frac{\log(U(x,y+1))-\log(U(x,y-1))}{2(\log(U(x,y+1))+\log(U(x,y-1))-2\log(U(x,y)))}.
\end{align*}

In Step 1, we compute the Radon transform by a B-spline convolution-based Radon transform  proposed by Horbelt et al.~\cite{splineRadon}. The result is a 1D signal for each projecting angle, upon which we apply the standard 1D FFT. The collection of these 1D spectra form the 2D Fourier transform of the image on the polar coordinate (see Theorem~\ref{FourierSlice}), see Figure~\ref{LISA}(b). For computational efficiency, we focus on peaks with sufficient heights, e.g. $x_1,\dots,x_5$.  This height threshold is picked such that, above it, there are $J$ connected components of the power spectrum, i.e., $J=5$ in Figure~\ref{LISA}(b).  
To achieve sub-pixel precision,  the distance of a peak response to the origin is adjusted by perturbation along the radial direction.  Figure~\ref{LISA}(c) demonstrates this process: consider trains of Gaussian impulses  placed periodically along the radial direction (periodicities perturbed around the peak distance to the origin), and choose the one with the most overlap with the signal to be the adjusted distance.  

In Step 2, each pair of local maxima on the spectrum surface corresponds to a lattice candidate in the image domain. Figure~\ref{LISA}(d) shows 4 examples of such combinations.  Fourier transform of a lattice $\Lambda(b_{1},b_{2})$ in the image domain is a lattice in the frequency domain, called its reciprocal lattice $\Lambda(\omega_{1},\omega_{2})$. The formula transferring the basis  $(\omega_{1},\omega_{2})$ in frequency domain to the basis $(b_{1},b_{2})$ in image domain is:
\begin{align*}
\begin{cases}b_{1}=(\omega_{2}\times[0,0,1]^{T})/(|\omega_{1}\times\omega_{2}\cdot[0,0,1]^{T}|)\\
b_{2}=([0,0,1]^{T}\times\omega_{1})/(|\omega_{1}\times\omega_{2}\cdot[0,0,1]^{T}|)
\end{cases}.	
\end{align*}

The translation for each candidate, Figure~\ref{LISA}(e), is then identified by the maximum of the cross-correlation function between the candidate and the original image. To evaluate the lattice candidates, we use the following energy:
\begin{align}
\mathcal{E}(\mathcal{T}_{\mu}\Lambda):=\underbrace{||\mathcal{F}(U-\mathcal{T}_{\mu}\Lambda)\odot\mathcal{F}(U)||_{2}}_\text{under-fitting}+\gamma\underbrace{|\frac{\#\mathcal{T}_{\mu}\Lambda}{\#\mathcal{F}(\mathcal{T}_{\mu}\Lambda\odot U)+\varepsilon}-1|}_\text{over-fitting},~ \Lambda\in\mathscr{L},\mu\in\mathbb{C}. \label{eval}
\end{align}
The optimal candidate has minimal energy. In~(\ref{eval}), $U$ denotes the original image, $\#\cdot$ represents counting the number of particles, and $\odot$ is element-wise multiplication of matrices. The difference from the subtraction is truncated to non-negative parts. $\gamma>0$ is a penalty coefficient (we set $\gamma =10$), and $\varepsilon>0$ is a small constant to avoid division by $0$ (we set $\varepsilon=1\times 10^{-8}$). 

Note that~(\ref{eval}) is closely related to the energy~(\ref{varprob3}). The first component in~(\ref{eval}) measures the portion of particles not covered by the lattice candidate, i.e., the under-fitting. A smaller value means that more particles in the image lie on the lattice points of $\mathcal{T}_{\mu}\Lambda$. We normalize the remainder $U-\mathcal{T}_{\mu}\Lambda$ to make it comparable with $\mathcal{F}(U)$. The element-wise multiplication with $\mathcal{F}(U)$ prevents new points generated from incomplete particle extraction. The second term in~(\ref{eval}) compares the ratio between the number of lattice points of the candidate, and the slots filled with particles from the image, i.e., the over-fitting. 

Step 3, Figure~\ref{LISA}(e), is similar to a sampling procedure with replacement. This step is optional, yet when the number of underlying lattices is large, it improve the accuracy of identification. As to be shown in subsection~\ref{depLISA}, superposing lattices complicates the power spectrum, hence early identification is affected the most. Incorrect early extraction further yields unstable identification for the remaining lattices. This optional step correct these aspects, and proceeds iteratively. We first set $t=1$, and $\mathcal{T}_{\mu_{j}}\Lambda_{j}^{(1)}=\mathcal{T}_{\mu_{j}}\Lambda_{j}$.  For $t=1,\cdots,K$, from the optimal lattice candidate $\mathcal{T}_{\mu_{j}}\Lambda_{j}^{(t)}$, compute the remainder $\mathcal{F}(U-\mathcal{T}_{\mu_{j}}\Lambda_{j}^{(t)})$ as in (\ref{Hdef}). Then iterate Step 1 and Step 2 on $\mathcal{F}(U-\mathcal{T}_{\mu_{j}}\Lambda_{j}^{(t)})$ to find the next optimal lattice $\mathcal{T}_{\mu_{j}}\Lambda_{j}^{(t+1)}$.  This is the red boxed one in Figure~\ref{LISA}(e). Update $\mathcal{T}_{\mu_{j}}\Lambda_{j}=\arg\min_{t=1,\cdots,K} \mathcal{E}(\mathcal{T}_{\mu_{j}}\Lambda_{j}^{(t)})$.  This optimal one is Figure~\ref{LISA}(f). 

In Step 4, the optimal candidate $\mathcal{T}_{\mu_{j}}\Lambda_{j}$ is subtracted from the original image, and the difference is truncated, only non-negative values remain. Figure~\ref{LISA}(g) shows the remainder.  We compute the average intensity of the residual $U-\mathcal{T}_{\mu_{j}}\Lambda_{j}$. Insufficient intensity terminates the algorithm (the threshold is set to be $0.01$); otherwise, we preprocess the residual using $\mathcal{F}$ replacing the original image and then repeat Step 1--4.

\subsection {Analytical properties of LISA: Superlattice and Spectrum Surface} \label{depLISA}

We describe the close relation between LISA and geometric features of the superlattice. Assuming no remainder term in the image representation~(\ref{imagemodel}), the Fourier transform of a superlattice image is:
 \begin{align}
 \hat{U}(\xi)=\hat{G}_{\sigma}(\xi)\sum_{j=1}^{N}\frac{\Lambda^{*}_{j}(\xi)}{\det\Lambda_{j}}\exp(-i2\pi\xi\cdot\mu_{j})\label{duallattice},~\xi\in\mathbb{R}^{2},
 \end{align}
 where $\xi$ represents the frequency coordinate, $\det\Lambda_{j}$ is the fundamental volume of $\Lambda_{j}=\Lambda\langle\beta_j,\rho_j\rangle$ computed by $\text{Im}(\overline{\beta_{j}}\beta_{j}\rho_{j})$, and $\Lambda_{j}^{*}$ denotes the reciprocal lattice impulse on the frequency domain corresponding to $\Lambda_{j}$, which is expressed in the lattice space by: 
\begin{align*}
 	\Lambda_{j}^{*} =[\hat{\beta}_{j},\hat{\rho}_{j}]:=[\frac{\beta_{j}\exp(-i\pi/2)}{\det\Lambda_{j}},\rho_{j}]\in\mathscr{L}.
\end{align*} 


Formula~(\ref{duallattice}) implies that the Fourier transform of a superlattice image is a mixture of complex lattices modified by three factors: the centered Gaussian $\hat{G}_{\sigma}$, the fundamental volumes $\text{det}\,\Lambda_{j}$, and the translations of the original lattices $\mu_{j}$, $j=1,2,\dots,N$. Without these modulations, every lattice with two basis vectors in the image corresponds to two peaks on the power spectrum. Notice that $\Lambda_{j}^{*}(\xi)=1$ if and only if $1/|\xi|$ is a period of $\Lambda_{j}$ along the direction of $\xi$, $\forall\xi\in\mathbb{R}^{2}$ and $\forall j =1,\cdots,N$. Hence, lattices $\{\Lambda_{j}\}_{j=1}^{N}$ can be identified with correct combinations of the peaks.\par
Gaussian PSF and fundamental volumes of lattices complicate the problem. First, independent of the positions of the superlattice particles, a centered Gaussian $\hat{G}_{\sigma}$ globally dampens the power spectrum. If $|\xi|$ is small, $\hat{G}_{\sigma}(\xi)$ has little influence on the power spectrum, and if $|\xi|$ is large, $\hat{G}_{\sigma}(\xi)$ decreases the value at $\xi$. Second, the radius of particles controls the rate of radial decay of the power spectrum surface. Large frequency components are preserved if the particles of the superlattice have a small radius, as the standard deviation $\sigma$ is small. Third, fundamental volumes of the original lattices affect the power spectrum. The magnitudes of a pair of peaks on the spectrum surface associated with the lattices with smaller fundamental volumes are augmented, and those with larger fundamental volumes are decreased. This coincides with our experience that denser lattices are easier to be recognized. LISA tends to find lattices with smaller particles and smaller fundamental volumes.\par
Relative translations of the lattice layers have a more delicate influence on the power spectrum surface. Translation in spatial domain results in a phase change in the frequency domain, and it has no effect on the power spectrum if there is only one lattice. When multiple lattices are superposed, frequencies along the same direction will interact with each other. Suppose for some $1< m\leq N$, $\Lambda^{*}_{1}(\xi)=\cdots=\Lambda^{*}_{m}(\xi)=1$ and $\Lambda^{*}_{j}(\xi)=0$ for $j=m+1,\cdots,N$, then $\hat{U}(\xi)$ is a sum of $m$ complex numbers, whose magnitude varies based on directions of $\mu_{1},\cdots,\mu_{m}$. An extreme case is that, if $\Lambda_{1}=\Lambda_{2}$, $\mu_{1}=-\mu_{2}\neq 0$, and there exists an $\xi$ such that $\Lambda_{1}^{*}(\xi)=1$ and $\xi\cdot\mu_{1}\neq 0$, then $|\hat{U}(\xi)|=0$. See Figure~\ref{relative} for an example. LISA  detects potential lattices, even though the reciprocal lattices are incomplete and the reciprocal bases are not minimal. If any basis of the reciprocal lattice remains high response in the power spectrum, LISA will consider it as a candidate to be evaluated. 
\begin{figure} 
 \begin{center}
\begin{tabular}{cccc}
(a) & (b)  & (c)  & (d) \\
    \includegraphics[height=1.3in]{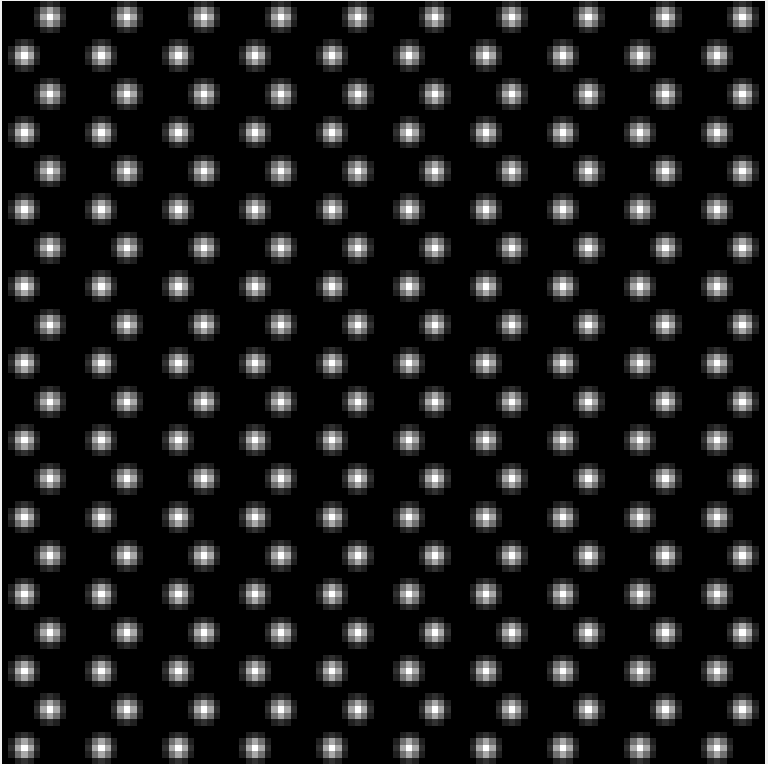} &
    \includegraphics[height=1.3in]{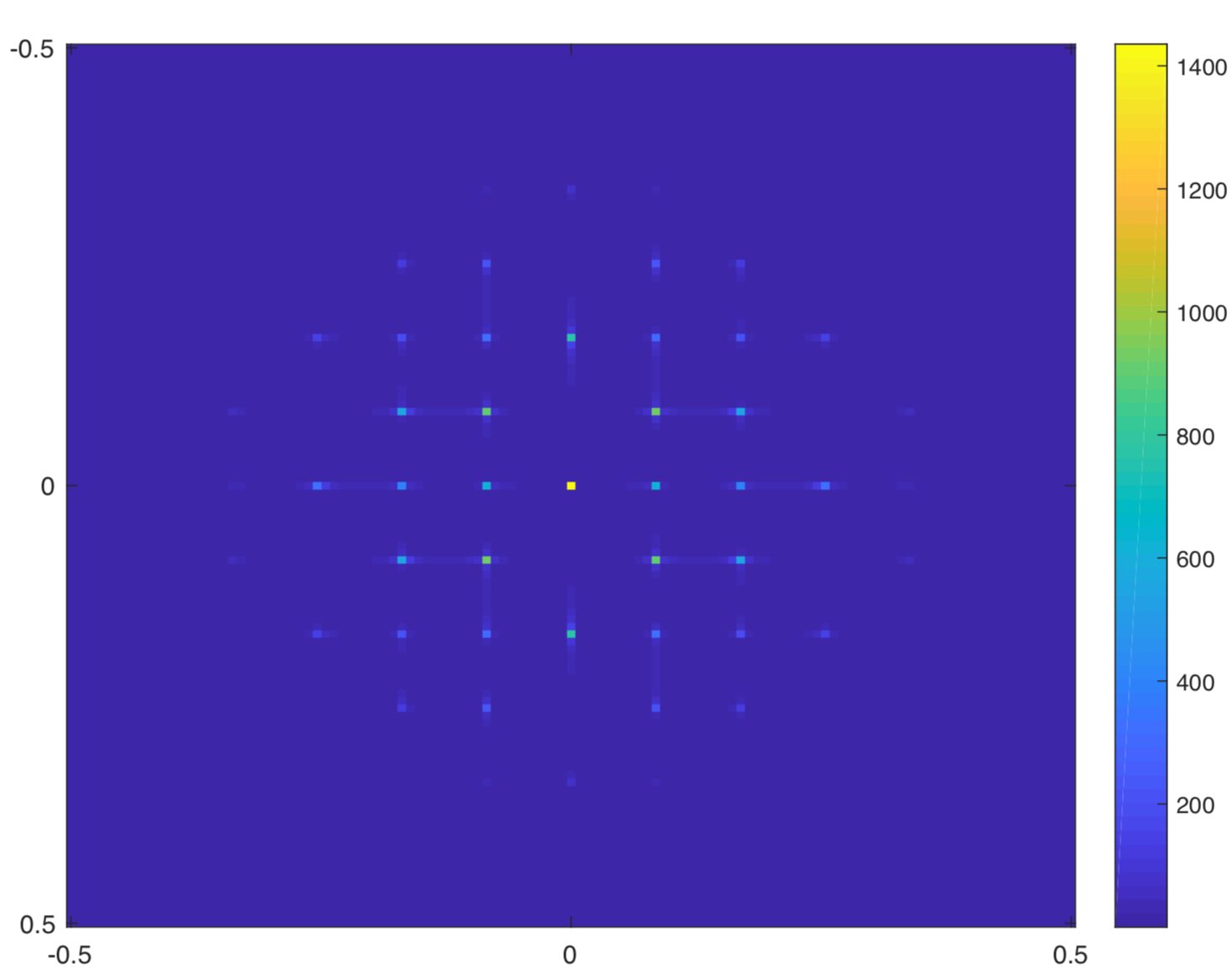} & 
    \includegraphics[height=1.3in]{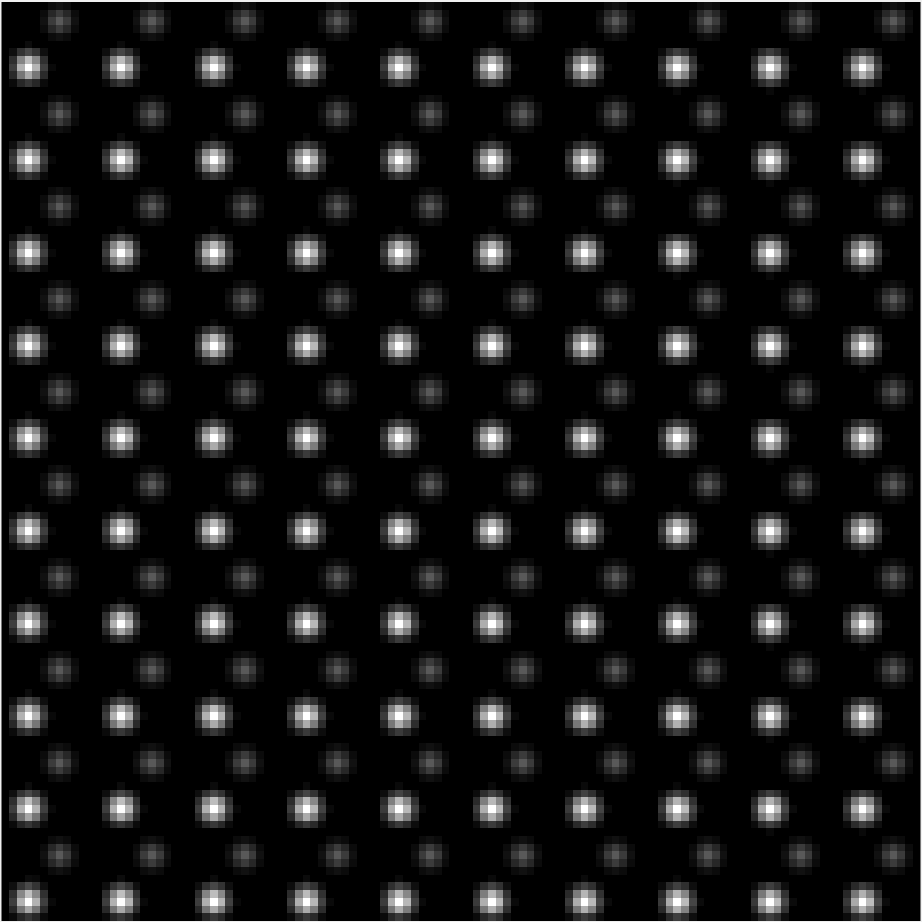} &
    \includegraphics[height=1.3in]{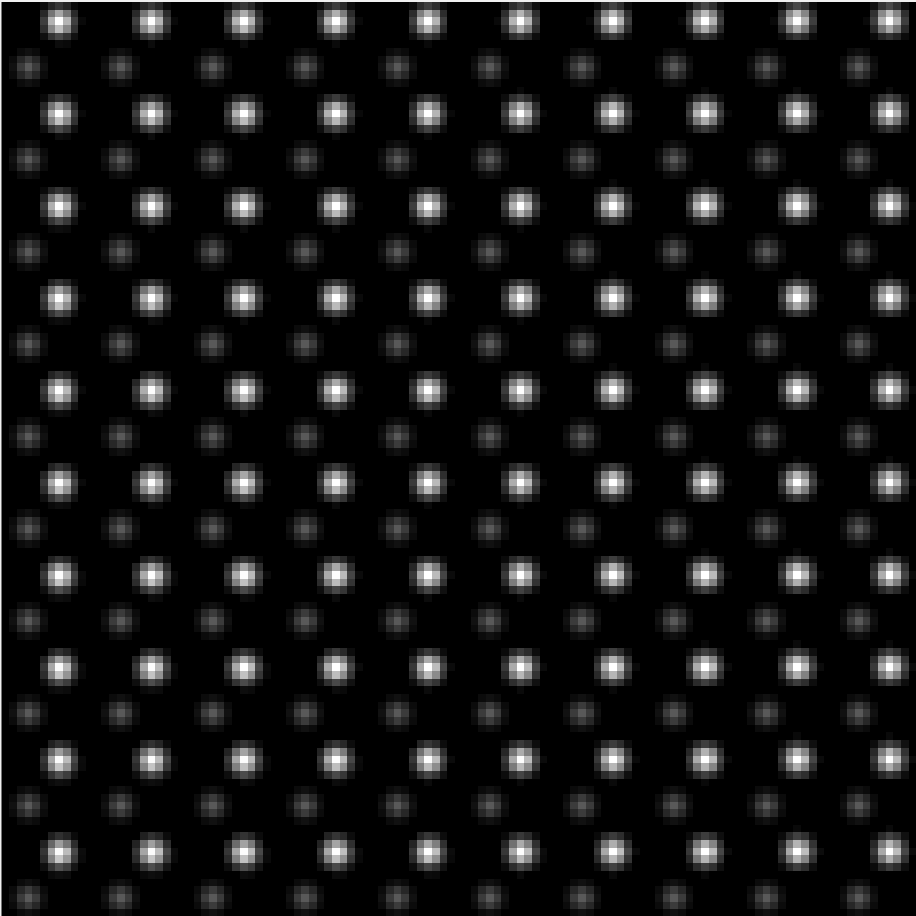} 
 \end{tabular} 
 \end{center}   
  \caption{[Relative translation affects spectrum surface] Two lattices $\mathcal{T}_{4-3i}\Lambda\langle 12,i\rangle$ and $\mathcal{T}_{-4+3i}\Lambda\langle 12,i\rangle$ are superposed in (a). Due to the relative translations, there are missing peaks in the spectrum surface as shown in (b). From this incomplete reciprocal lattice, LISA recovers the lattices in (c) and (d).}\label{relative}
\end{figure}

\subsection{Robustness of LISA against Gaussian Perturbation}

Lattice distorted by Gaussian perturbation can model the atomic configuration in crystal-melt interface~\cite{crystalmelt}.  We modify (\ref{duallattice}) to consider such cases. For simple notations, only one unshifted lattice is assumed, and it is easy to extend to the general formula for multiple lattices with translations. Ignoring the remainders in~(\ref{imagemodel}), the image $\tilde{U}$ of a lattice $\mathcal{T}_{0}\Lambda(b_{1},b_{2})$ with perturbed particles can be expressed as:
\begin{align*}
&\tilde{U}(x,y)=\sum_{k_{1},k_{2}\in\mathbb{Z}}G_{\sigma}*\delta(k_{1}b_{1}+k_{2}b_{2}+\Delta x_{k_{1},k_{2}}+i\Delta y_{k_{1},k_{2}}-x-iy),
\end{align*}
with $(\Delta x_{k_{1},k_{2}},\Delta y_{k_{1},k_{2}})\in\mathbb{R}^{2}$ denoting the perturbation on the particle parameterized by $(k_{1},k_{2})$ in the lattice. The Fourier transform of $\tilde{U}$ is
\begin{align*}
\hat{G}_{\sigma}(\xi)\sum_{k_{1},k_{2}\in\mathbb{Z}}\exp(-2\pi i\phi_{k_{1},k_{2}}(\xi)),
\end{align*}
where $\phi_{k_{1},k_{2}}(\xi)=(\Delta x_{k_{1},k_{2}}+i\Delta y_{k_{1},k_{2}}+k_{1}b_{1}+k_{2}b_{2})\cdot\xi$. We assume that the perturbations are independent and identically distributed Gaussian vectors with  uncorrelated coordinates, that is  $(\Delta x_{k_{1},k_{2}},\Delta y_{k_{1},k_{2}})\sim\mathcal{N}(0,\Sigma)$ where $\Sigma =\begin{bmatrix}
s^{2}&0\\
0& s^{2}	
\end{bmatrix}
$, $s>0$ constant, $\forall (k_{1},k_{2})\in\mathbb{Z}^{2}$. This implies that for any $\xi$ in the frequency domain, 
\begin{align*}
\phi_{k_{1},k_{2}}(\xi)\sim\mathcal{N}((k_{1}b_{1}+k_{2}b_{2})\cdot\xi,s^{2}|\xi|^{2}).
\end{align*}
Some observations are immediate. First, for a single lattice, perturbations only alter the phases. Interactions among multiple lattices can still change the magnitude of the power spectrum as discussed in subsection~\ref{depLISA}. Second, $\mathbb{E}[\phi_{k_{1},k_{2}}(\xi)]$ depends on the angle between $k_{1}b_{1}+k_{2}b_{2}$ and  $\xi$. In particular, perturbations have a stronger effect on non-lattice points than lattice points. If $\xi$ is reciprocal to the lattice point $k_{1}b_{1}+k_{2}b_{2}$, then they are perpendicular, thus the average perturbation is $0$. Finally, with fixed $s$, the standard deviation of $\phi_{k_{1},k_{2}}(\xi)$ only depends on $|\xi|$. When $|\xi|$ is small, or equivalently when we are approximating relatively large periods, the Fourier transform of the perturbed lattice is almost the same as that of the unperturbed one. In Figure~\ref{gaussian}, lattice points are shifted by Gaussian perturbation with various standard deviations, and the low-frequency components maintain high responses. LISA is robust against Gaussian perturbation with bounded standard deviation, and the detection of medium-sized-lattices is effective. When the standard deviation is large, LISA identifies the correct lattices, yet the extraction may need other techniques, e.g., the nearest particles to the lattice candidate are identified.
\begin{figure}
 \begin{center}
\begin{tabular}{ccc}
(a) & (b)  & (c) \\
    \includegraphics[scale=0.21]{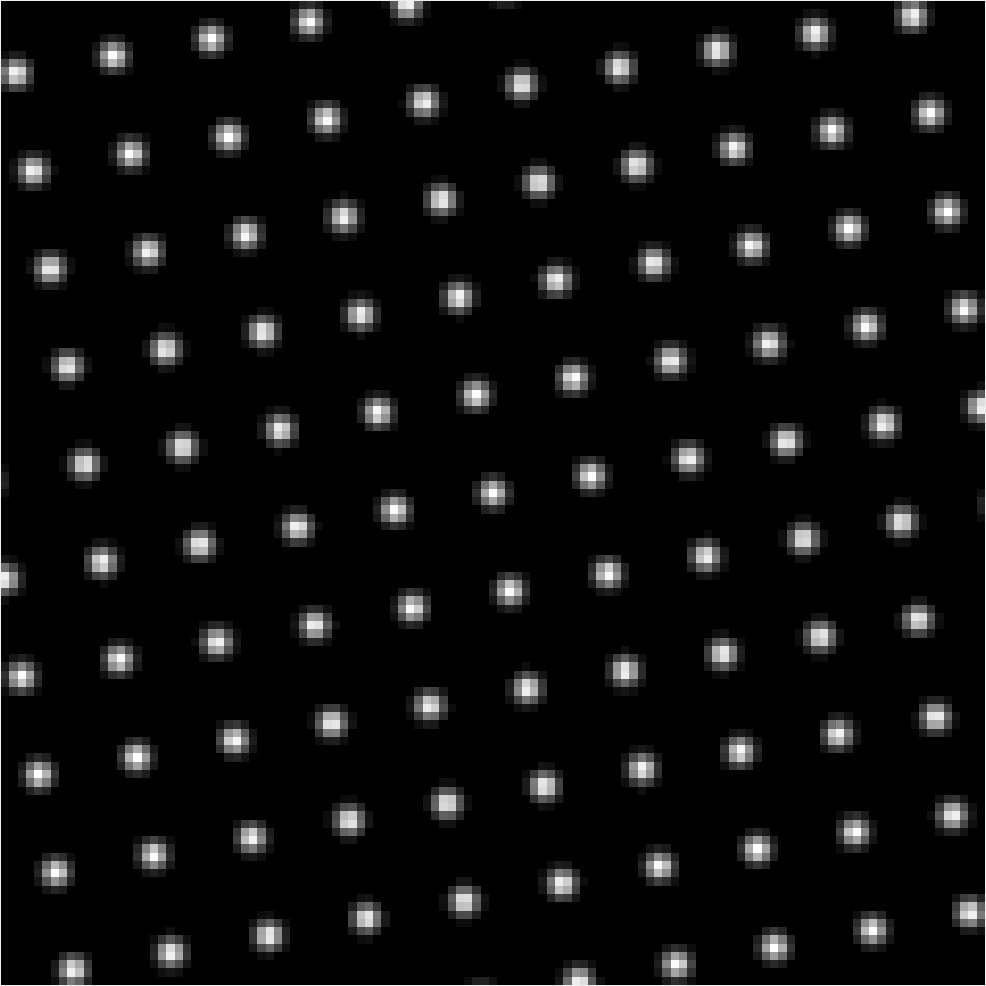} &
    \includegraphics[scale=0.2]{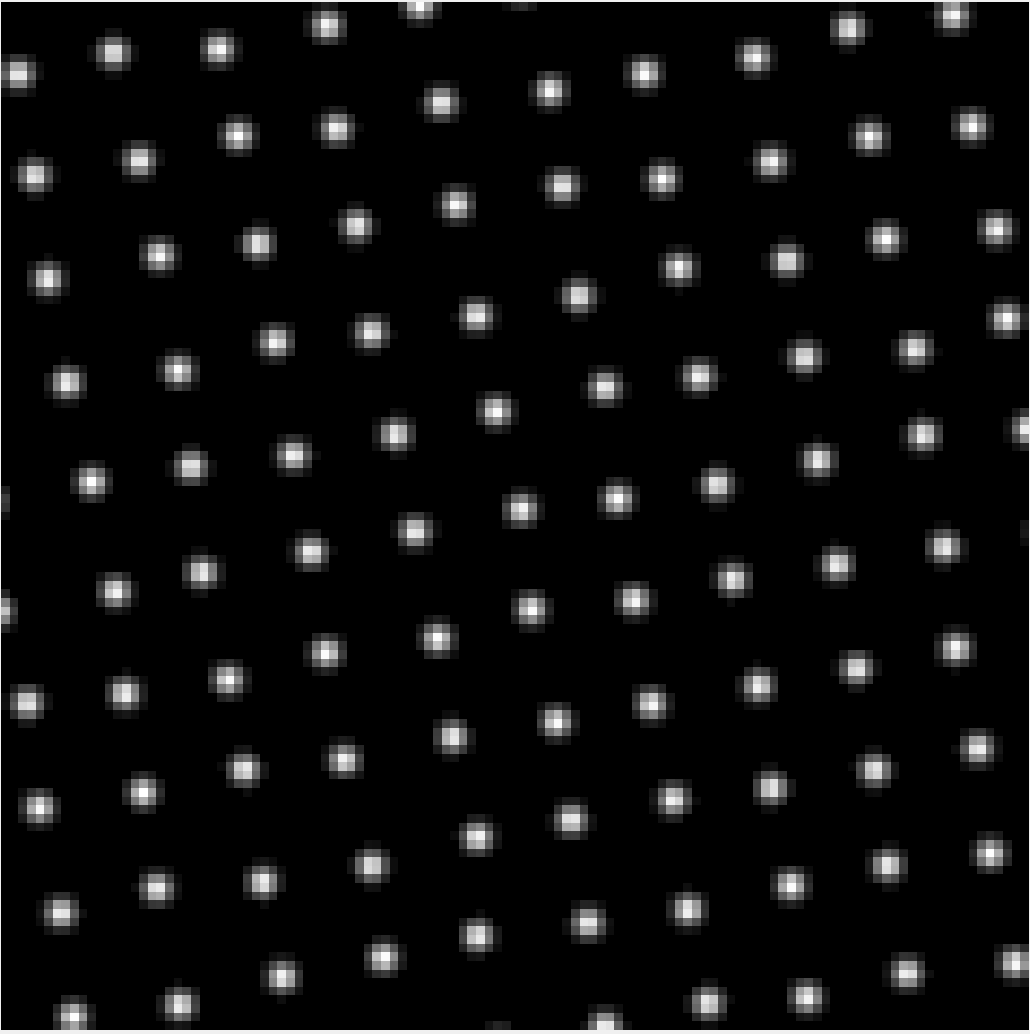} &
    \includegraphics[scale=0.21]{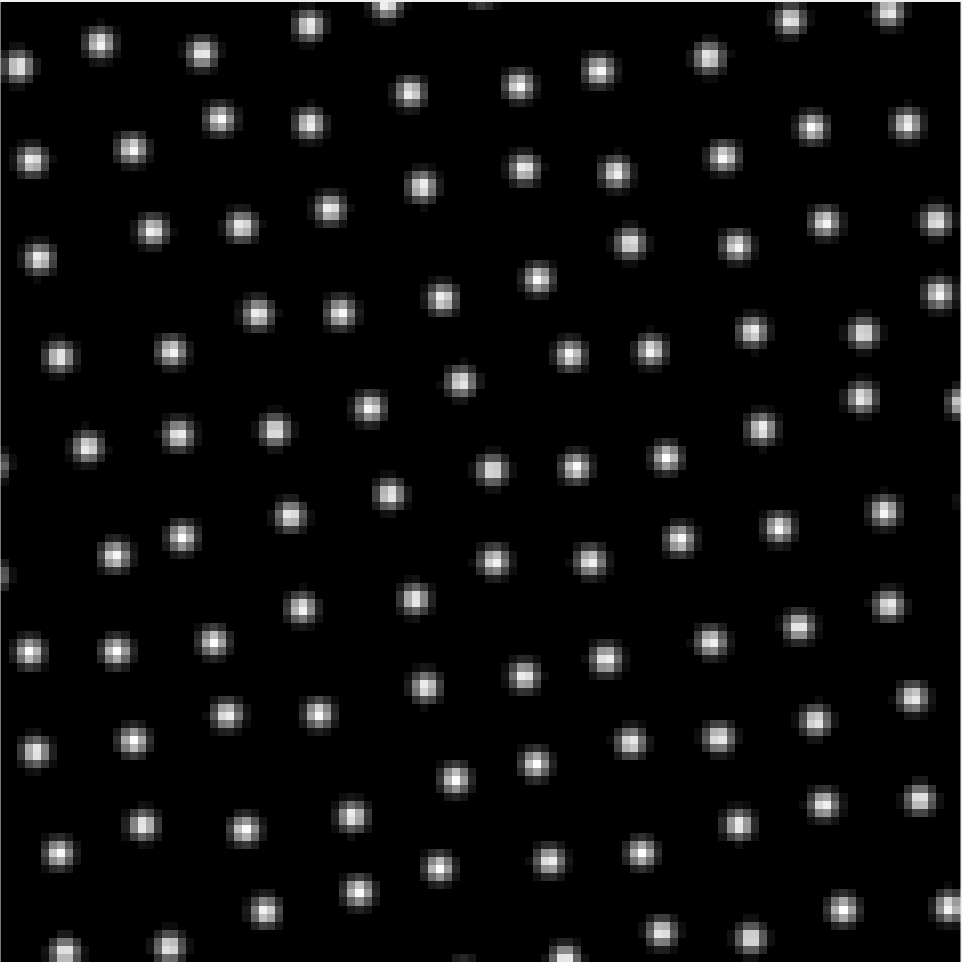} \\
(d) & (e) & (f) \\
    \includegraphics[scale=0.16]{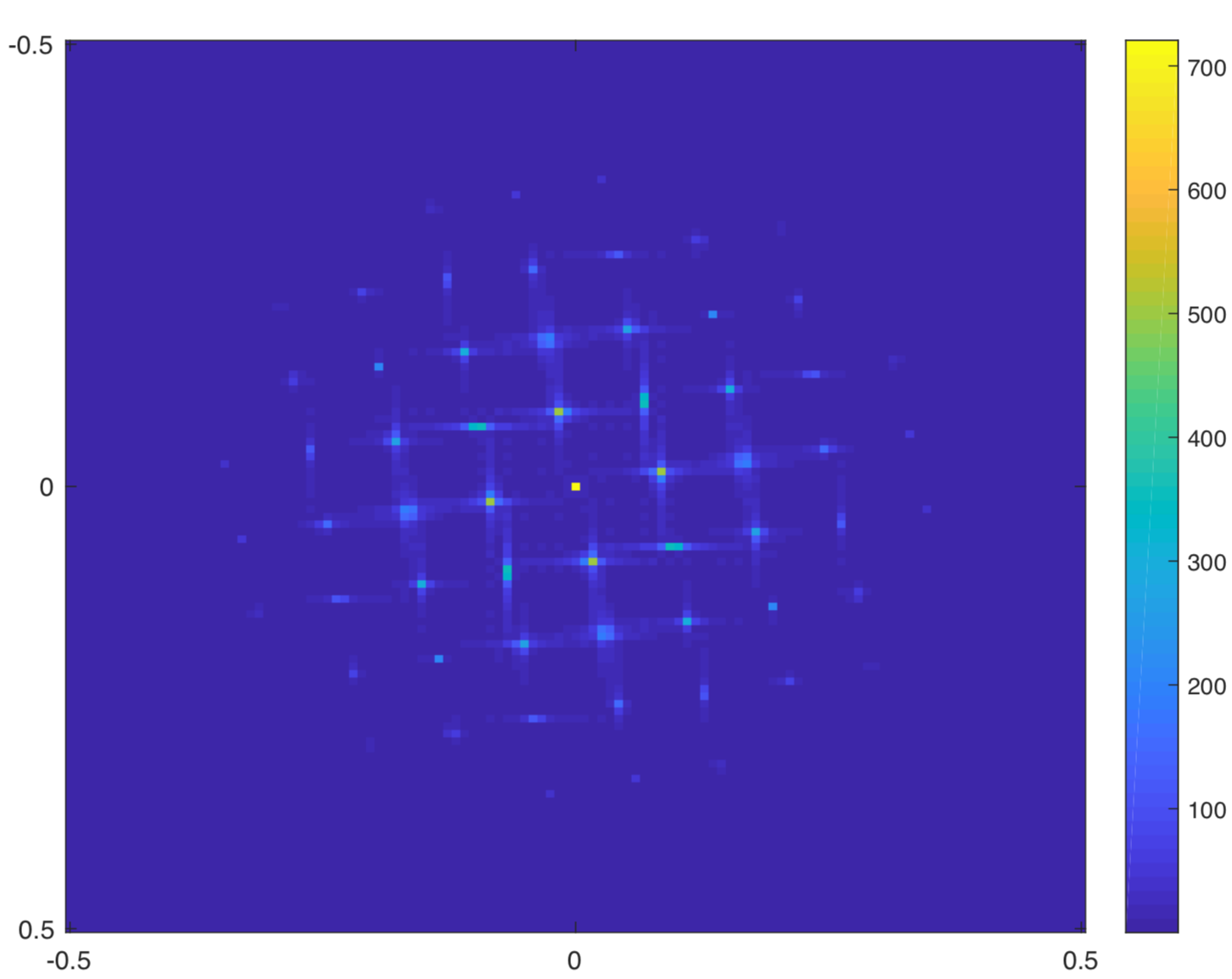} &
    \includegraphics[scale=0.16]{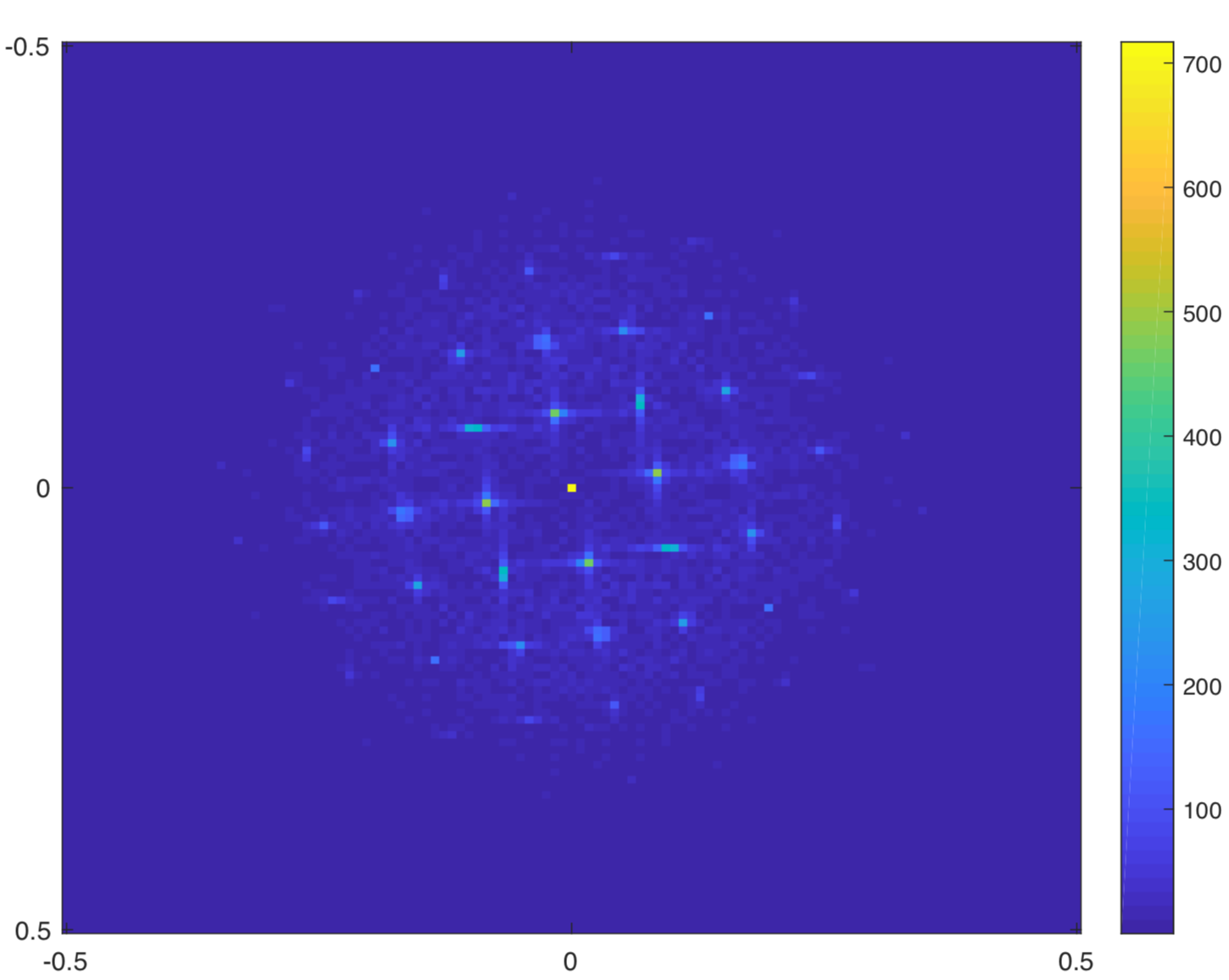} &
    \includegraphics[scale=0.16]{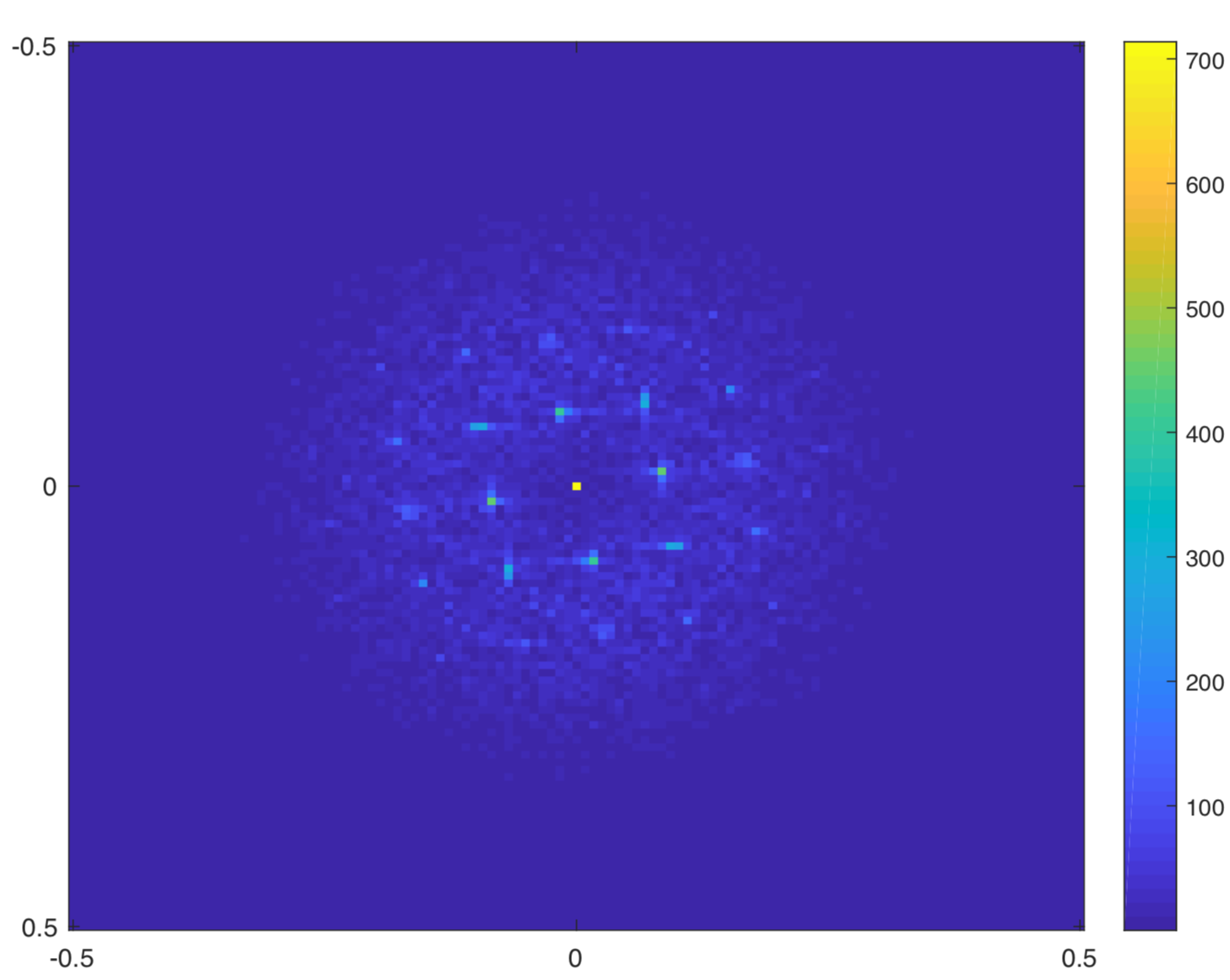} 
 \end{tabular}
 \end{center}
 \caption{[LISA is robust against Gaussian perturbation] In the first column, a single lattice $\mathcal{T}_{0}\Lambda\langle 12,e^{i\pi/18}\rangle$ is shown in (a) with its power spectrum surface in (d). A centered Gaussian perturbation is applied with standard deviation (b) $s = 0.5$ and (c) $s = 1$, and their power spectrums are displayed in (e) and (f) respectively. Notice that in the frequency domain, the reciprocal bases away from the origin are smeared by noises, but those near the origin remain high responses. The first lattices identified by LISA in (b) and (c) are robust against the perturbation, and their distances to (a) are $0.0046$ and  $0.0081$ respectively.}\label{gaussian}
\end{figure} 
\section{Numerical Experiments of LISA} \label{sec:num}

We present various numerical results.  
The radius of each particle is around $2.5\sim3$ pixels, and we choose $2.7$, for visualization.  The performance of LISA is evaluated both visually and numerically by computing the distances between the identified and the real patterns in the lattice space. For the choice of parameters, we fix $J=6$ and $K=10$. 
 
Figure~\ref{syn1} shows a typical example of LISA. The given image is a superlattice composed of three distinguishable lattices. LISA successfully extracts all the underlying lattices, one after another. For better comparison, results (b)--(d) display each identified lattice (bright white) overlaid on a dimmer original image in (a). For each layer, the identified lattice and true lattice shows a small difference.
\begin{figure}
\begin{center}
\begin{tabular}{cccc}
(a) Original image& (b) $d_{\mathscr{L}}(\Lambda,\hat{\Lambda})=0.0044$ & (c) $d_{\mathscr{L}}(\Lambda,\hat{\Lambda})=0.0093$ & (d) $d_{\mathscr{L}}(\Lambda,\hat{\Lambda})=0.0586$  \\
    \includegraphics[width=1.4in]{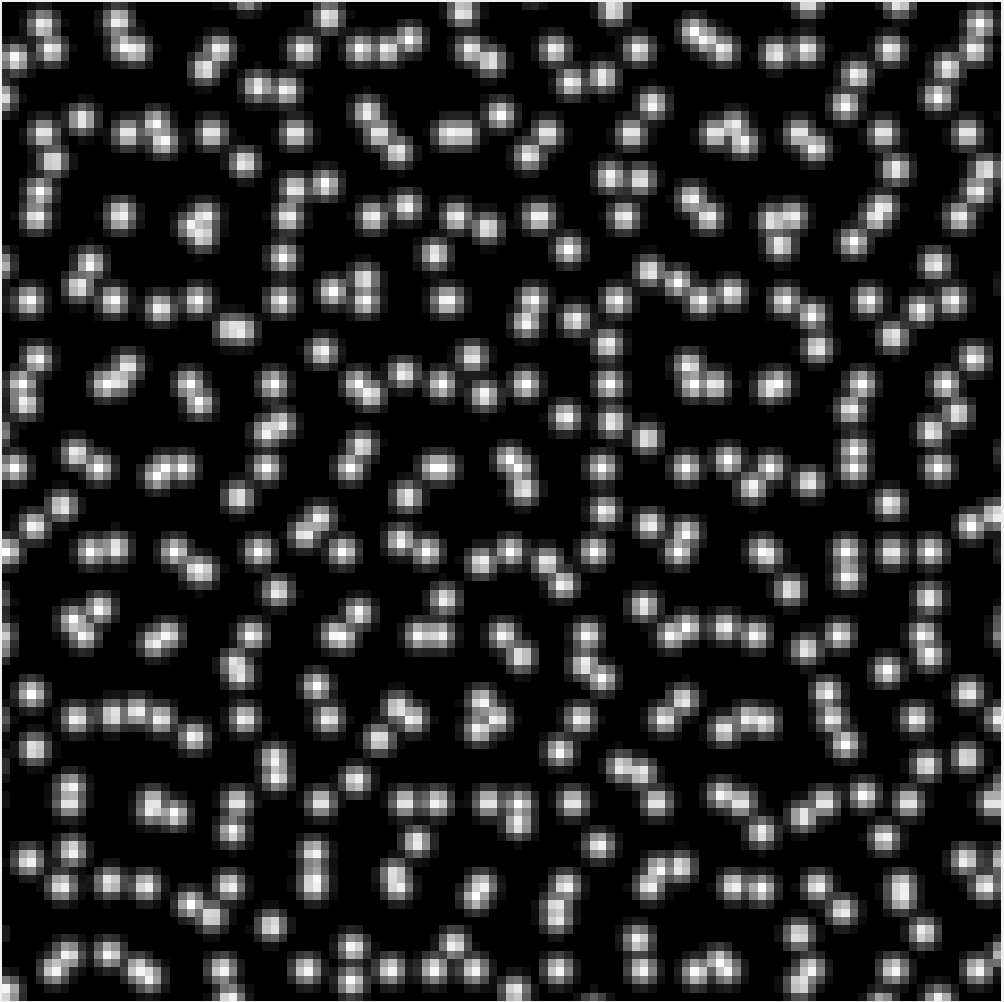}  &
    \includegraphics[width=1.4in]{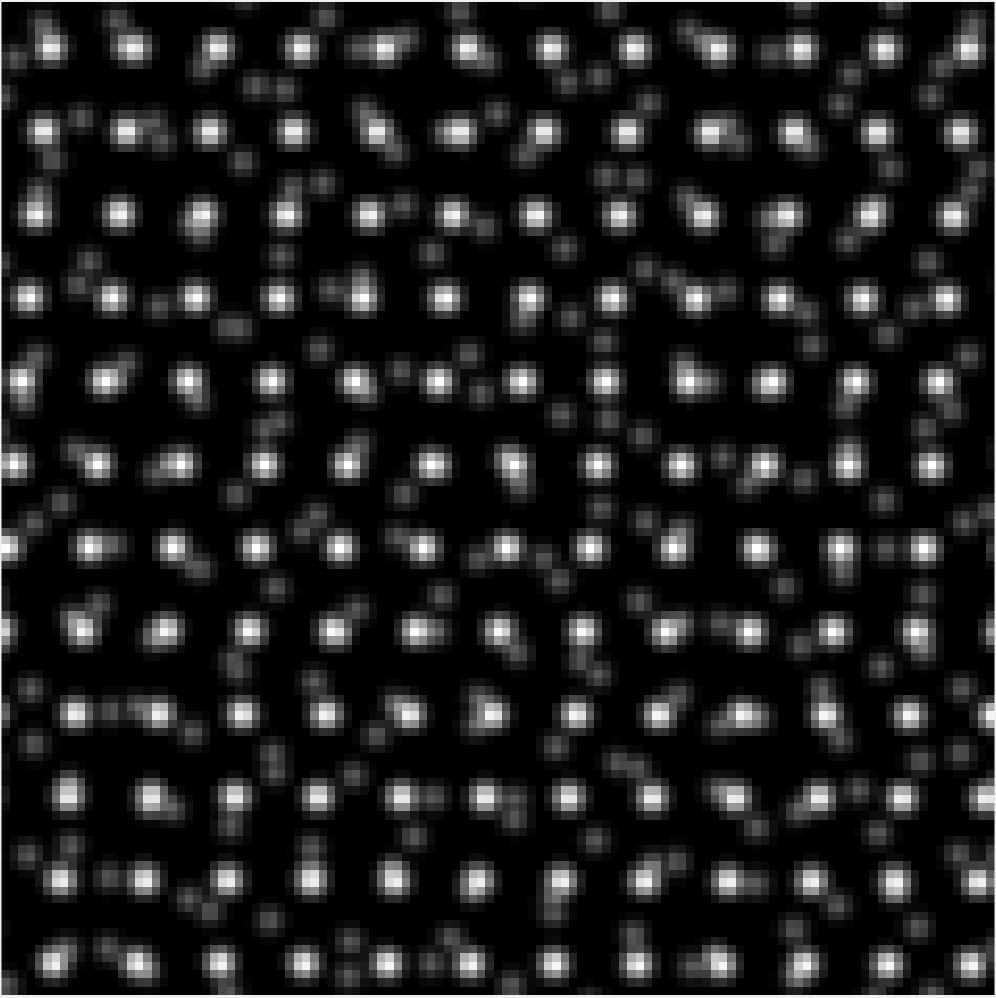} & 
    \includegraphics[width=1.4in]{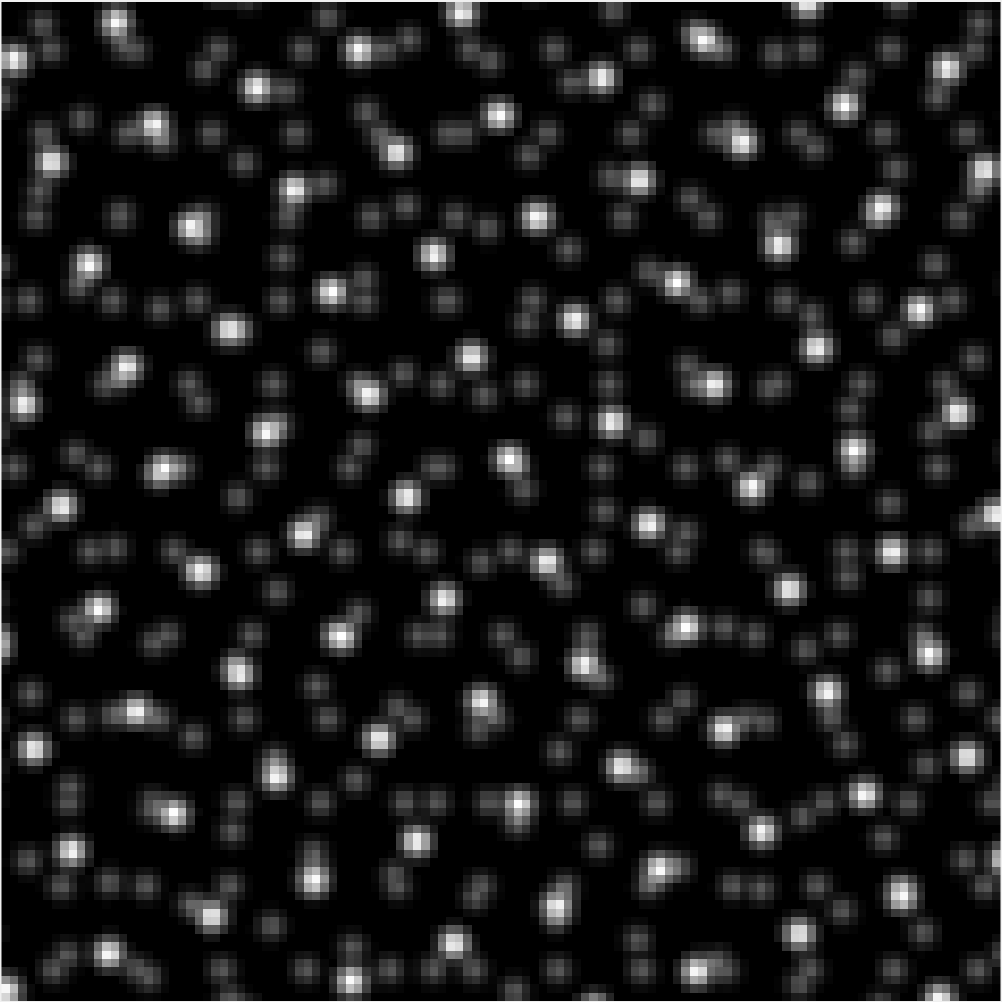} &
    \includegraphics[width=1.4in]{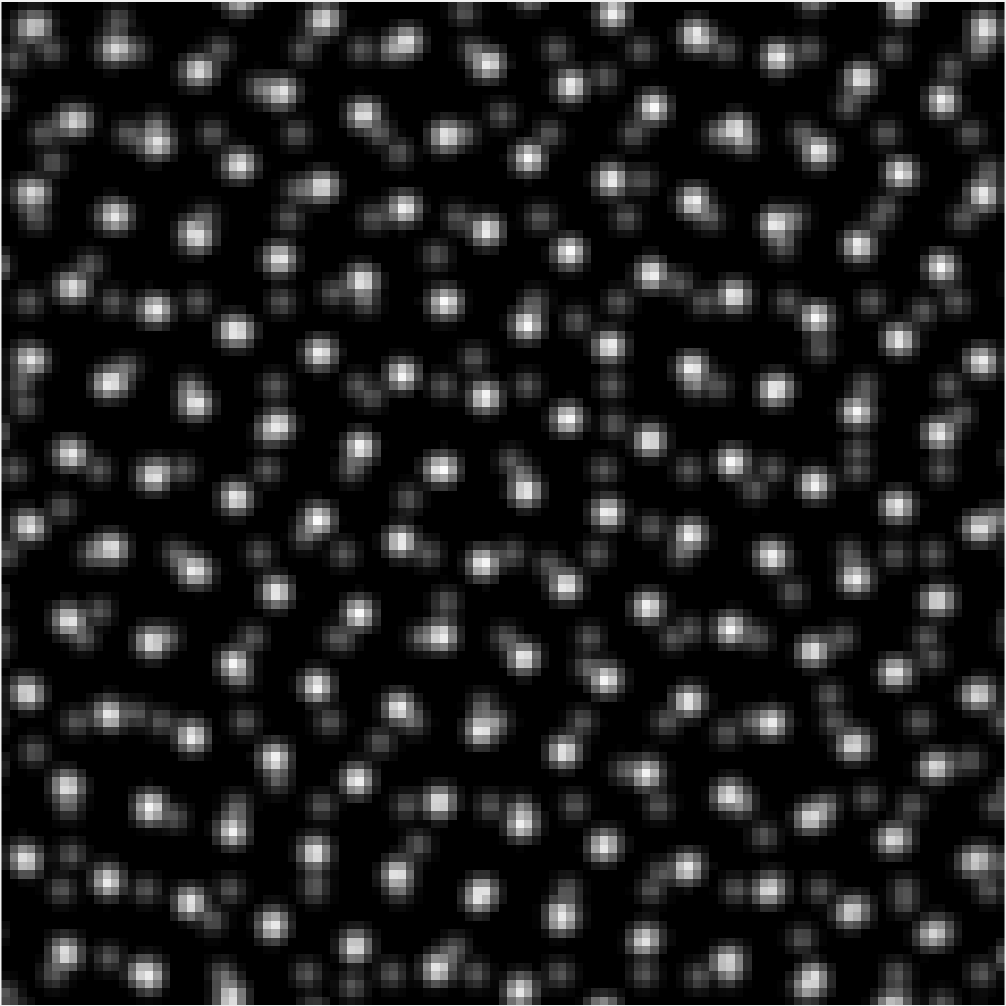} \\
\end{tabular} 
 \end{center}
\caption{[Typical example of LISA] Identification of three lattices in (a). (b) is the extracted pattern for $\mathcal{T}_{2-4i}\Lambda\langle -9.9927 + 0.0315i, 1.0014e^{i17\pi/36}\rangle$, (c) is the extracted pattern for $\mathcal{T}_{-7-4i}\Lambda\langle-4.4820 +12.1815i,i\rangle$ . (d) is the extracted pattern for $\mathcal{T}_{1-5i}\Lambda\langle-4.9898-8.5389i, 1.0298e^{i7\pi/12}\rangle$. The metric value above the images shows the comparison with the true lattice and the recovery. }\label{syn1}
\end{figure}

Figure~\ref{syn2} shows a more complicated mixture where the given image (a) seems almost random. 
The more layers of lattices there are, the more complicated the separation becomes.  Randomly clustered particles,  curve-like segments, and highly inhomogeneous texton regions present challenges. 
LISA identified five different lattice patterns from image (a), without any prior knowledge, and the recovered lattice patterns show high precision.  The metric values below the images show the comparison with the true lattices, which are all less than $0.03$.   Also, notice that the identified lattice patterns (c) and (f) are very similar. Their lattice distance in the lattice space is $0.0340$. LISA  is able to distinguish small differences since in the power spectrum surface, periodic structures are more easily identified as strong responses. 
\begin{figure}
\begin{center}
\begin{tabular}{cccc}
(a) Original image& (b) $d_{\mathscr{L}}(\hat{\Lambda},\Lambda)=0.0224$ & (c) $d_{\mathscr{L}}(\hat{\Lambda},\Lambda)=0.0053$\\
\includegraphics[scale=0.2]{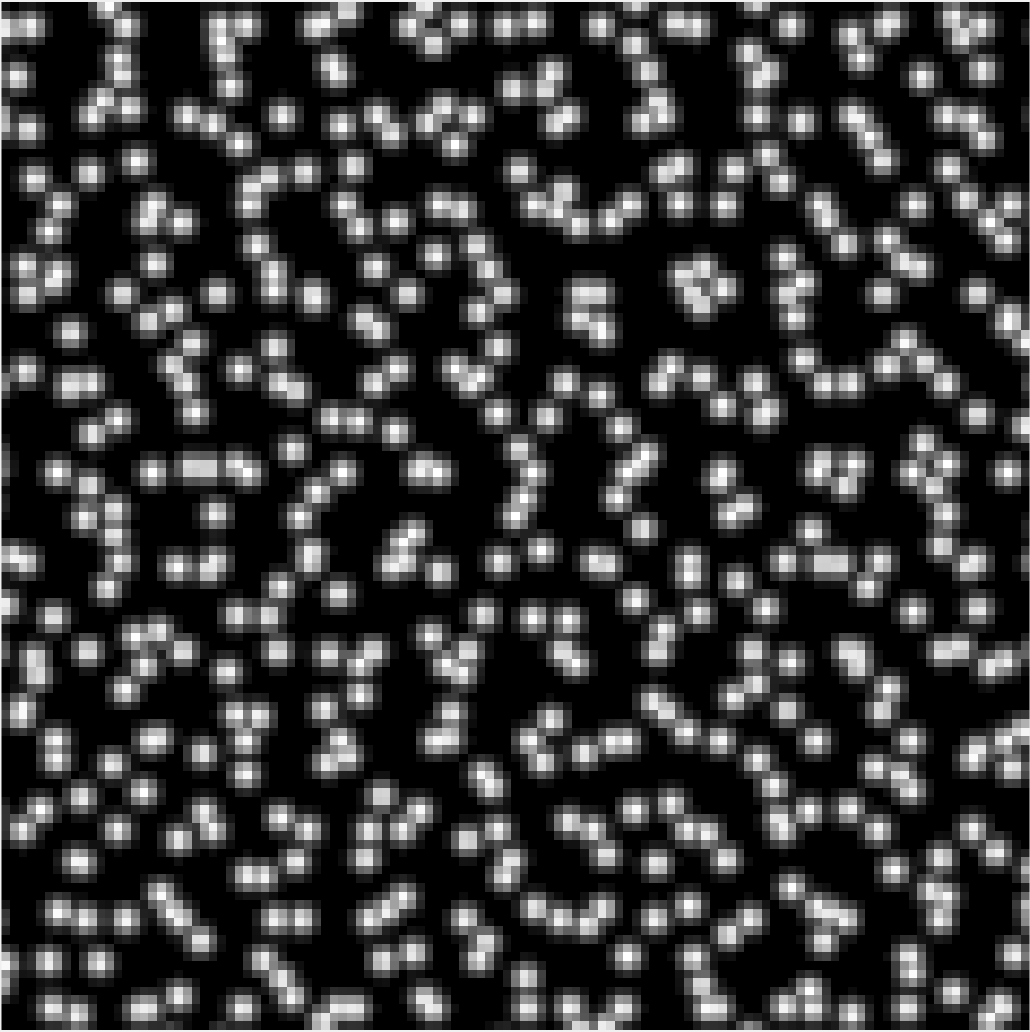} &
\includegraphics[scale=0.2]{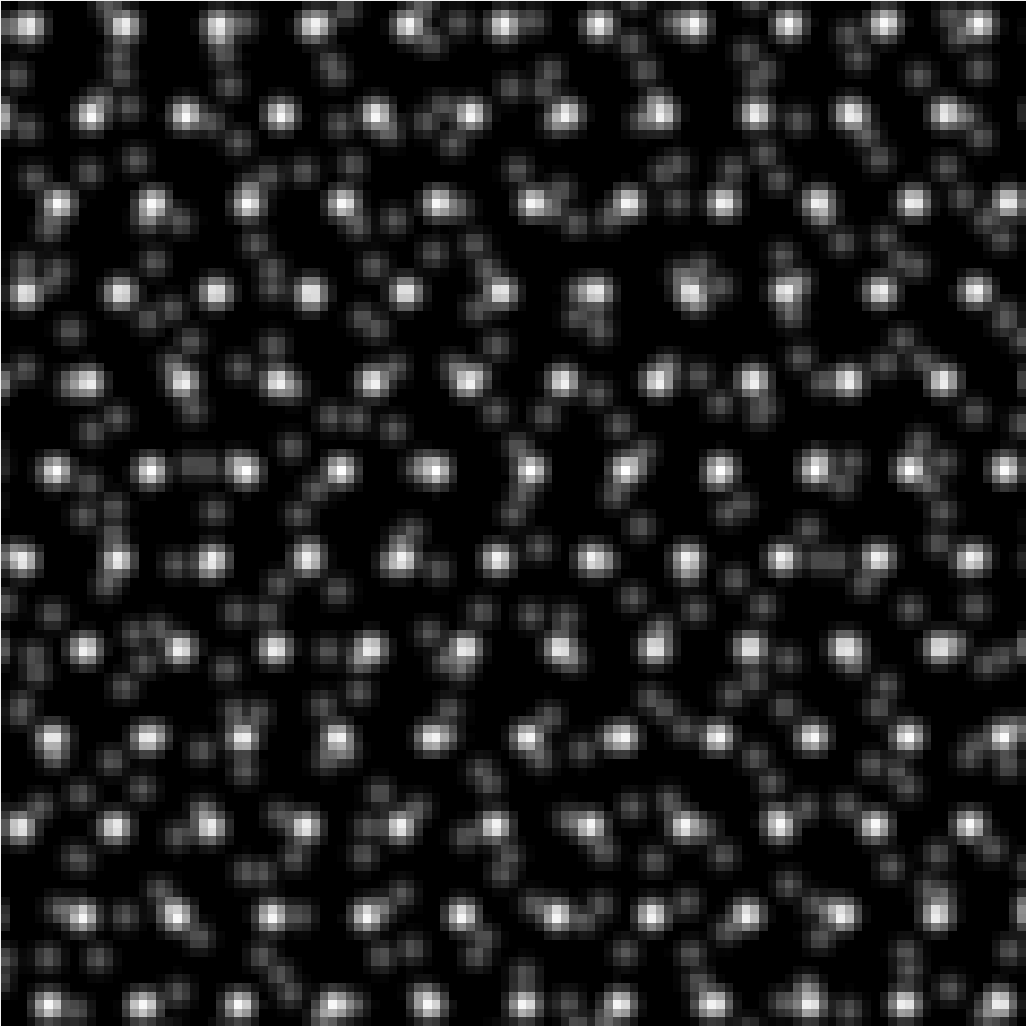} & 
\includegraphics[scale=0.2]{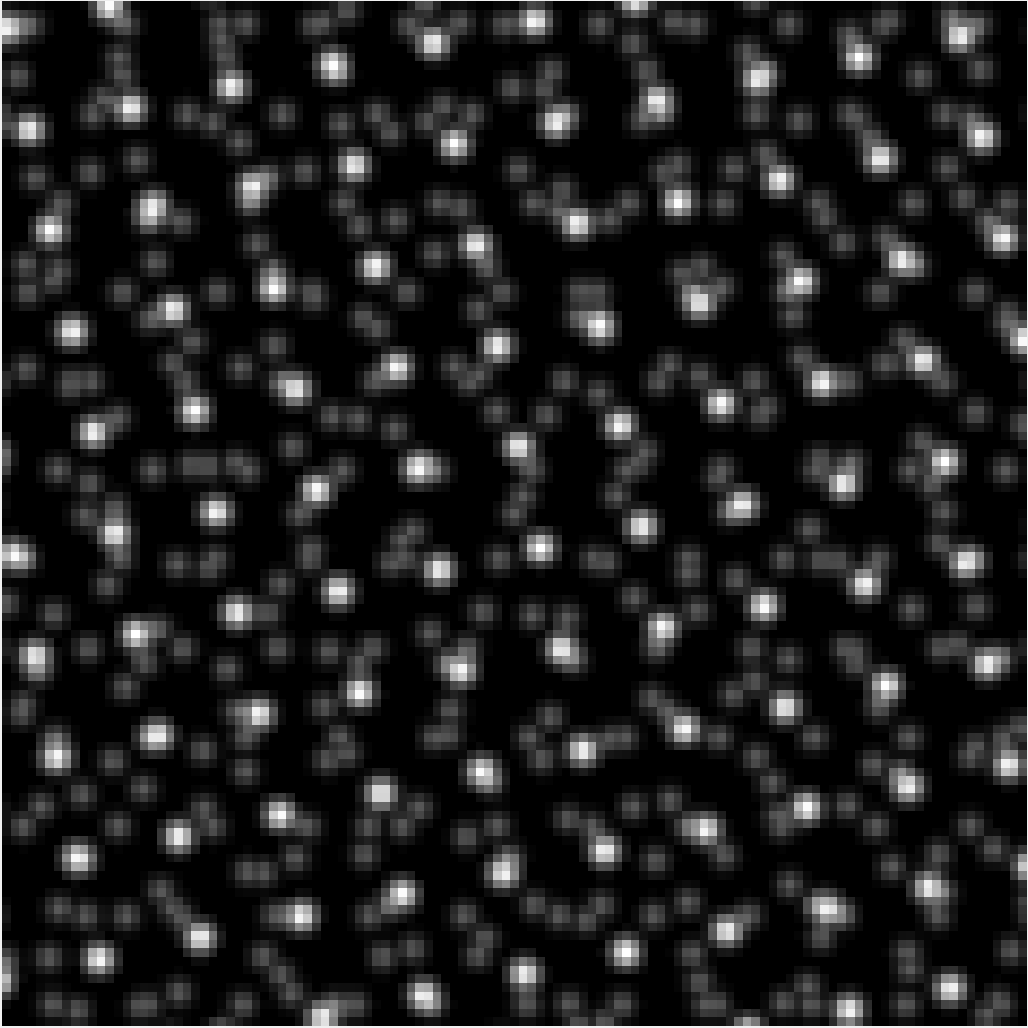} \\
(d) $d_{\mathscr{L}}(\hat{\Lambda},\Lambda)=0.0065$ & (e) $d_{\mathscr{L}}(\hat{\Lambda},\Lambda)= 0.0067$  & (f) $d_{\mathscr{L}}(\hat{\Lambda},\Lambda)=0.0025$  \\
     \includegraphics[scale=0.2]{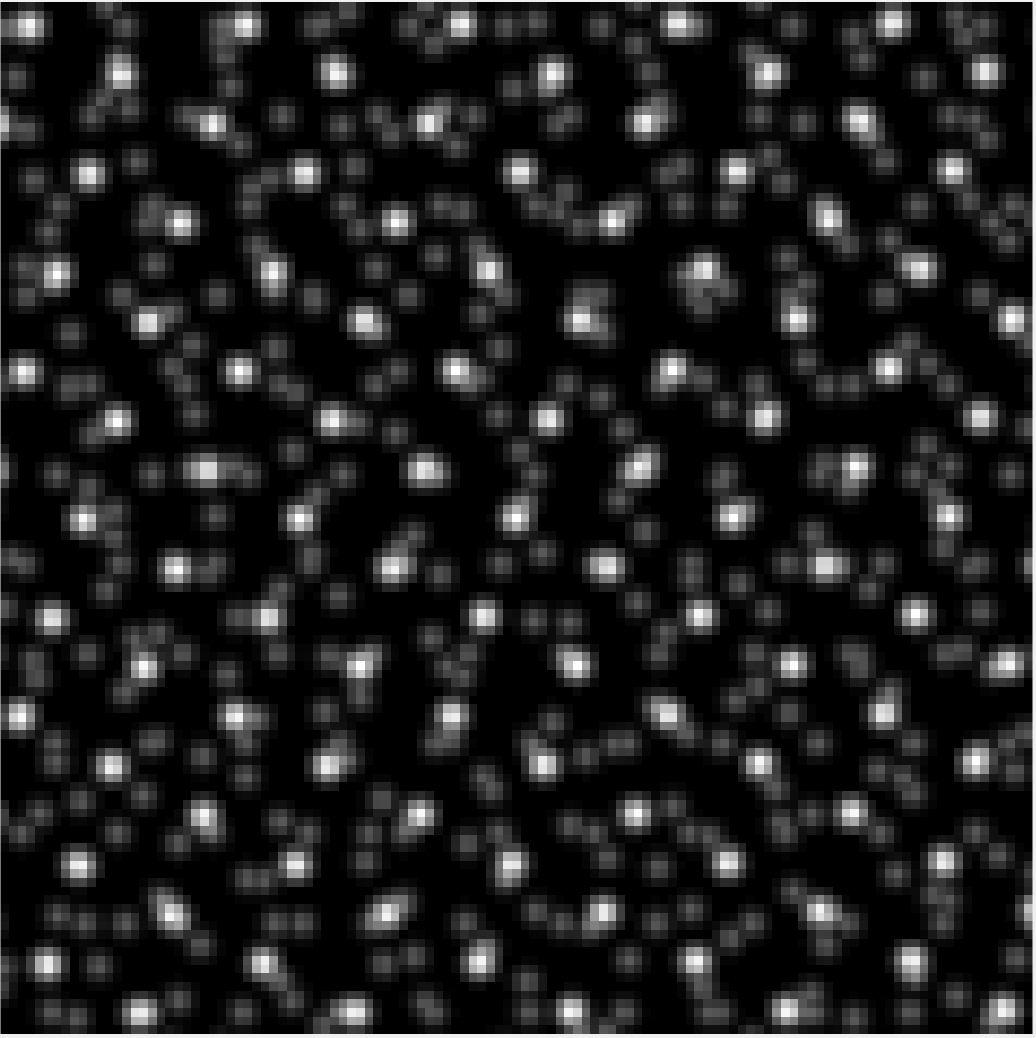} &
    \includegraphics[scale=0.2]{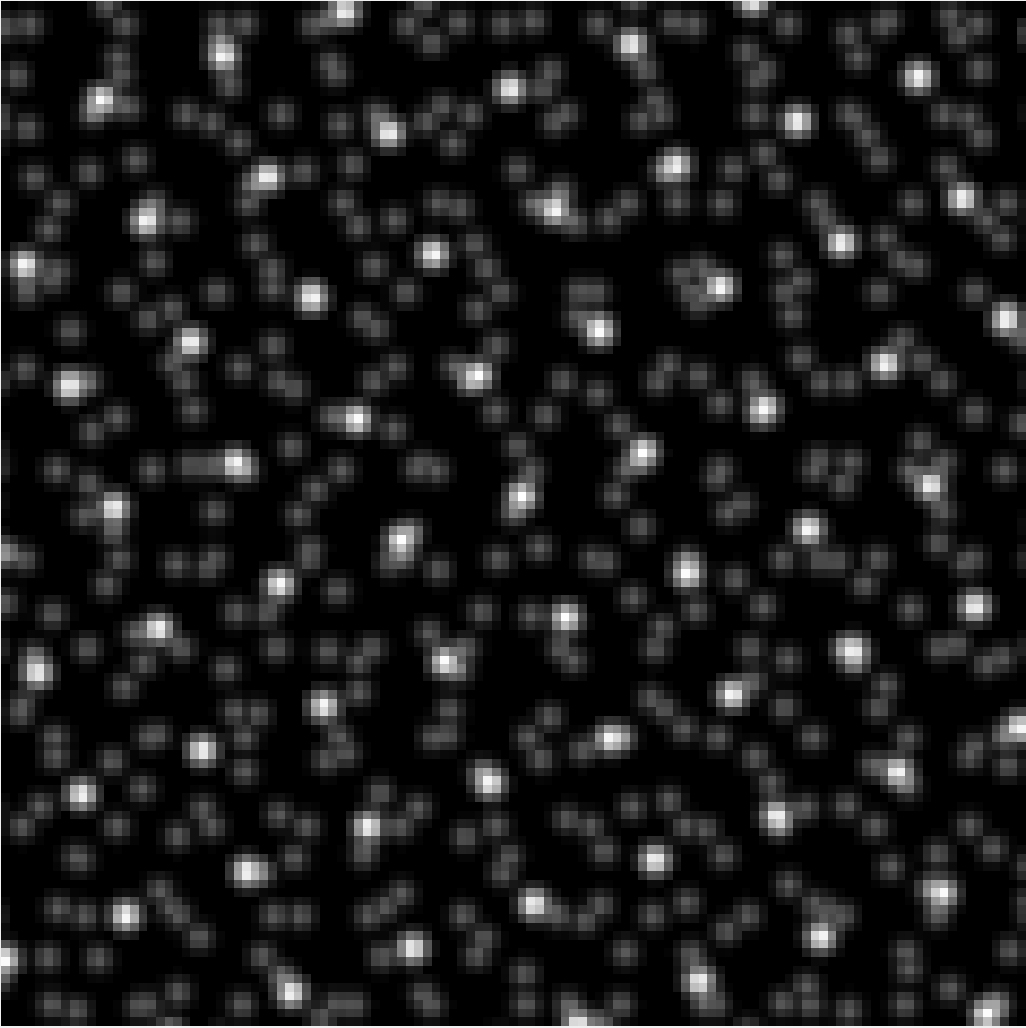} &
        \includegraphics[scale=0.2]{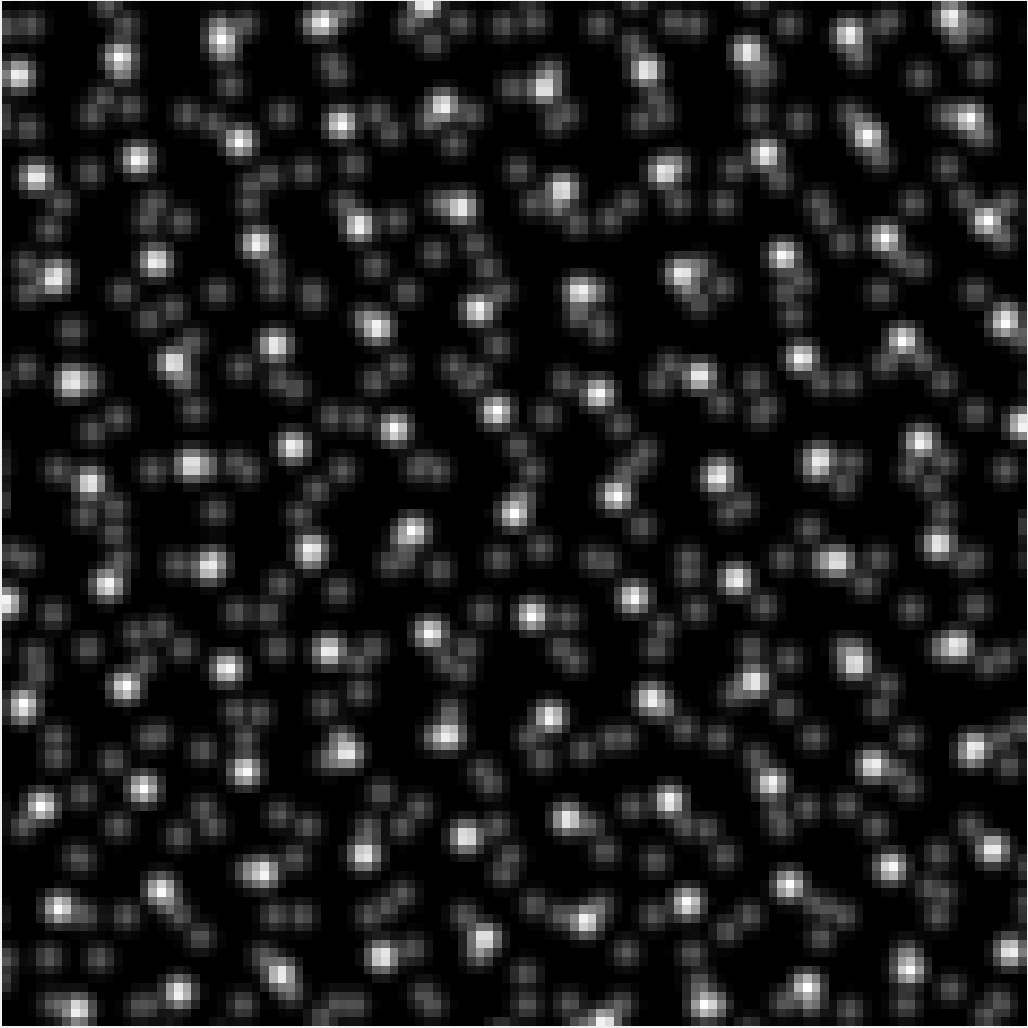}\\
\end{tabular} 
 \end{center}
\caption{Identification of five lattices from the given image (a) using LISA.  (b) The extracted pattern $\mathcal{T}_{2-5i}\Lambda
\langle 11, e^{i7\pi/18}\rangle$, (c)  $\mathcal{T}_{3+4i}\Lambda\langle 11.7378 + 2.4949i ,i\rangle$,  (d) $\mathcal{T}_{0}\Lambda\langle3.7082 +11.4127i,e^{4\pi/9}\rangle$, (e)  $\mathcal{T}_{1-2i}\Lambda\langle 14.0954 + 5.1303i,i\rangle$, and  (f)  $\mathcal{T}_{0}\langle 11.8177 + 2.0838i, i\rangle$.  LISA inspects the superlattice in frequency domain and avoids complexities in the image domain. Notice that all the metric value $d_{\mathscr{L}}(\hat{\Lambda},\Lambda)$ comparing the recovered lattice with the true lattice are very small. }
  \label{syn2}
\end{figure}

The new lattice representation and the metric are independent to the translation of lattice pattern. Figure~\ref{shiftexp} presents the effect of LISA concerning the translation.  There are two groups of lattices mixed in the given image (a), and  each group has two lattices differing from each other only by translation $\mu$.  By cross-correlation function (in Step 2 of LISA), the underlying four lattices are extracted sequentially.  We notice that even if particles are located close to each other, LISA is able to identify the underlying lattices. 
 \begin{figure}
\begin{center}
\begin{tabular}{ccc}
(a) Original image & & (b) $d_{\mathscr{L}}(\hat{\Lambda},\Lambda)=0.0406$ \\
    \includegraphics[scale=0.2]{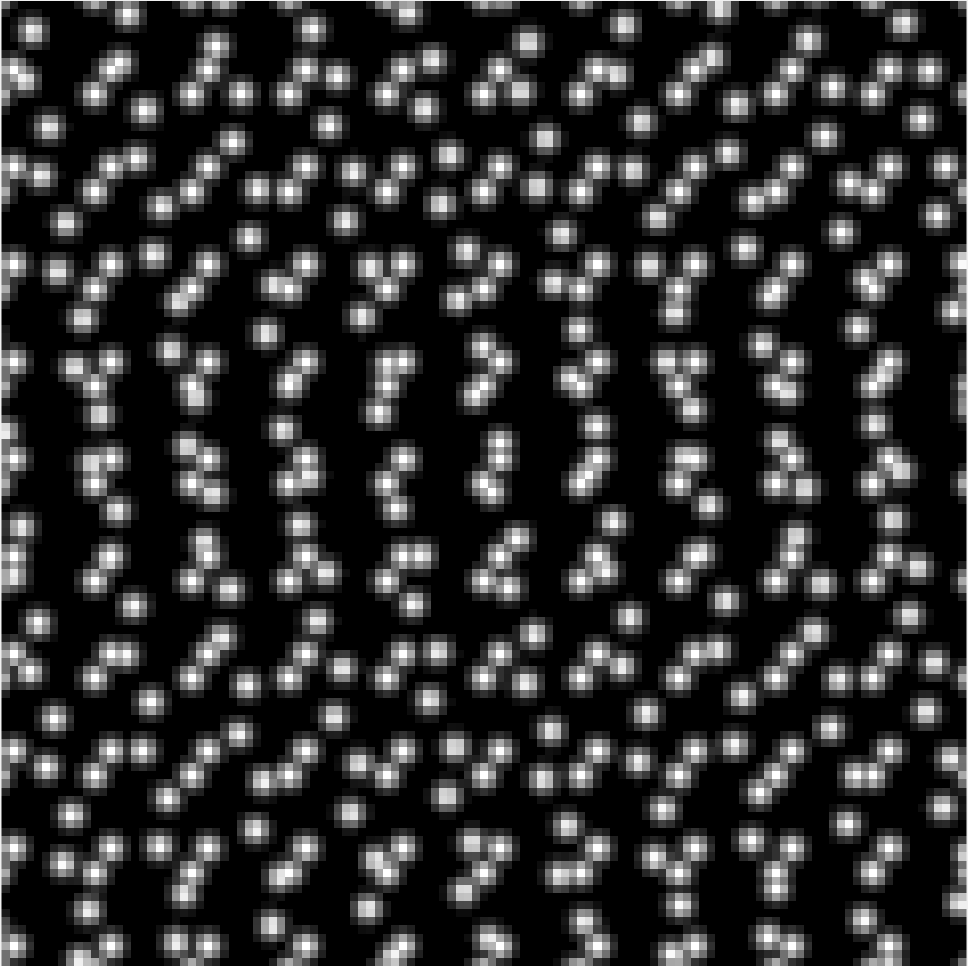} & &
    \includegraphics[scale=0.2]{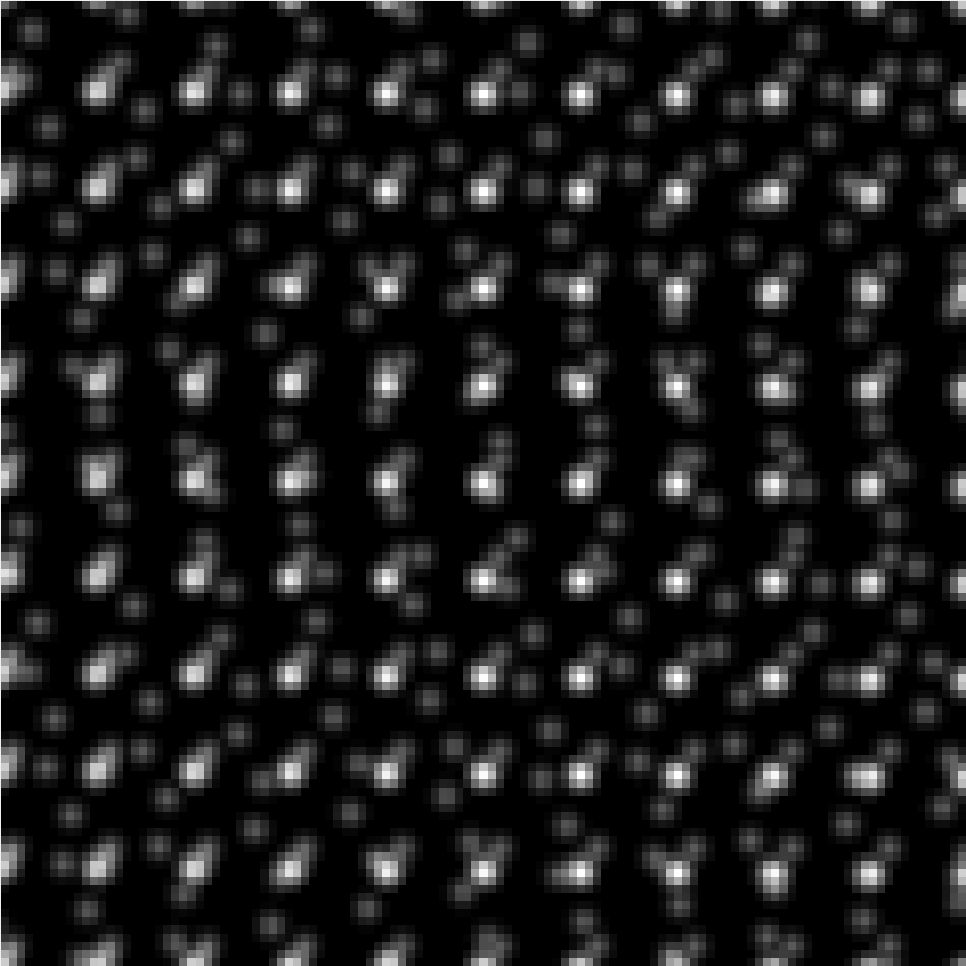} \\
(c) $d_{\mathscr{L}}(\hat{\Lambda},\Lambda)= 0.0095$ & (d) $d_{\mathscr{L}}(\hat{\Lambda},\Lambda)=0.0053$ &
(e) $d_{\mathscr{L}}(\hat{\Lambda},\Lambda)=0.0074$ \\
    \includegraphics[scale=0.2]{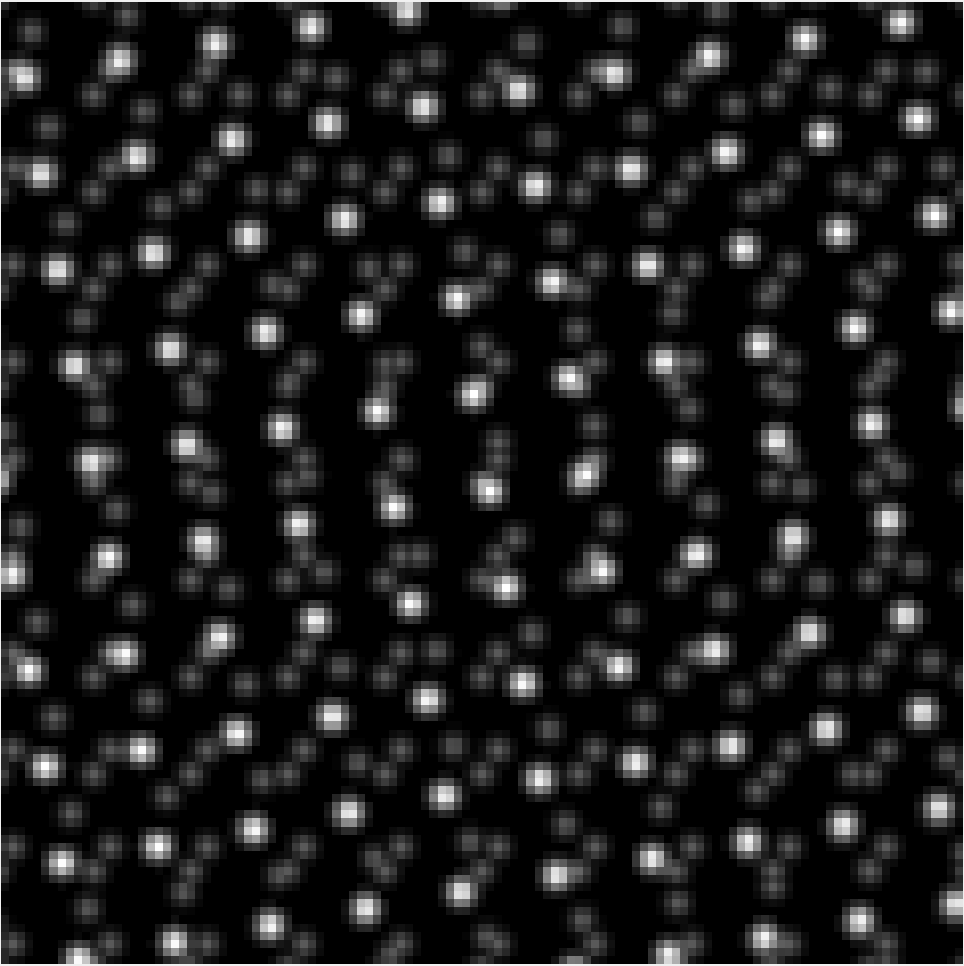} &
    \includegraphics[scale=0.2]{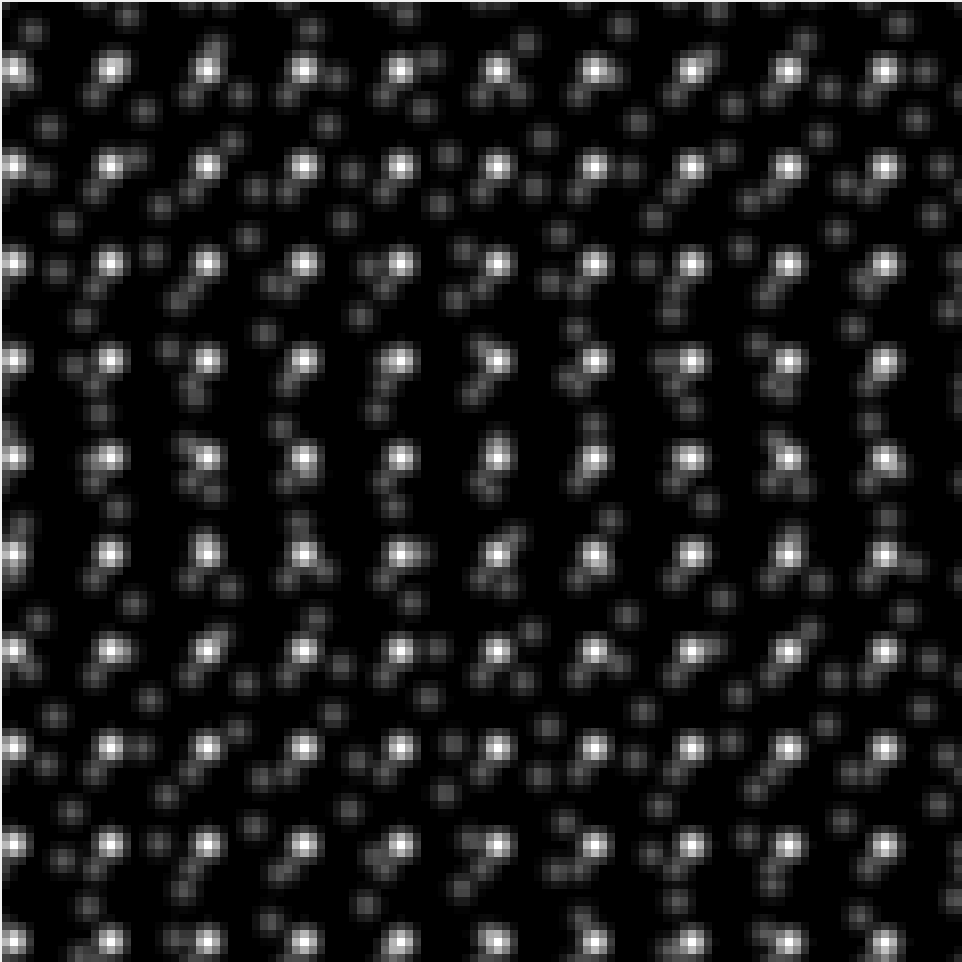} &
    \includegraphics[scale=0.2]{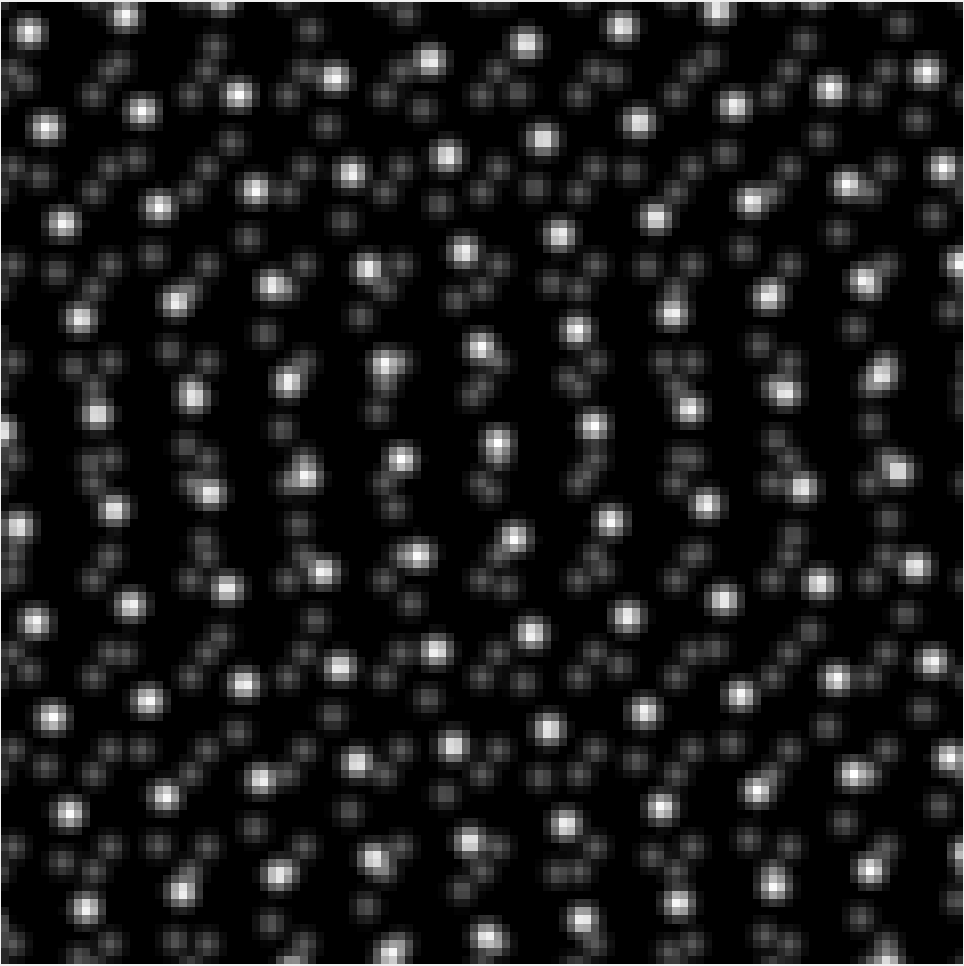} \\
\end{tabular} 
 \end{center}  
 \caption{[Translation] Identification of four lattices in (a). Lattices identified in (b) $\mathcal{T}_{0}\Lambda\langle 12,i\rangle$ and (d) $\mathcal{T}_{2-3i}\Lambda\langle 12,i\rangle$ only differ by translation, so do those in (c) $\mathcal{T}_{1+i}\Lambda\langle 11.8177 + 2.0838i,i\rangle$ and (e) $\mathcal{T}_{2-5i}\Lambda\langle11.8177 + 2.0838i,i\rangle$.}
  \label{shiftexp}
\end{figure}
In Figure~\ref{synclose}, a different set of translational lattices form a grand lattice pattern, whose lattice points are replaced by three dots, each of which belongs to a different lattice. Such configuration results in many L-shapes in the image~\cite{MSBP}. This local ambiguity presents no confusion for LISA since it observes the image globally in the frequency domain. The sensitivity of LISA to the distance between particles is affected by the size of the particle of lattice candidates.
 \begin{figure} 
 \begin{center}
\begin{tabular}{cccc}
(a)  Original image & (b) $d_{\mathscr{L}}(\hat{\Lambda},\Lambda)=0.0072$  & (c) $d_{\mathscr{L}}(\hat{\Lambda},\Lambda)= 0.0155$ & (d) $d_{\mathscr{L}}(\hat{\Lambda},\Lambda)=0.0113$ \\
    \includegraphics[height=1.3in]{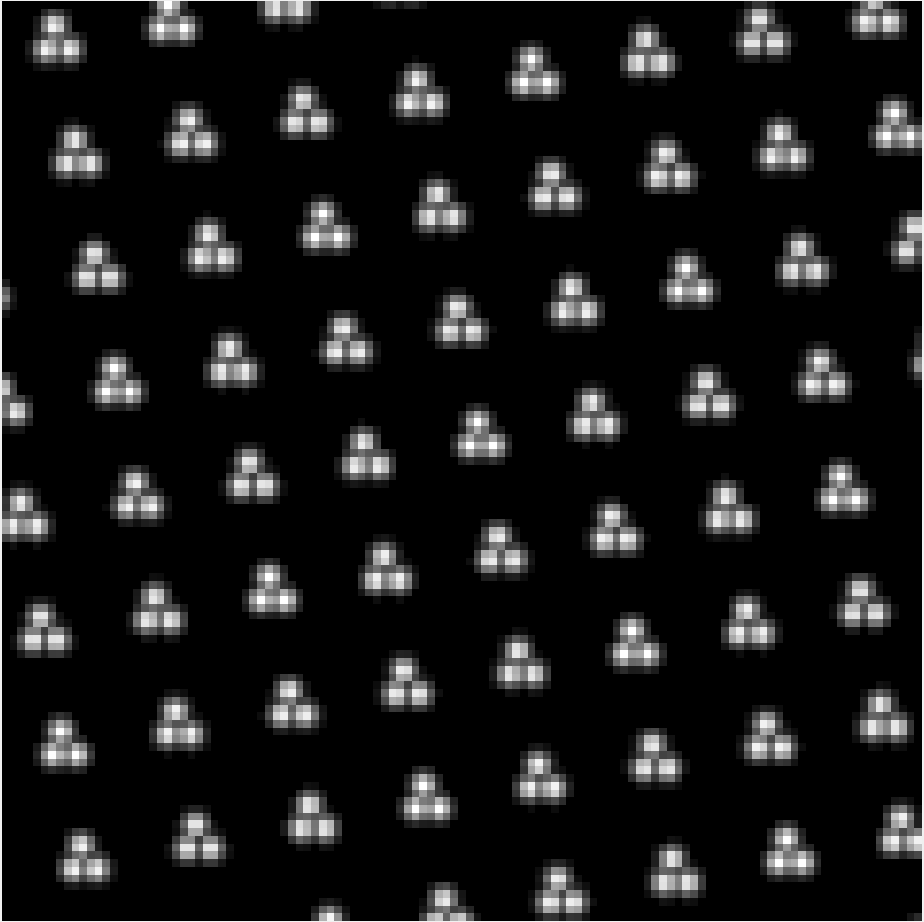}  &
    \includegraphics[height=1.3in]{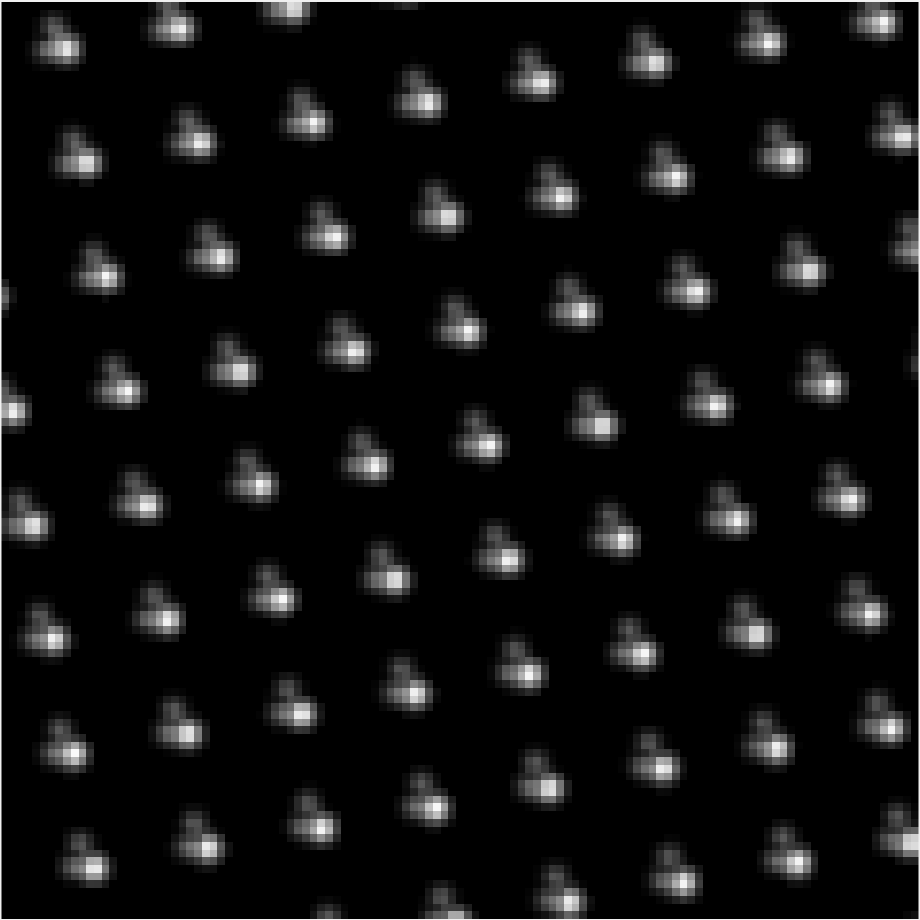} & 
    \includegraphics[height=1.3in]{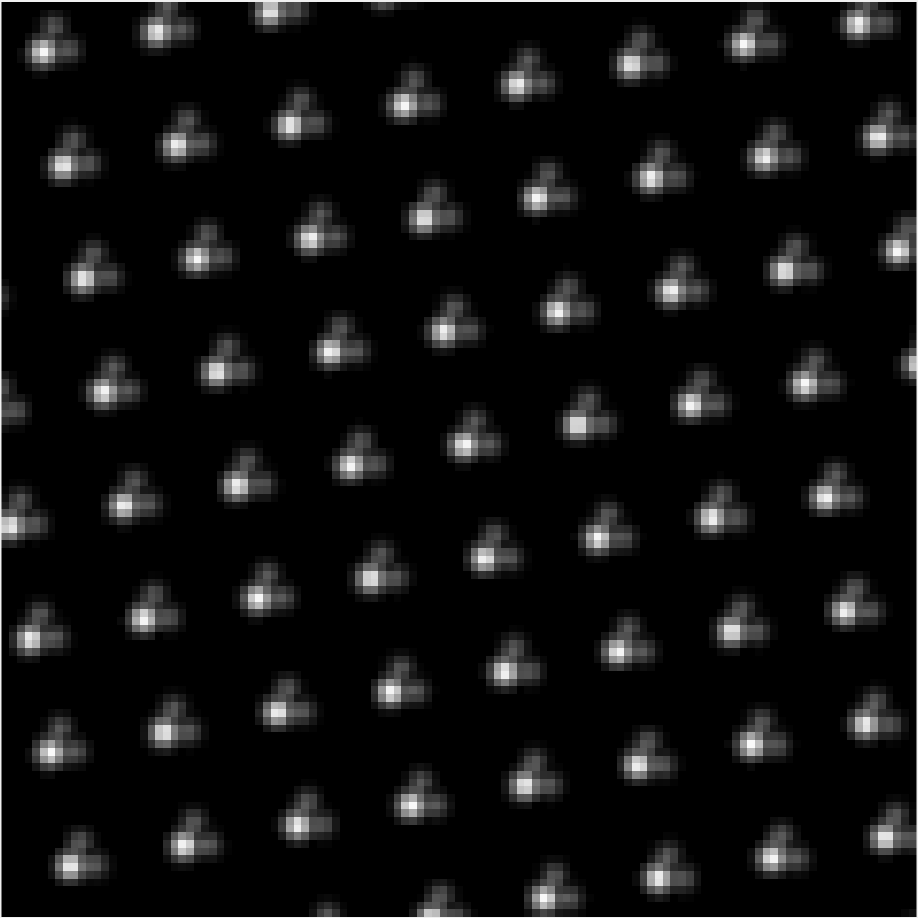} &
    \includegraphics[height=1.3in]{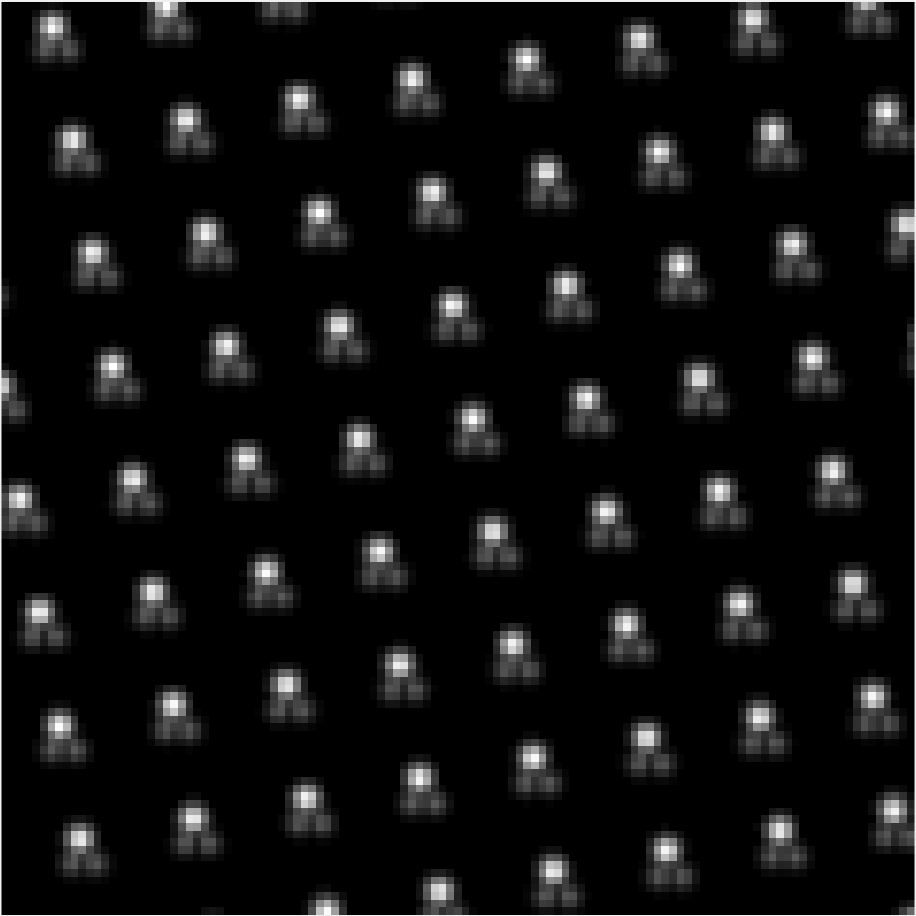} \\
 \end{tabular} 
 \end{center}   
 \caption{[Close translated particles] Identification of three lattices in (a). They are shifted from $\mathcal{T}_{0}\Lambda\langle14.7721 + 2.6047i,i\rangle$ by (b) $4-2i$ (c) $1-2i$. (d) $2-5i$. These relative translations push particles close, and generate a lattice pattern whose lattice points are composed of three dots. LISA successfully distinguishes them with high precision as indicated by $d_{\mathscr{L}}(\hat{\Lambda},\Lambda)$ values above each lattice.}
  \label{synclose}
\end{figure}

In practice, only a portion of the lattice patterns may be present.  For example, Figure~\ref{completeexp} (a) is a superlattice composed by a completely presented lattice (b) and a partially displayed lattice (c), where $50\%$ of the particles are missing. The incompleteness modifies the power spectrum by convolving the reciprocal lattice of (c) with the Fourier transform of the lower triangular shape, resulting in weaker responses. Hence, LISA identifies the complete lattice (b) first, then reveals the incomplete lattice (c). Notice that LISA identifies the complete lattice pattern corresponding to (c), instead of the incomplete lattice image (c). This is shown in (e), where the identified pattern extends to the upper triangular region. A simple improvement of this visual presentation would be directly comparing the identified pattern with the original image, which is shown in (f). We also experiment in situations where $70\%$ of the particles from one of the lattices are missing, and LISA recognizes the incomplete lattices successfully. An obvious upper bound for the number of missing particles is that the average intensity of the incomplete lattice must remain at least $0.01$ by the terminating condition of LISA.  
\begin{figure} 
 \begin{center}
\begin{tabular}{ccc}
(a) Original image & (b) Complete lattice $\Lambda_{1}$ & (c) Partial lattice $\Lambda_{2}$ \\
\includegraphics[height=1.3in]{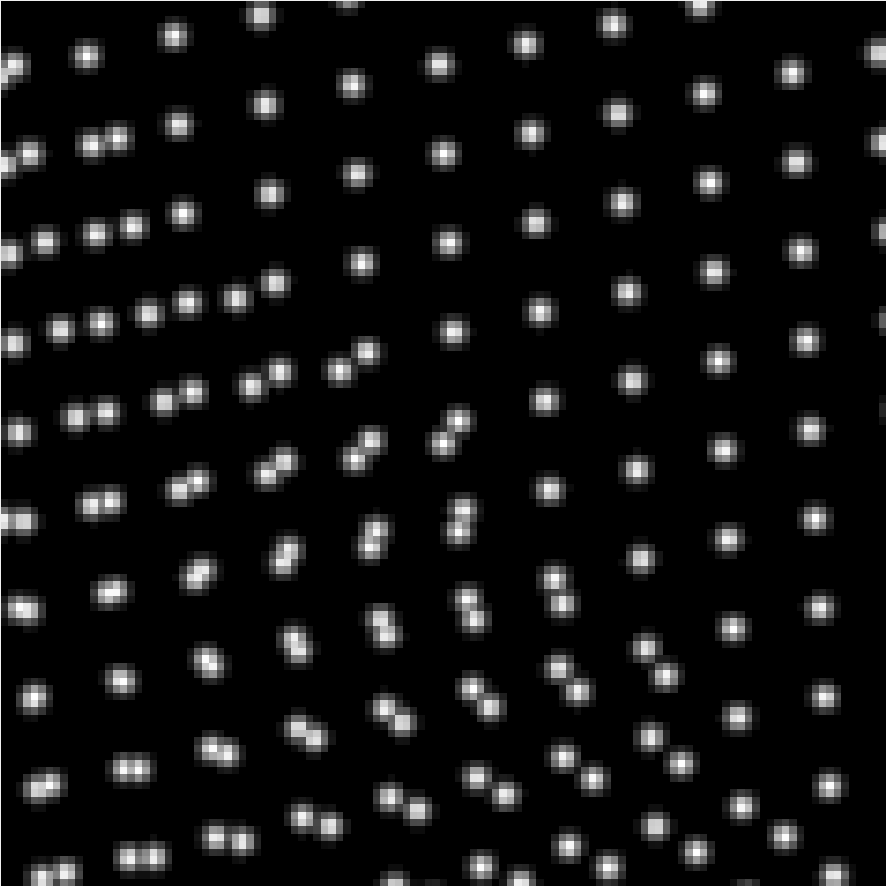} &
 \includegraphics[height=1.3in]{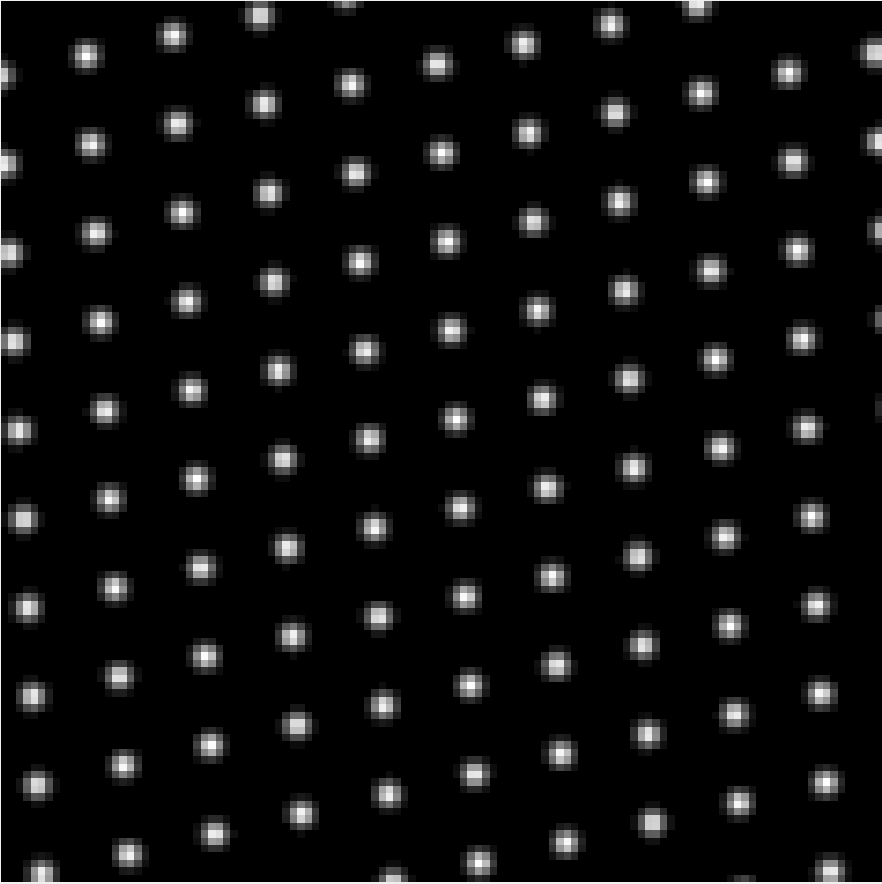} &
 \includegraphics[height=1.3in]{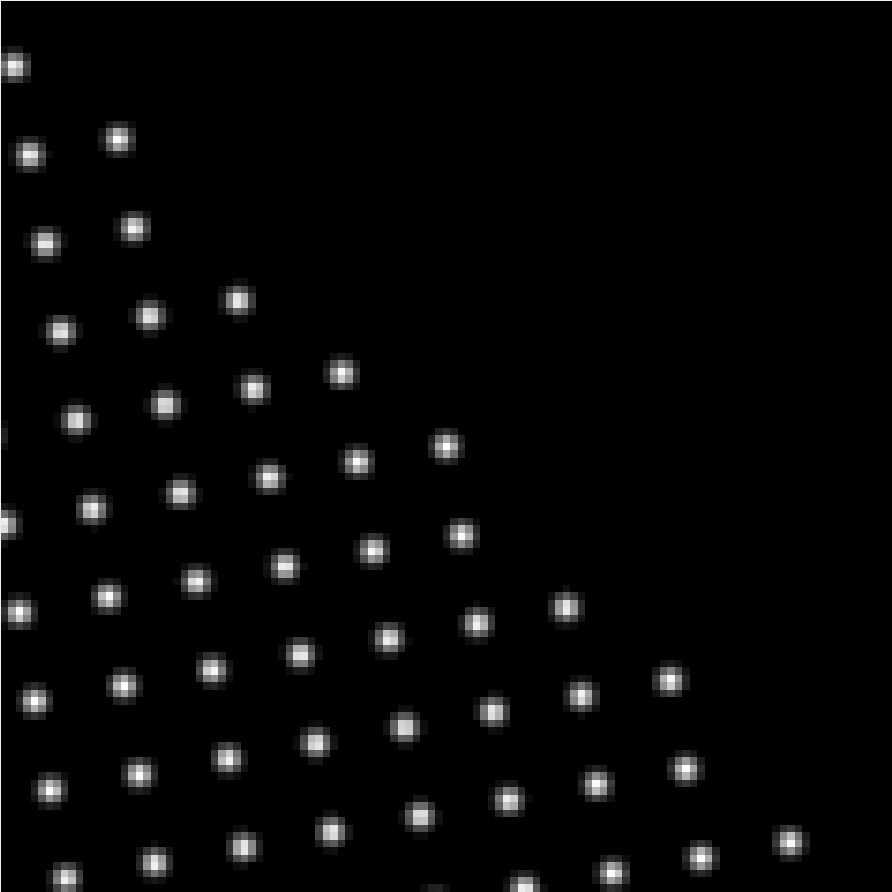} 
\\
 (d) $d_{\mathscr{L}}(\hat{\Lambda}_{1},\Lambda_{1})=0.0056$ & (e) $d_{\mathscr{L}}(\hat{\Lambda}_{1},\Lambda_{1})=0.0036$ & (f) Improved presentation \\ 
\includegraphics[height=1.3in]{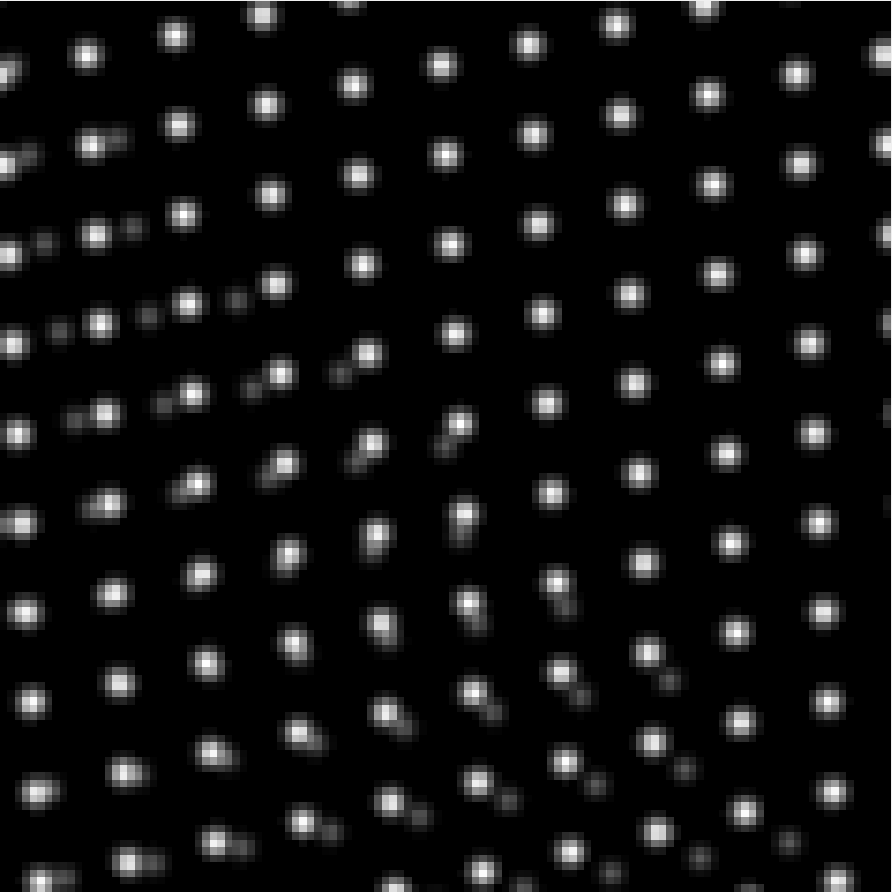} &
\includegraphics[height=1.3in]{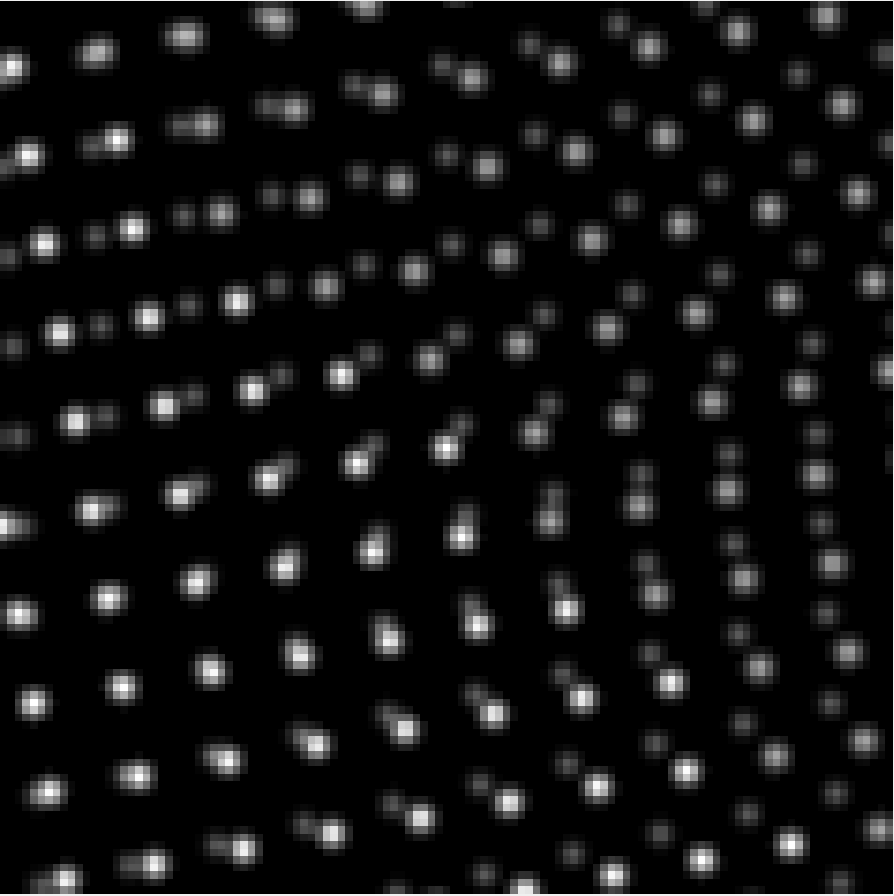} &
\includegraphics[height=1.3in]{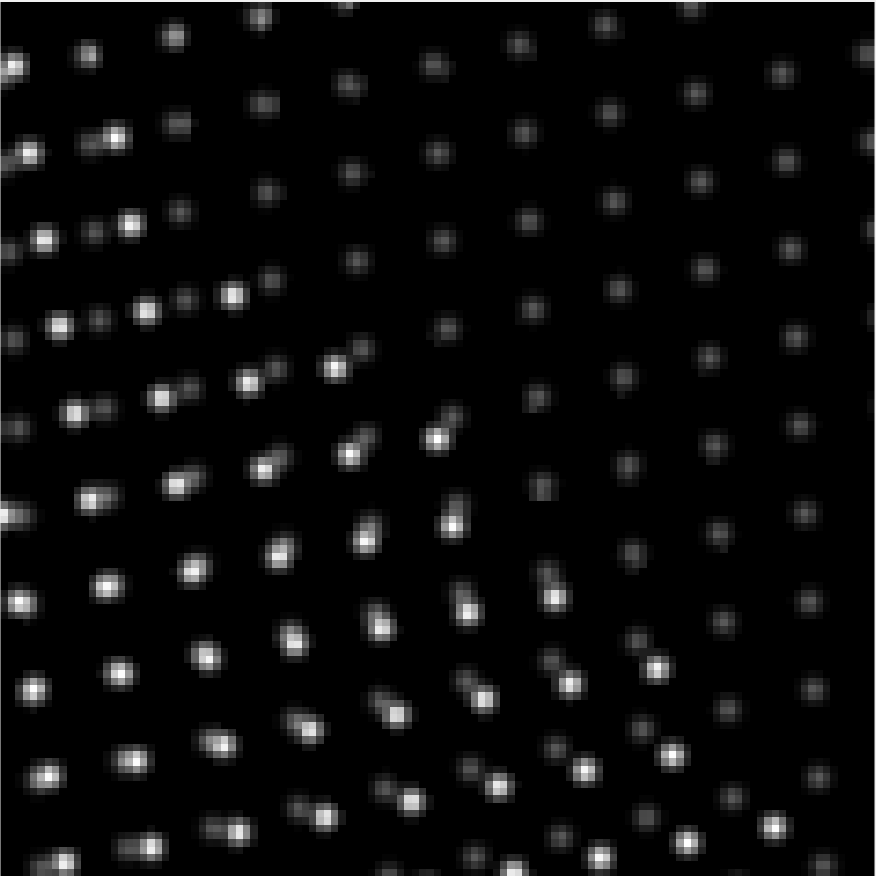}
 \end{tabular} 
 \end{center}   
  \caption{[Incomplete lattice] The superlattice in (a) is a mixture of (b) $\mathcal{T}_{0}\Lambda\langle 11.6924 + 2.6994i,e^{i4\pi/9}\rangle$ and (c) $\mathcal{T}_{2-3i}\Lambda\langle11.8177 + 2.0838i,i\rangle$, which is incomplete. (d) and (e) are the identified patterns by LISA. By directly comparing (e) with (a), (f) improves the visual presentation.}
  \label{completeexp}
\end{figure}

The evaluation method~(\ref{eval}) proposed in section~\ref{sec:LISA} considers  the density restriction, i.e., the overfitting term.  Figure~\ref{scoreDemo} illustrates this importance. Image (a) is generated with two lattices presenting a region of moir\'{e} pattern in the center. Without the density restriction in (\ref{eval}), image (b) is identified while with the density restriction image (c) is the identified lattice.  Comparing the lattice (b) and (c), shown in (d), (b) is dense and (c) is almost a sub-lattice of (b). Although (b) seems to have identified more points (according to (d)), comparing the lattice (b) with the given image (a), in fact, many points are not identified --- shown in (e). In this case, lattice (c) was one of the underlying lattices. In the frequency domain, large-scale moir\'{e} patterns tend to produce strong responses on power spectrum surface.  Lattice candidates associated with these high responses are excessively dense, and they partially coincide with the moir\'{e} patterns in the given image, which causes the next-level-identification unstable. Hence the density restriction in~(\ref{eval}) makes LISA robust against possible moir\'{e} patterns.
\begin{figure}
\begin{center}
\begin{tabular}{ccc}
(a)  & (b) & (c) \\
    \includegraphics[height=1.3in]{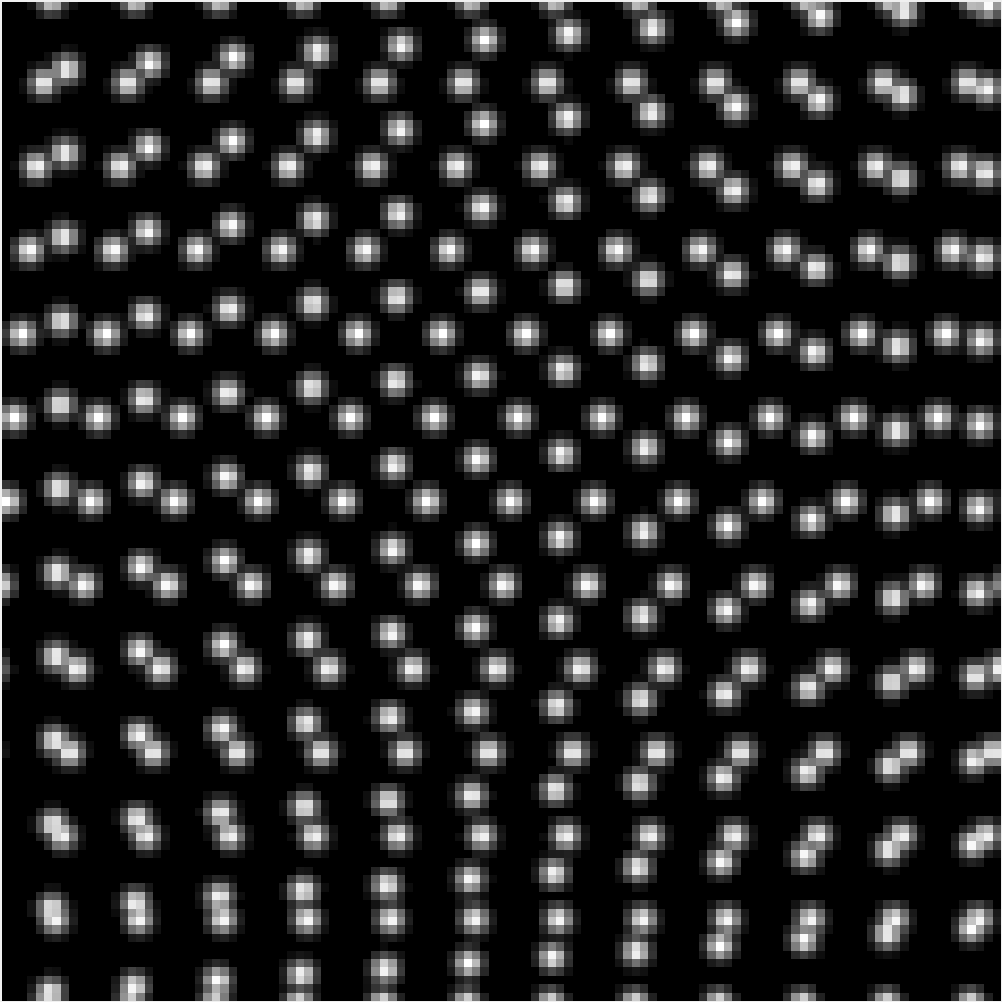} &
    \includegraphics[height=1.3in]{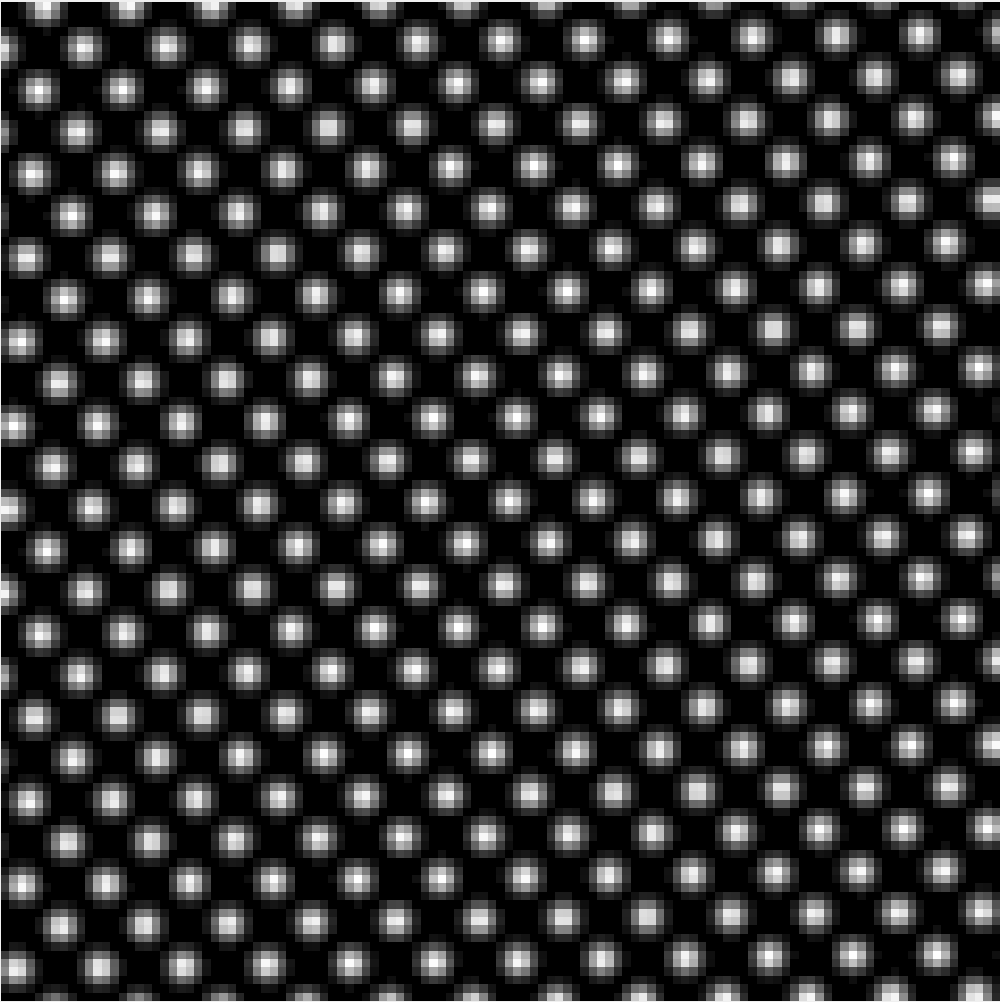} &
    \includegraphics[height=1.3in]{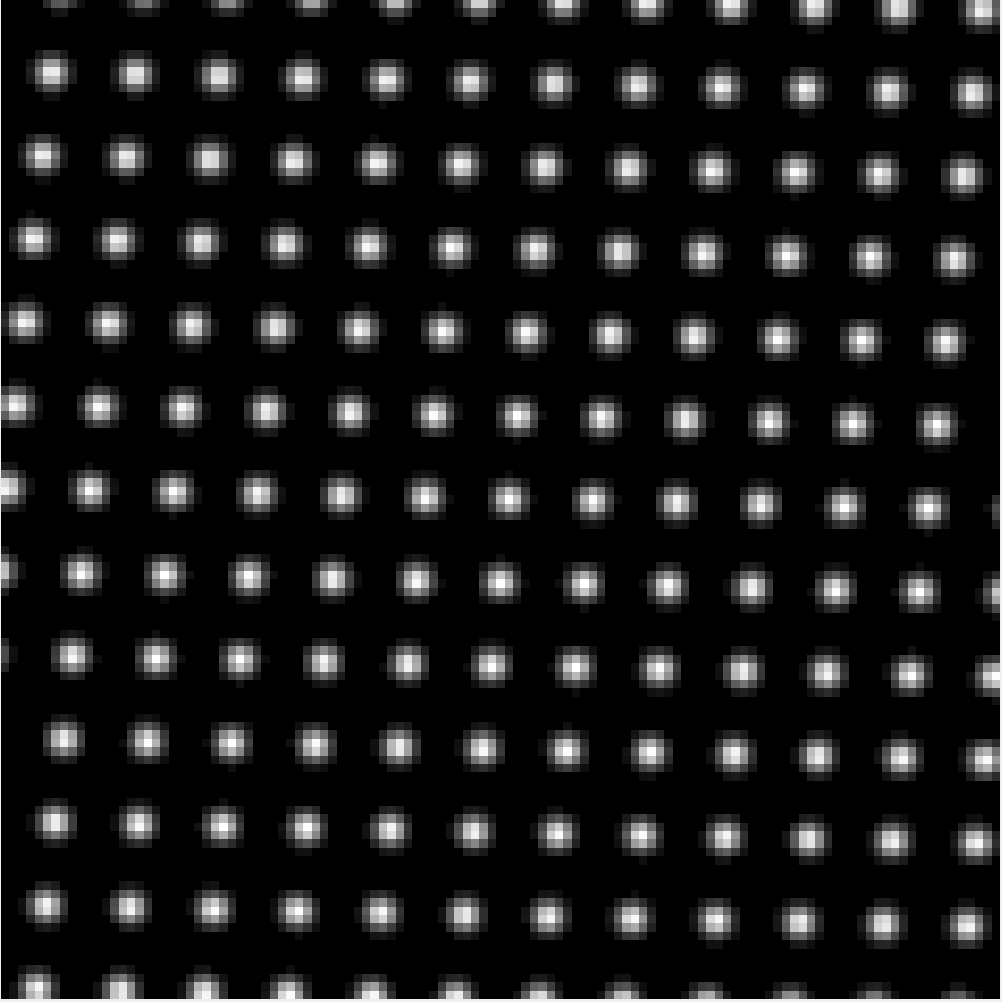}\\
(d) &  & (e) \\
     \includegraphics[height=1.3in]{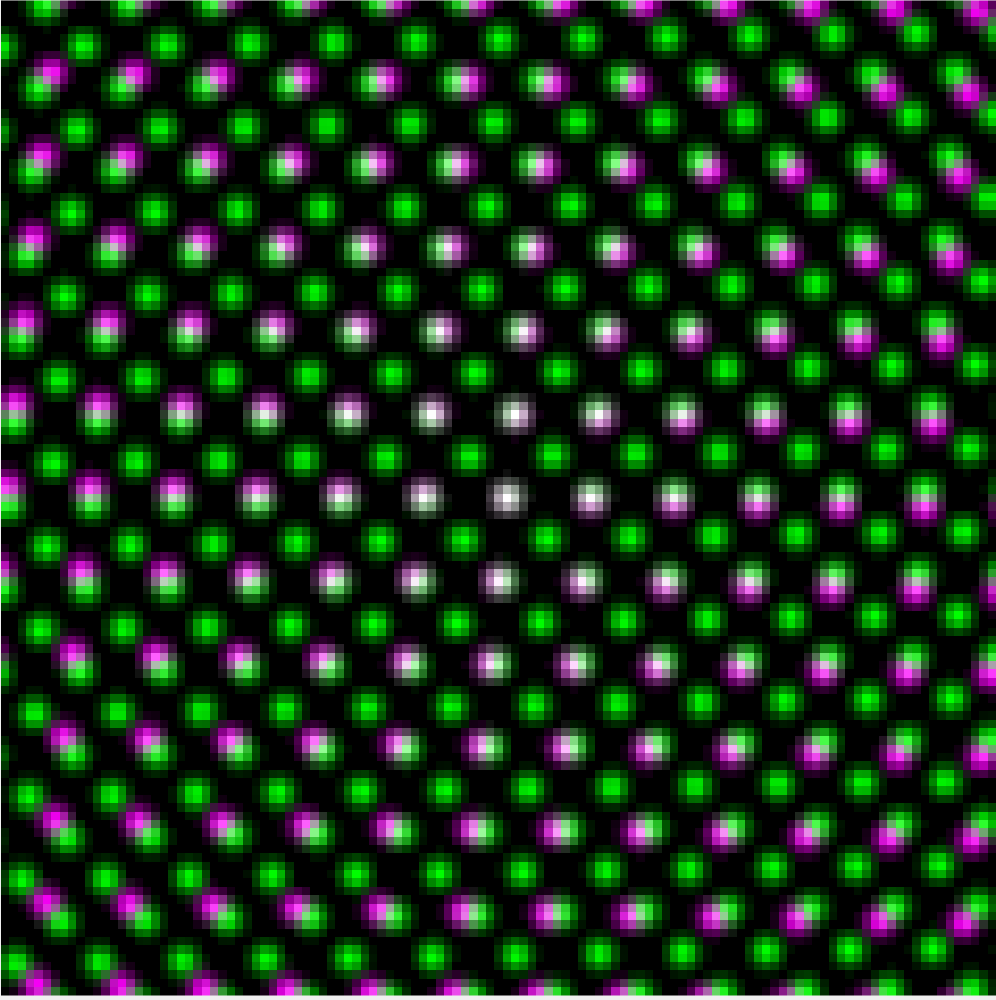} & &
    \includegraphics[height=1.3in]{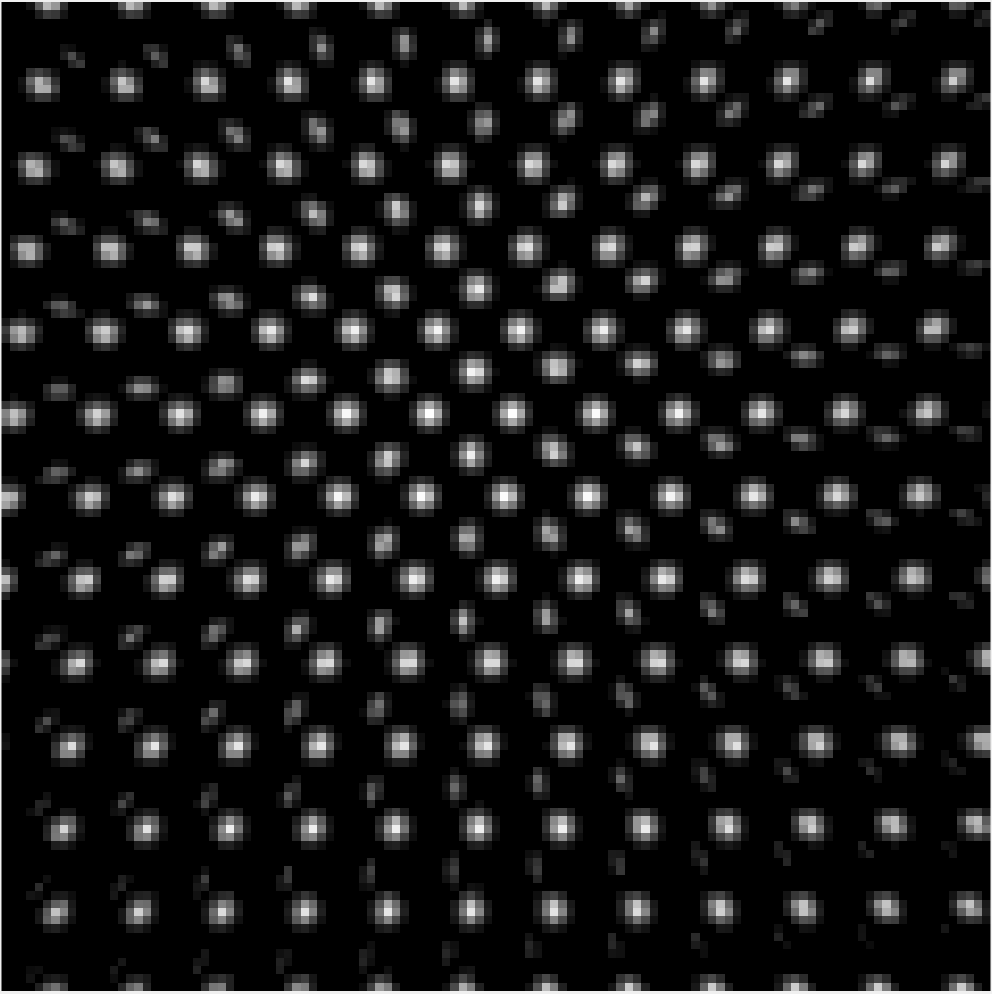} 
 \end{tabular} 
 \end{center}   
  \caption{[Importance of density restriction] (a)  Given image $I$ generated by two lattices $\mathcal{T}_{2-10i}\Lambda\langle10, e^{i17\pi/36}\rangle$, and $\mathcal{T}_{-3+5i}\Lambda\langle 9.9756 + 0.6976i,e^{i17\pi/36}\rangle$. If the score~(\ref{eval}) does not have the second term, we obtain a dense lattice $\mathcal{T}\tilde{\Lambda}$ in (b). With the density restriction, we get $\mathcal{T}\hat{\Lambda}$ in (c) which is the correct lattice pattern. (d) compares patterns in (b) and (c) where white pixels are $\mathcal{T}\hat{\Lambda}\cap \mathcal{T}\tilde{\Lambda}$, the green are $\mathcal{T}\tilde{\Lambda}-\mathcal{T}\hat{\Lambda}$, and the red are $\mathcal{T}\hat{\Lambda}-\mathcal{T}\tilde{\Lambda}$. (e) displays $\min\{\mathcal{T}\tilde{\Lambda},I\}$.}
  \label{scoreDemo}
\end{figure}

Superposed lattices can present various attractive patterns, and the formation simply involves scaling and rotating. Hexagonal lattices,   which share shape descriptors $\rho=\pm1/2+i\sqrt{3}/2$, produce the most variation. This is because, in the lattice space, their equivalent classes have the most elements, i.e., they are more symmetrical than the other lattices. In Figure~\ref{syn31}, 4 hexagonal lattices with $\beta = 10,12,13$ and $15$ are superposed, which displays a flake-like pattern. 
In Figure~\ref{syn33}, a flower-like pattern is formed by 4 hexagonal lattices with identical scale descriptor norm $|\beta|=11$ but different inclination angles: $53^{\circ}$, $-53^{\circ}$, $143^{\circ}$ and $-143^{\circ}$.

\begin{figure} 
\begin{center}
\begin{tabular}{cccc}
(a) Original image &&&\\
\includegraphics[scale=0.21]{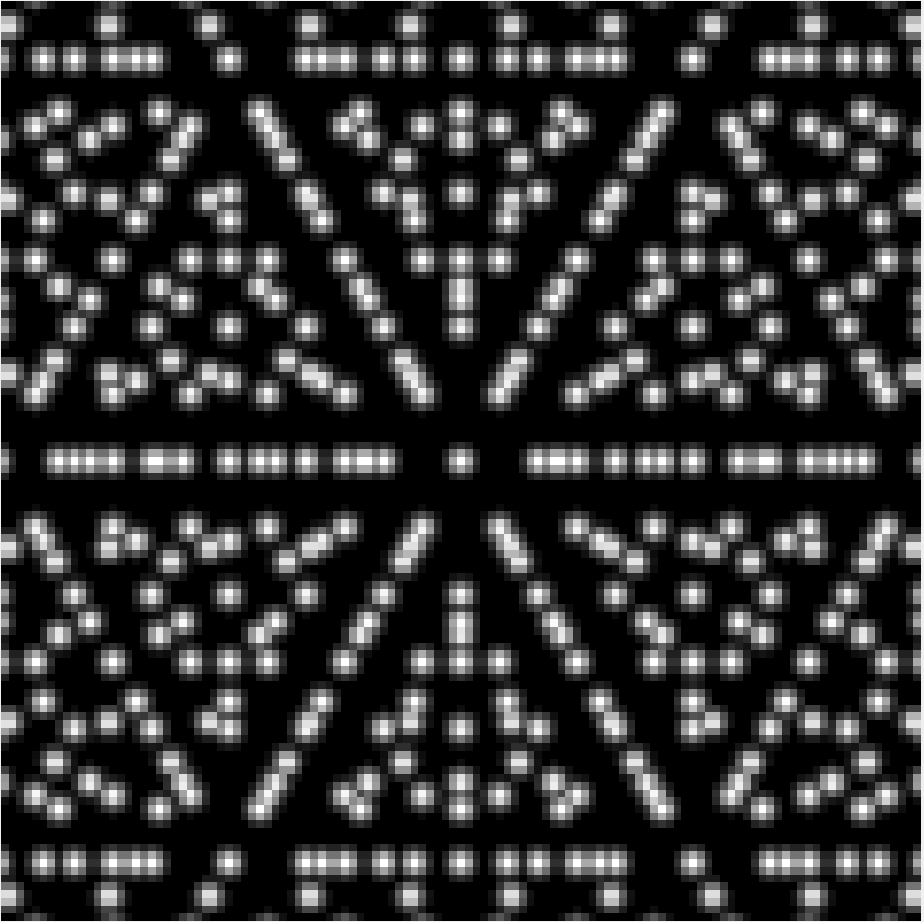}  & &&\\
 (b) $d_{\mathscr{L}}(\hat{\Lambda},\Lambda)=0.0120$ &
(c) $d_{\mathscr{L}}(\hat{\Lambda},\Lambda)=0.0230$ & 
(d) $d_{\mathscr{L}}(\hat{\Lambda},\Lambda)=0.0062$ &
(e) $d_{\mathscr{L}}(\hat{\Lambda},\Lambda)=0.0143$ \\
\includegraphics[scale=0.21]{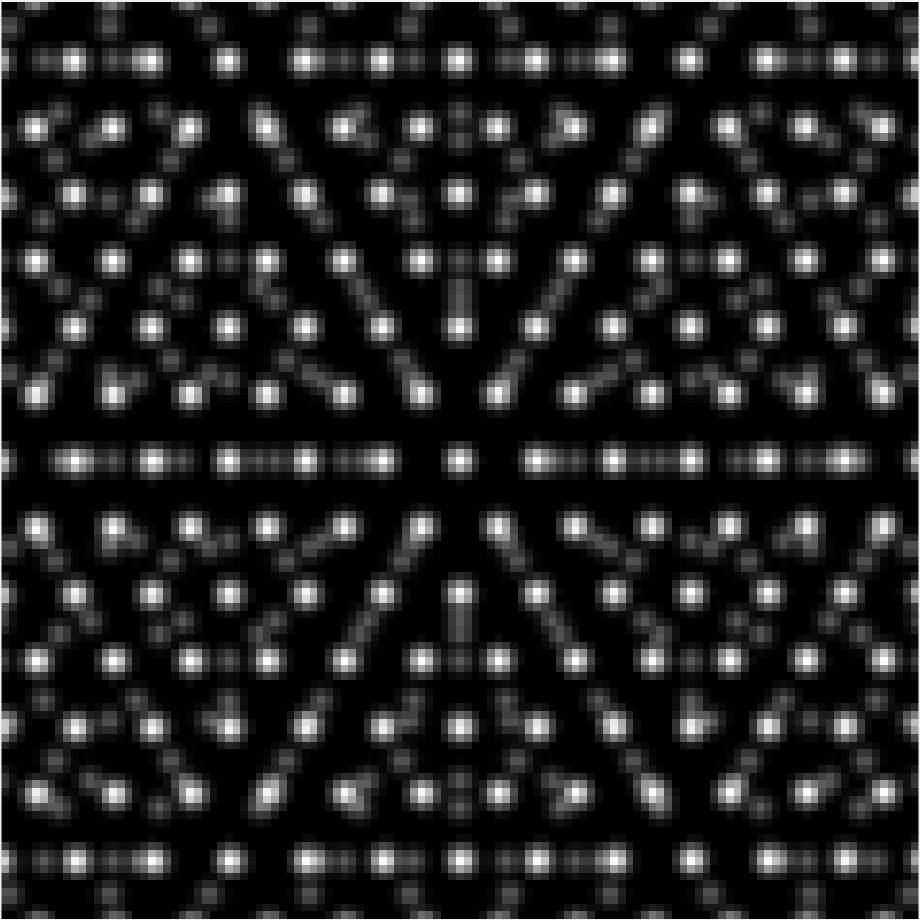} &
\includegraphics[scale=0.21]{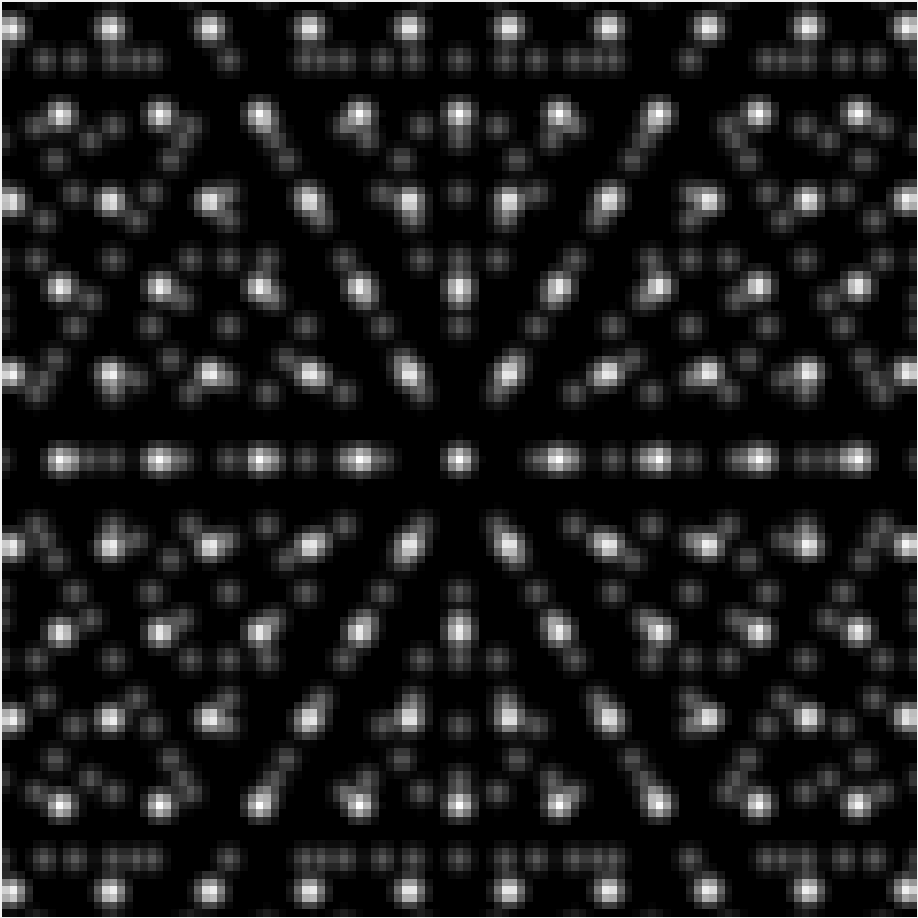} &
\includegraphics[scale=0.21]{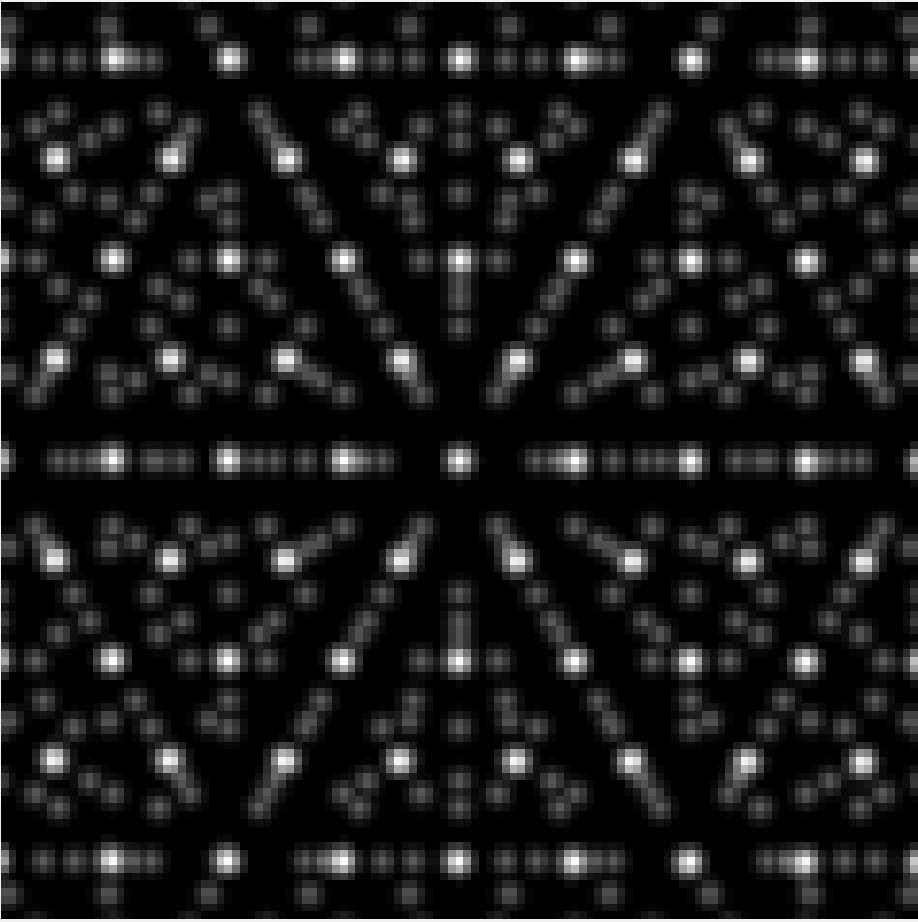} &
\includegraphics[scale=0.21]{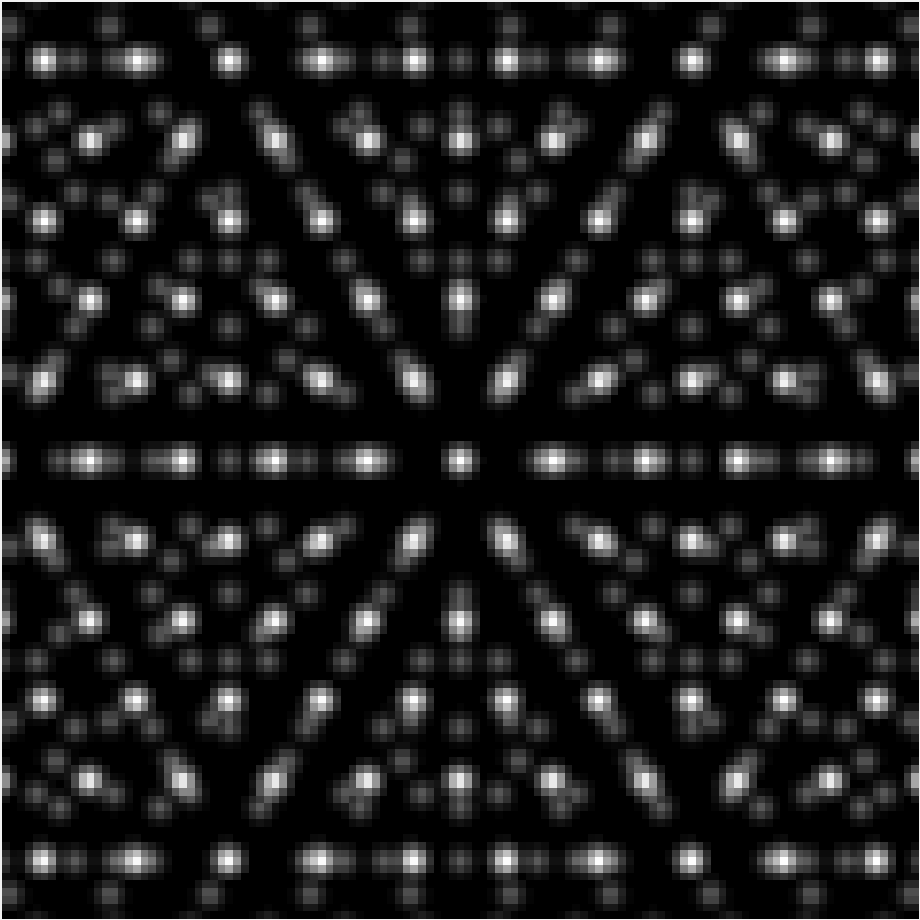} 
 \end{tabular} 
 \end{center}   
  \caption{[Flake-like pattern generated by lattices] Identification of four lattices in the flake-like pattern in (a). The hexagonal lattices have scale descriptor $\beta$ equal to (b) $10$ (c) $13$ (d) $15$ and (e) $12$. }
  \label{syn31}
\end{figure}

   

\begin{figure} 
\begin{center}
\begin{tabular}{cccc}
(a) Original image & &&\\
    \includegraphics[scale=0.177]{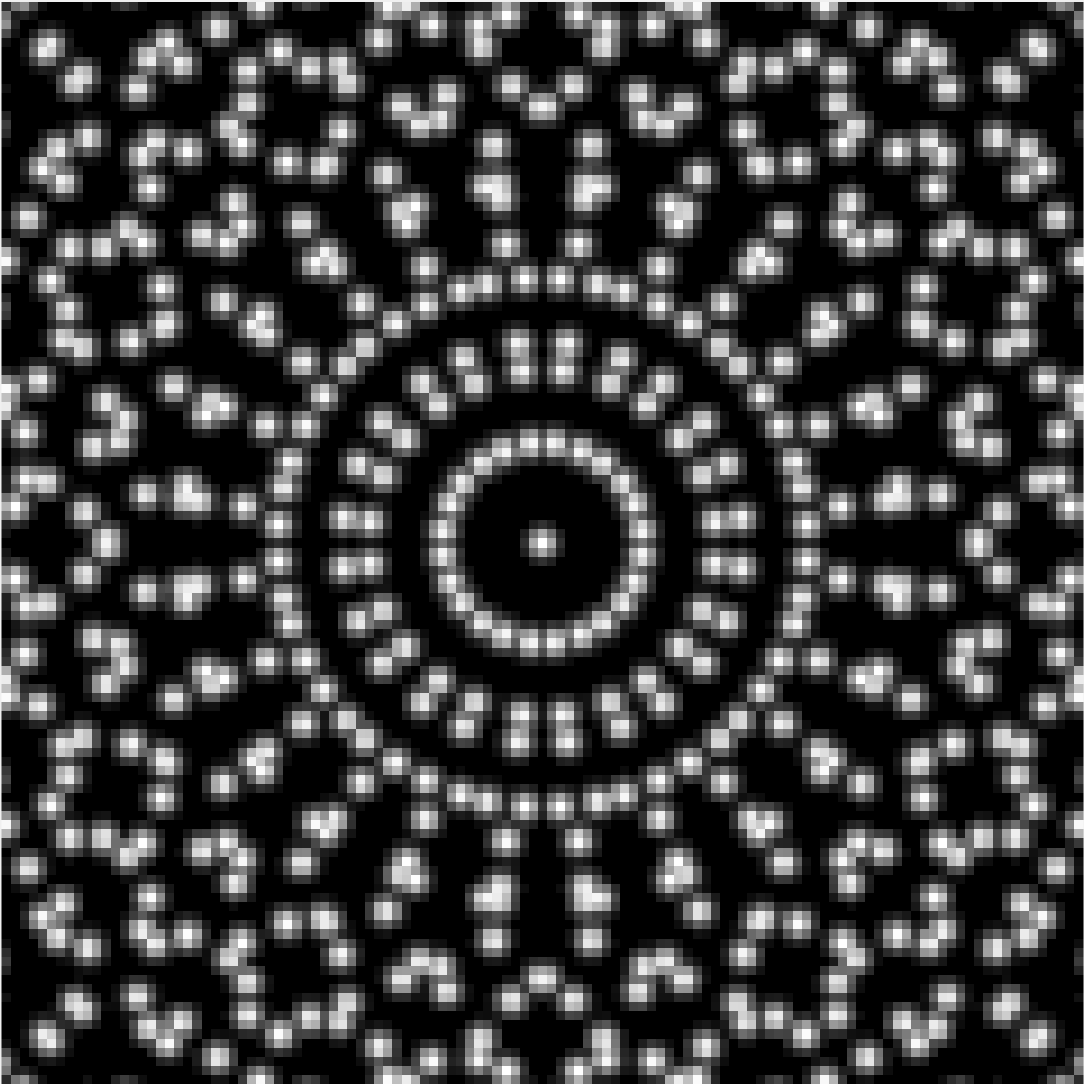} &&&\\
(b) $d_{\mathscr{L}}(\hat{\Lambda},\Lambda)=0.0152$ &
(c) $d_{\mathscr{L}}(\hat{\Lambda},\Lambda)=0.0092$ & 
(d) $d_{\mathscr{L}}(\hat{\Lambda},\Lambda)=0.0058$ &
(e) $d_{\mathscr{L}}(\hat{\Lambda},\Lambda)=0.0041$ \\
    \includegraphics[scale=0.177]{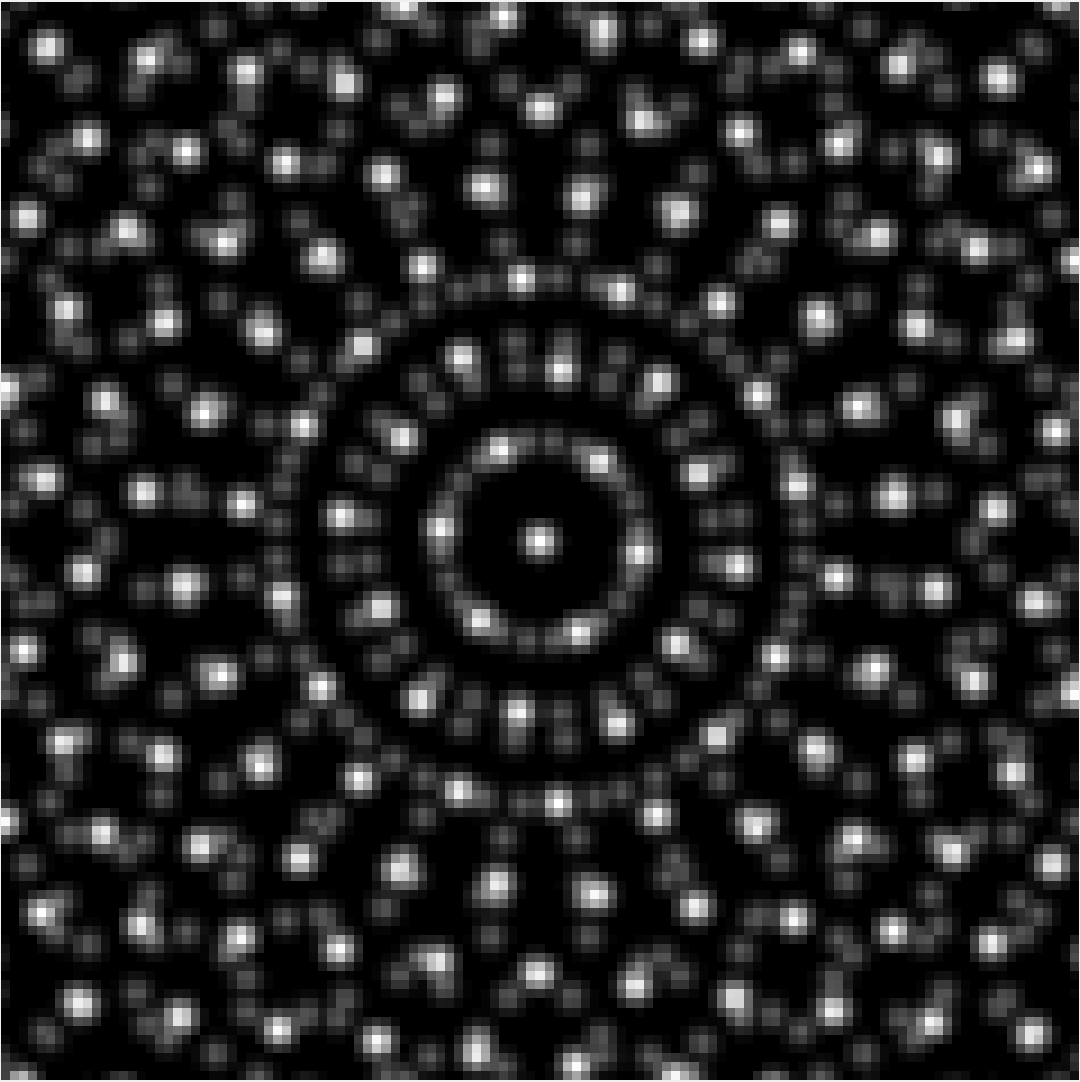} &
    \includegraphics[scale=0.177]{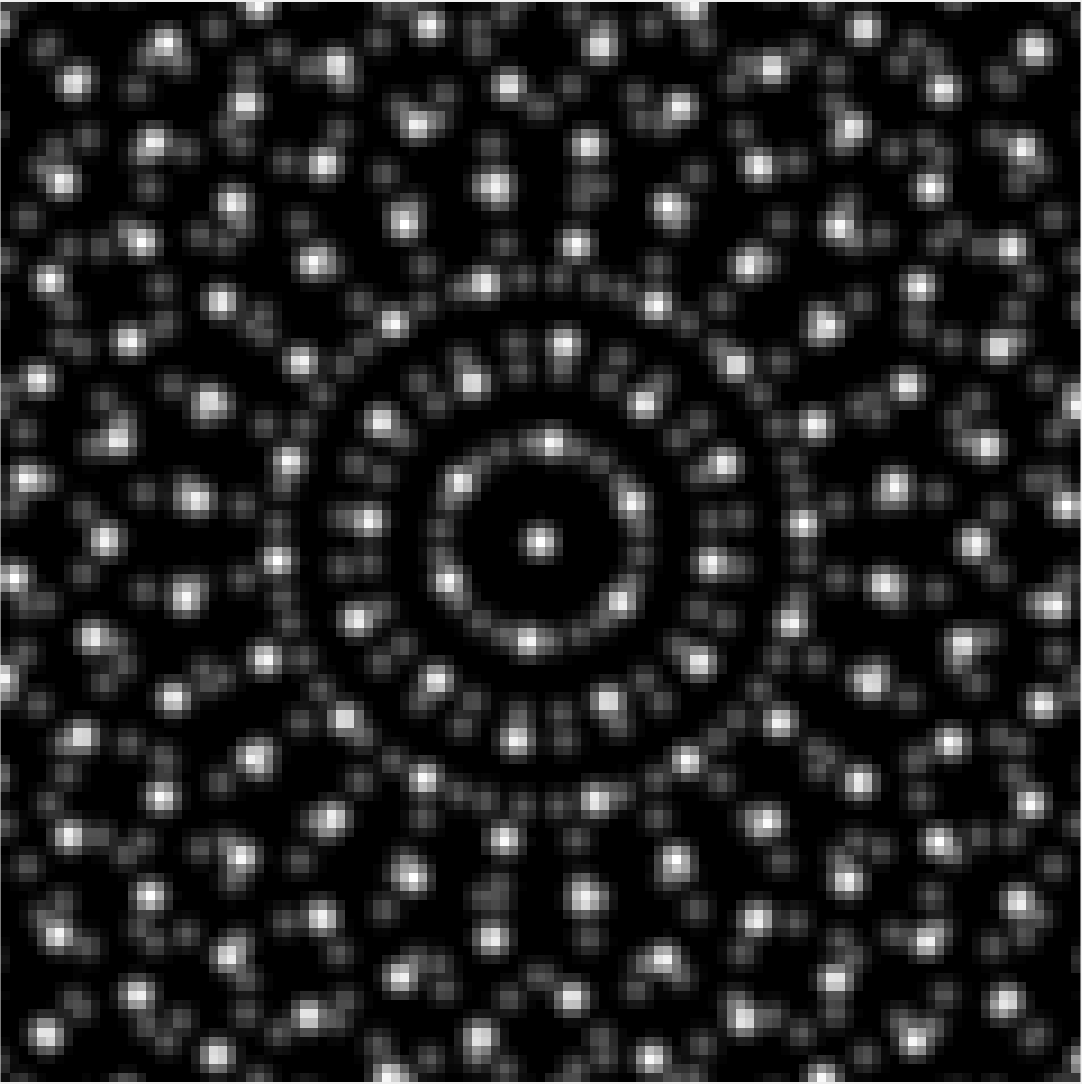} &
    \includegraphics[scale=0.177]{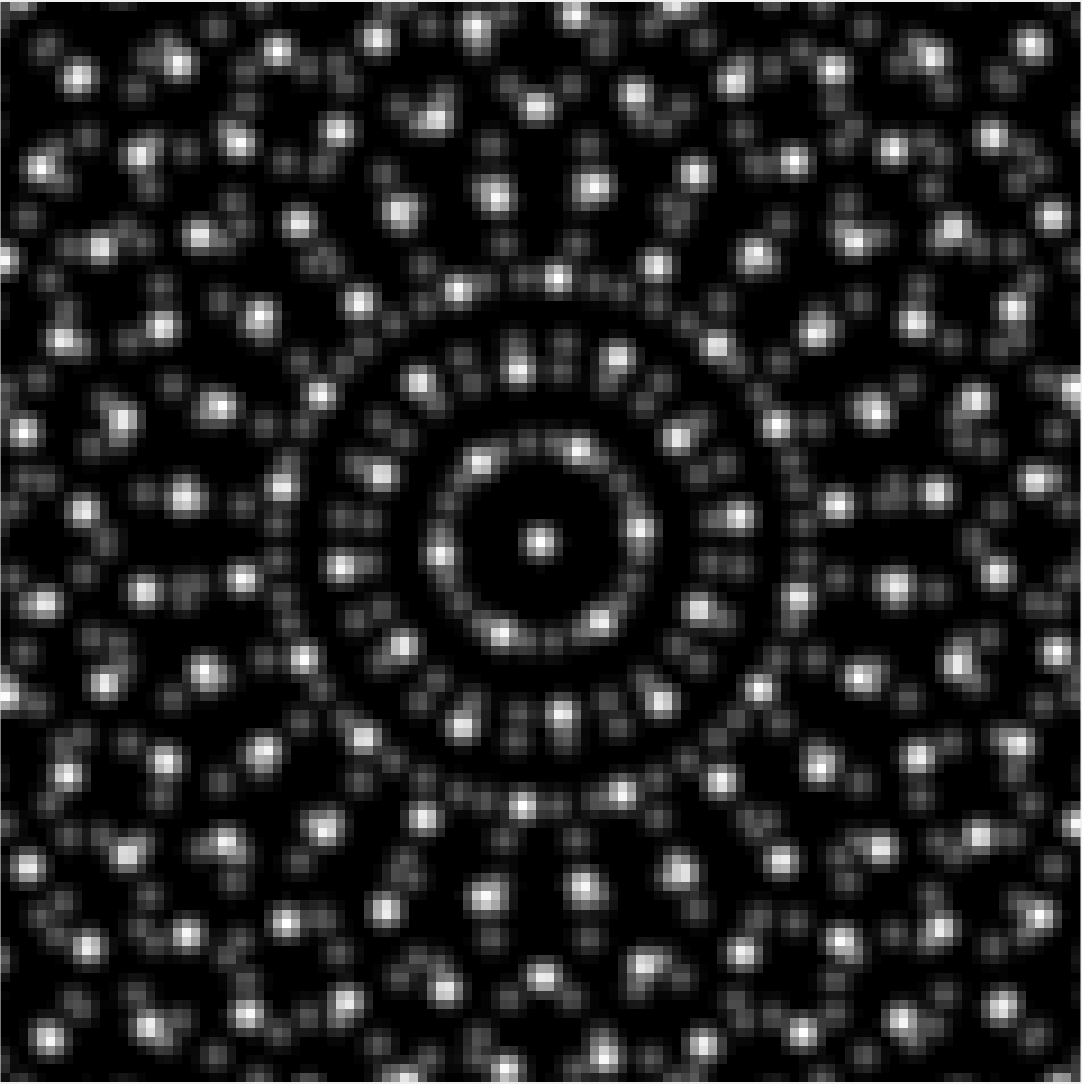} &
     \includegraphics[scale=0.177]{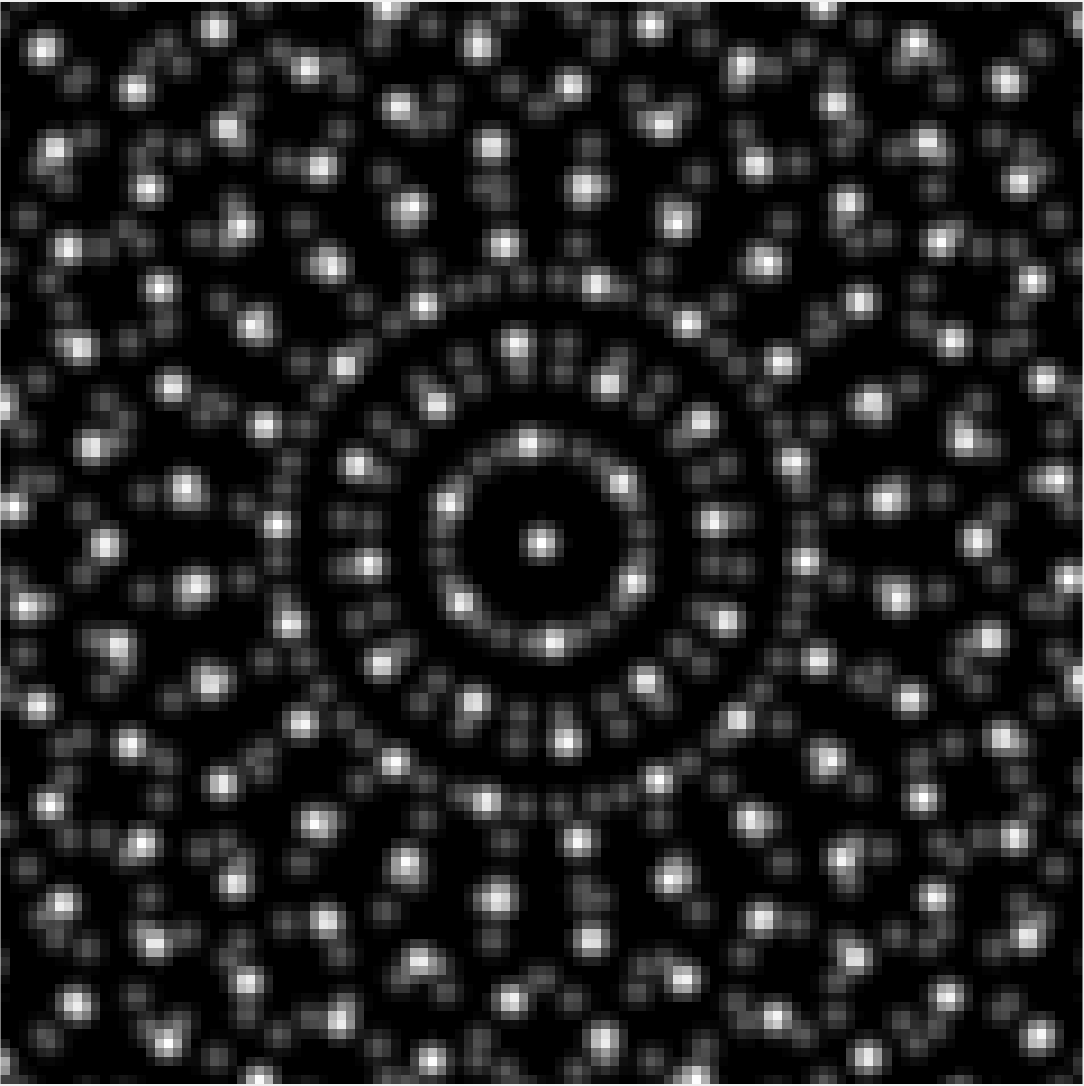} 
 \end{tabular} 
 \end{center}   
  \caption{[Flower pattern generated by lattices] Identification of four lattices in the rotational pattern in (a). The hexagonal lattices with scale descriptors having common norm $|\beta| = 11$ have inclination angles equal to (b) $53^{\circ}$ (c) $143^{\circ}$ (d) $-53^{\circ}$ and (e) $-143^{\circ}$. }
  \label{syn33}
\end{figure}
 
Figure~\ref{realimage} (a) displays a portion of an image in~\cite{zheng2014high}, which is acquired by performing SAED on $\text{Na}$-exfoliated single-layer $\text{MoS}_{\text{2}}$. As mentioned in the introduction, TMD monolayer has three layers of lattices.  In the top-view, $\text{S}$-atoms on the top overlap with those in the bottom. LISA successfully identifies the visible lattices. Figure~\ref{realimage} (b) shows a part of a HREM image of single layer of $\text{MoSe}_{\text{2}}$ from~\cite{rao2013graphene}. Underlying lattices with bright lattice particles are identified and separated by LISA, yet the dimmer lattice particles fail to be recognized. 
This can be addressed by lowering the threshold obtained by Otsu's method, image enhancing techniques, or sophisticated feature point detectors.

\begin{figure}
\begin{center}
\begin{tabular}{cccc}
(a) & (b) & (c) & (d)\\
\includegraphics[height = 1.3in]{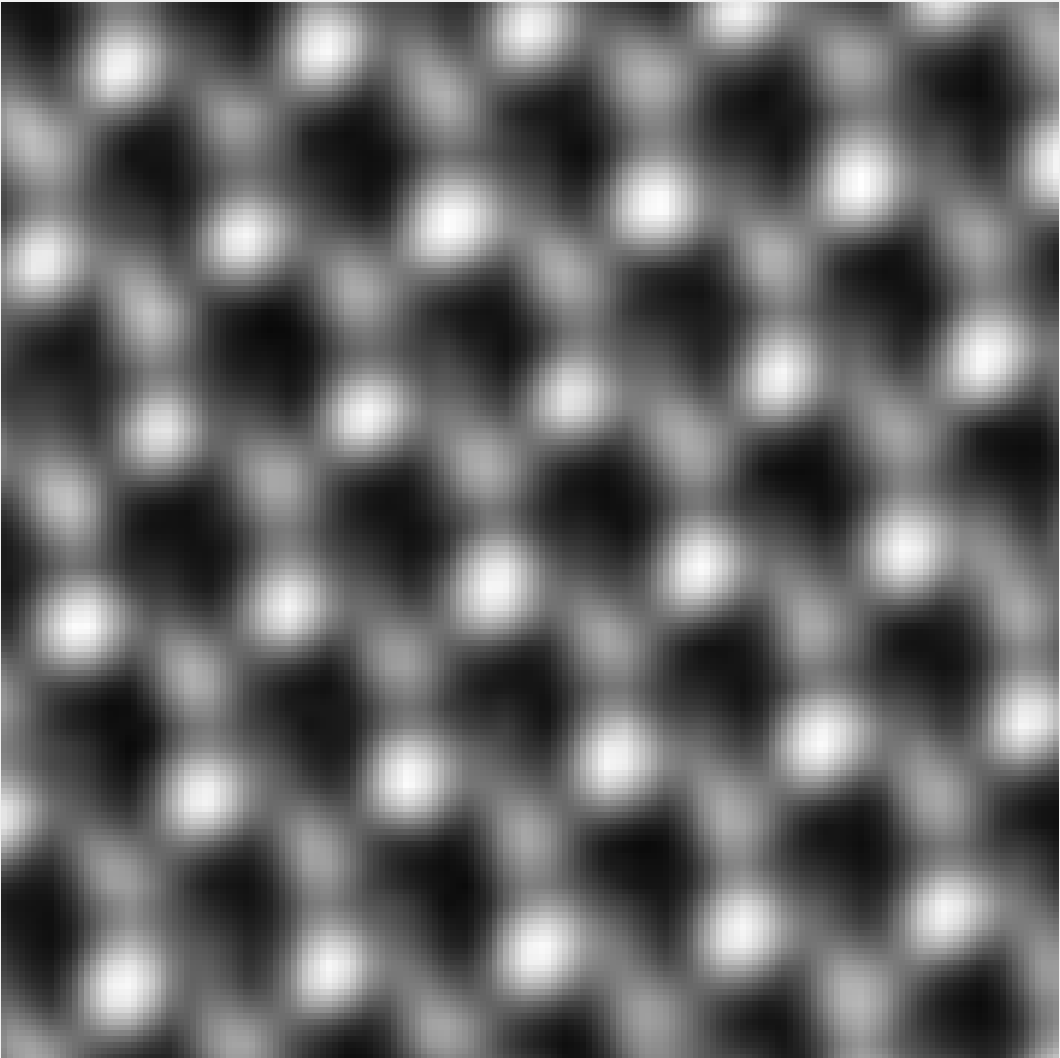} &
\includegraphics[height = 1.3in]{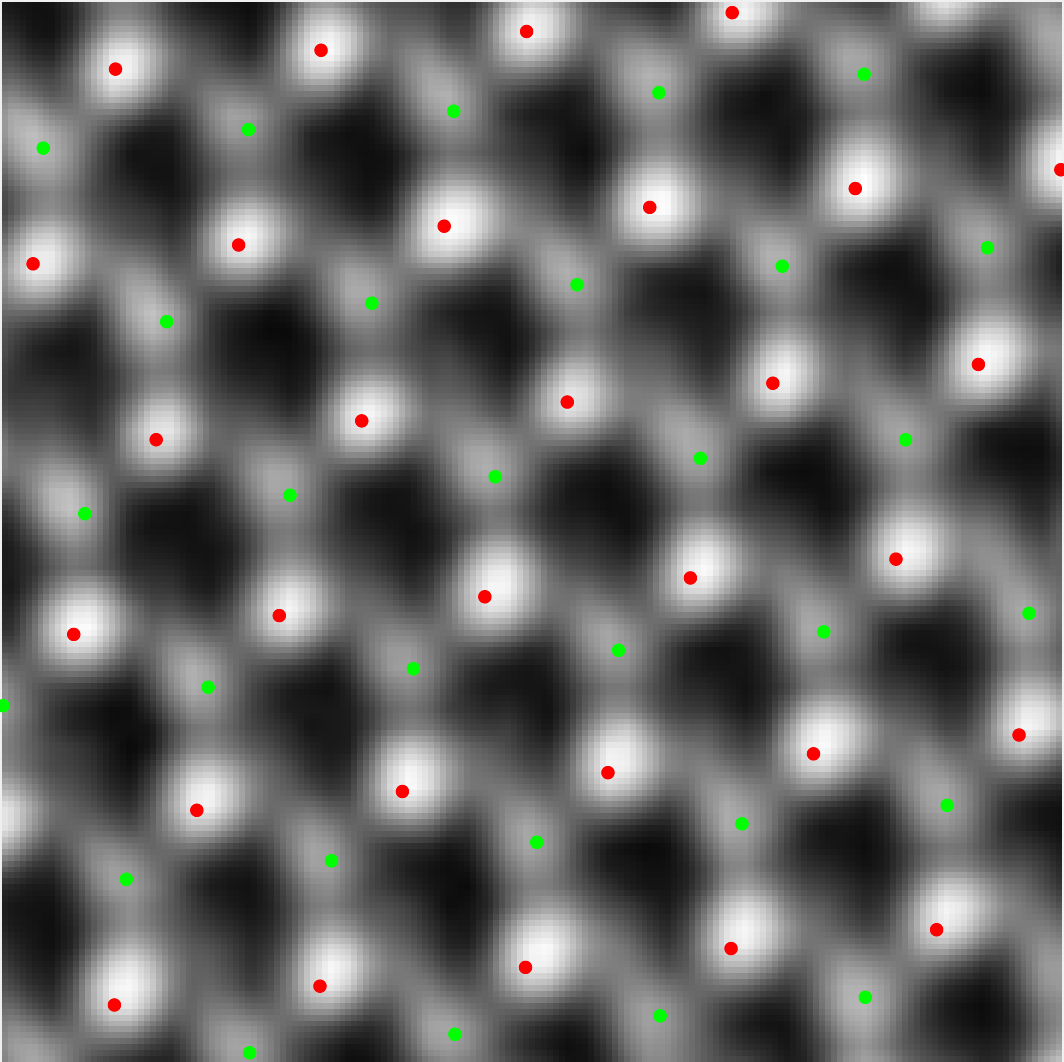}&
\includegraphics[height = 1.3in]{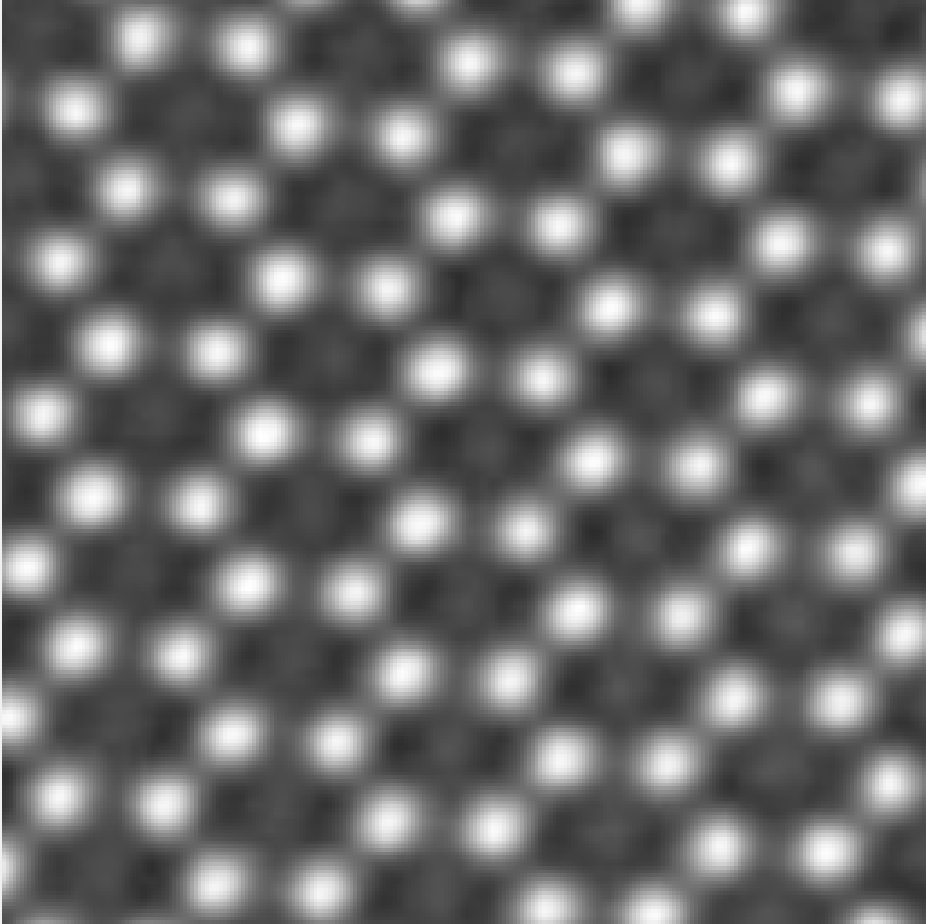} &
\includegraphics[height = 1.3in]{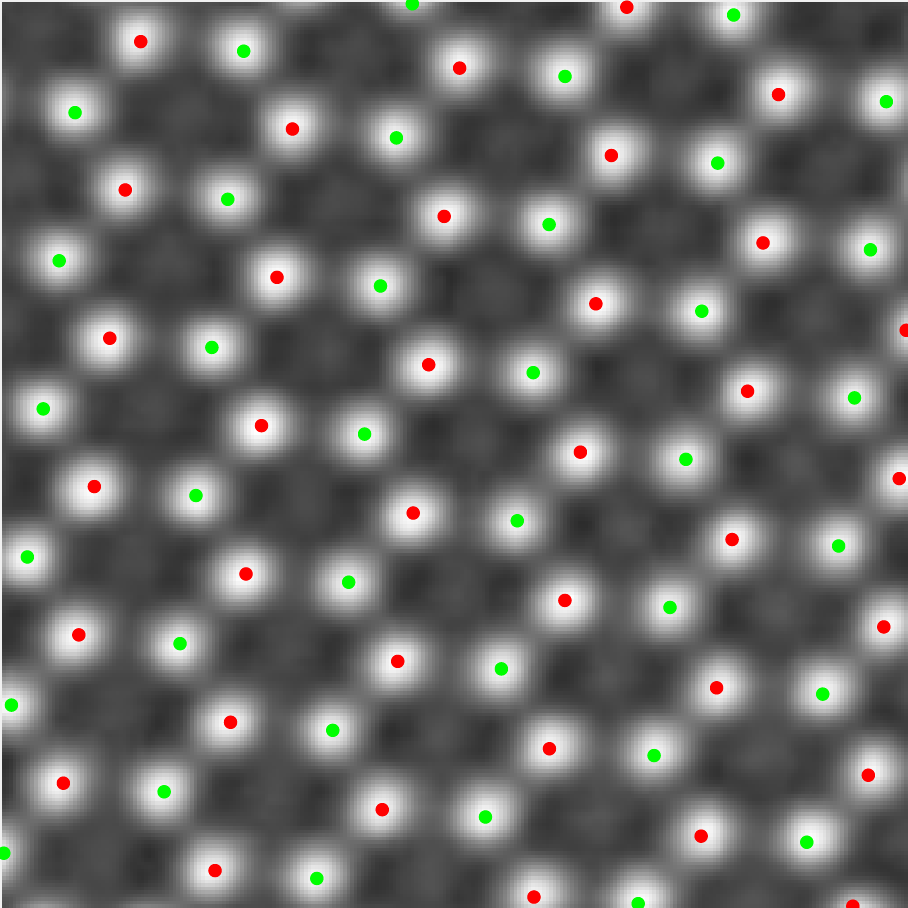}
\end{tabular}
\end{center}
  \caption{[LISA on real images] The underlying lattices in TMD monolayers (a) and (c) detected and separated by LISA are presented in (b) and (d) respectively, using different colors. (a) is cropped from~\cite{zheng2014high} Figure 3 (c), and (c) is cropped from~\cite{rao2013graphene} Figure 1 (c).}\label{realimage}
\end{figure}

We also test LISA on an important class of images in material sciences focusing on the grain boundaries. Patches of regular lattices are identified using LISA, then by directly comparing with the preprocessed image,  regions of homogeneous patterns are separated. Consequently, the grain boundary is revealed. Figure~\ref{gbmerge}(a) shows a part of an image from \cite{biro2013grain}, where a grain boundary is formed in the graphene grown by chemical vapor deposition (CVD). LISA detects two lattices as in (b) and (c). In (d), particles in (b) shared with the preprocessed image is colored green, those in (c) are red, and the white pixel indicates where (b) and (c) intersect. This example shows the potential applications of LISA besides superlattice separation. 
\begin{figure} 
 \begin{center}
\begin{tabular}{cccc}
(a)  & (b) & (c) & (d) \\
 \includegraphics[height=1.3in]{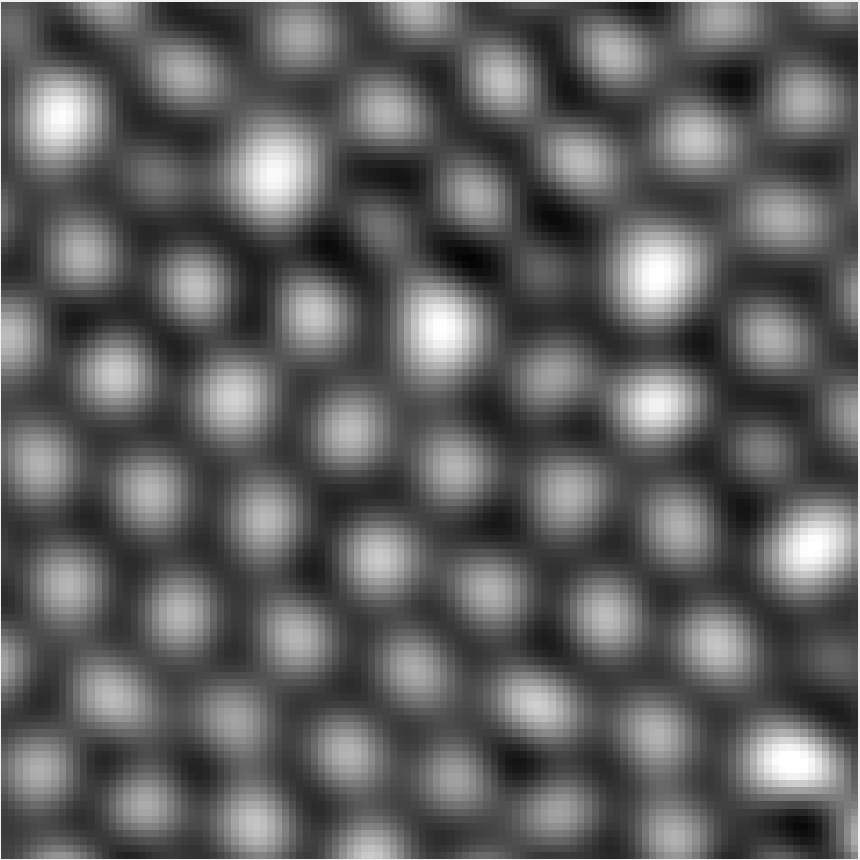} &
 \includegraphics[height=1.3in]{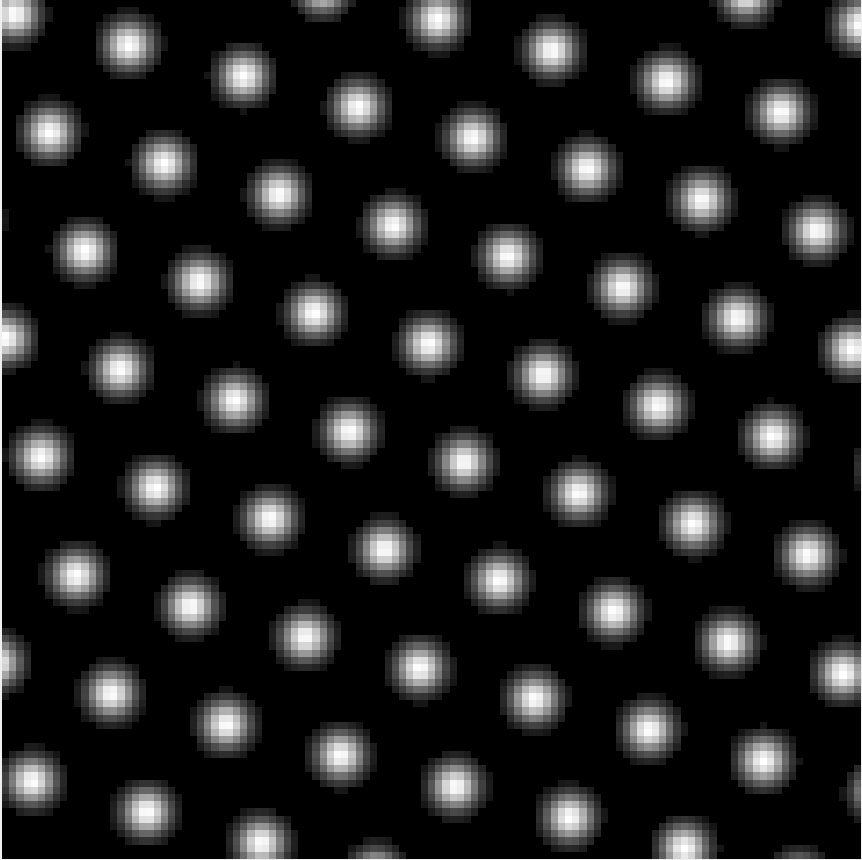} &
 \includegraphics[height=1.3in]{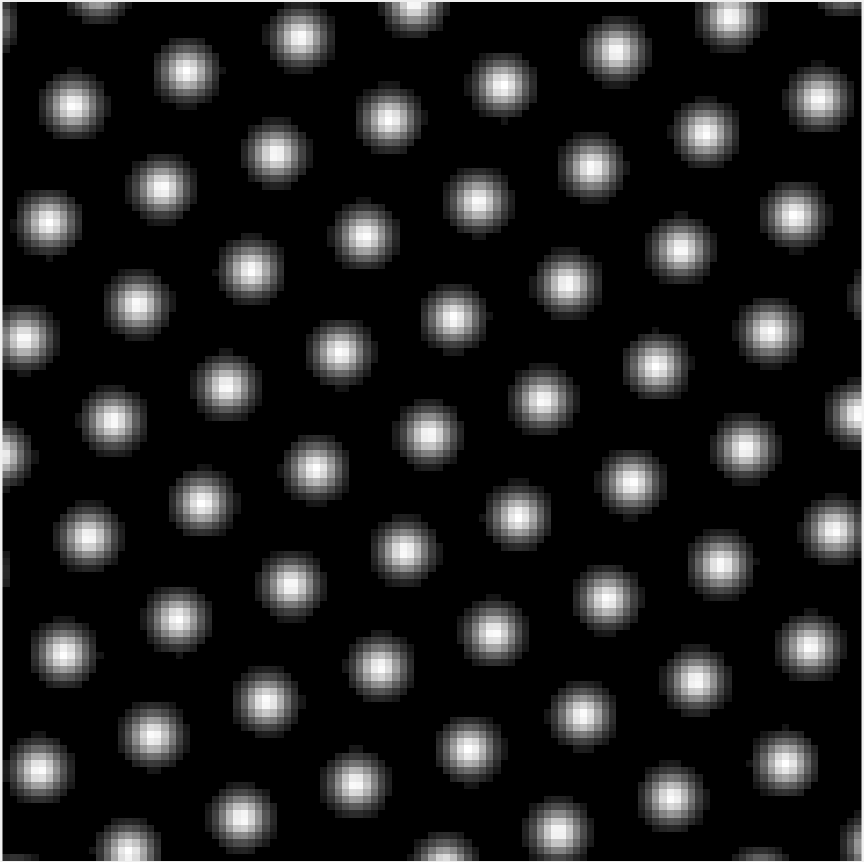} &
 \includegraphics[height=1.3in]{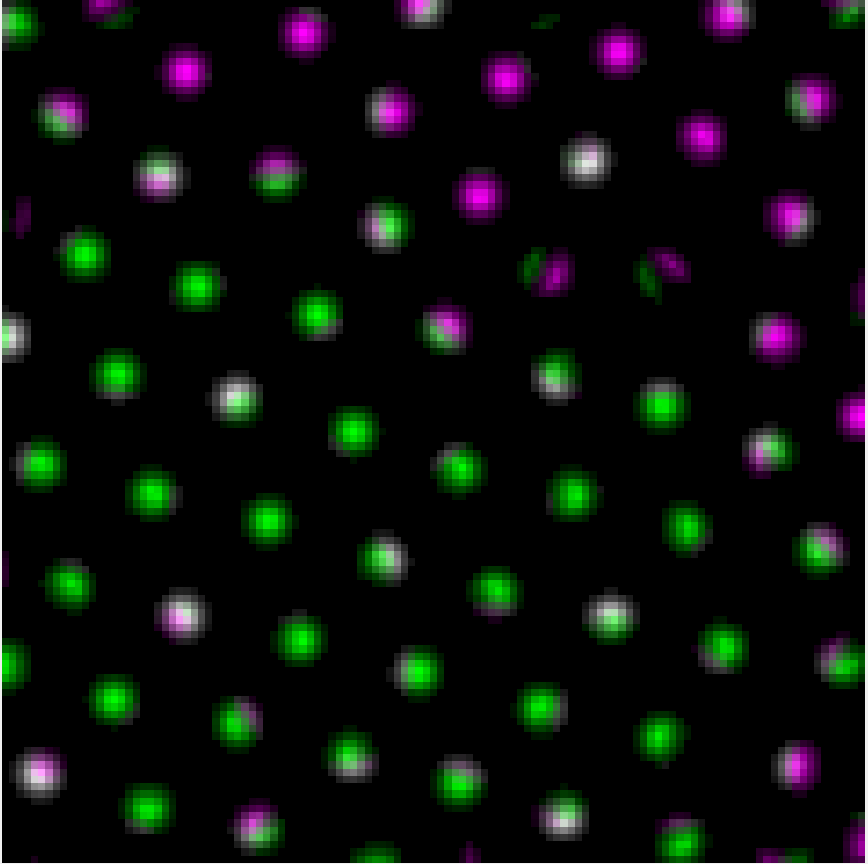}
 \end{tabular} 
 \end{center}   
  \caption{[LISA on grain boundary] A grain boundary in graphene formed by CVD is shown in (a). LISA detects the lattice in (b) $\mathcal{T}_{-1.3794+9.7510i}\Lambda\langle-10.9881 -12.1163i,-0.4579 + 0.8950i\rangle$ and in (c) $\mathcal{T}_{9.6287+9.5640i}\Lambda\langle-15.7326 - 4.7420i,0.4813 + 0.8800i\rangle$. (d) shows the homogeneous regions in different color, and the grain boundary is revealed. (a) is adjusted from \cite{biro2013grain} Figure 15 (a)}
  \label{gbmerge}
\end{figure}

Finally, we investigate the efficiency of LISA.  We focus on three major factors contributing to the runtime of LISA: image size, number of connected components on the spectrum surface ($J$ in Table~\ref{LISAtabel}), and the number of stabilizing iteration ($K$ in Table~\ref{LISAtabel}). Fixing a superlattice consisting of two lattices $\mathcal{T}_{0}\Lambda\langle 12,i\rangle$ and $\mathcal{T}_{0}\Lambda\langle 11.2763 + 4.1042i, e^{i4\pi/9}\rangle$, the CPU times (in second) of LISA are plotted against each one of these factors when the other two are controlled. The results show that the complexity of LISA depends linearly on $J$ and $K$, and quadratically on the image width. This is consistent with the analysis in \cite{splineRadon}, where the complexity of B-spline convolution-based Radon transform is proportional to the image size, i.e., image width times image length.
\begin{figure}
\begin{center}
\begin{tabular}{ccc}
(a) & (b) & (c) \\
\includegraphics[height = 1.5in]{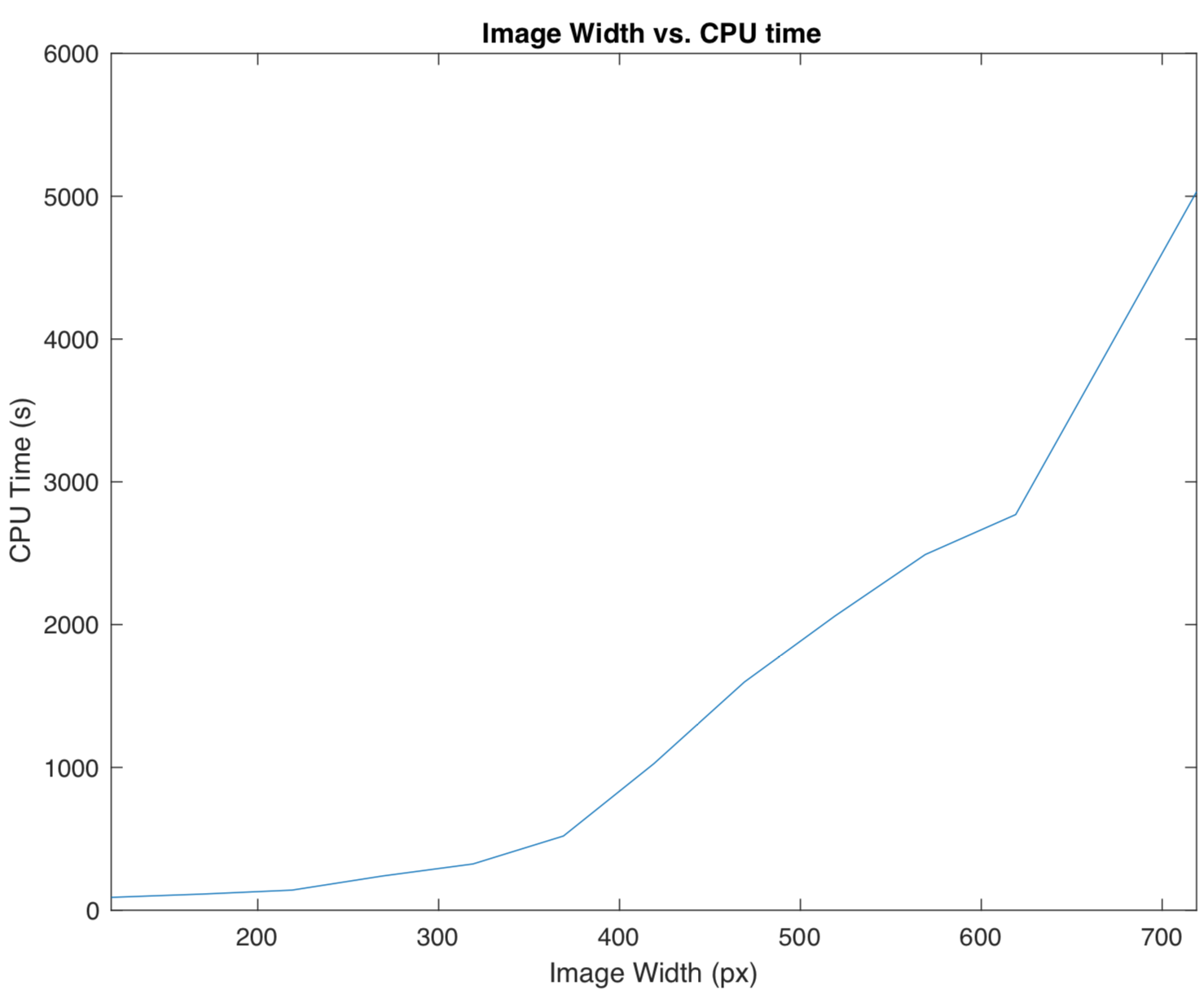} &
\includegraphics[height = 1.5in]{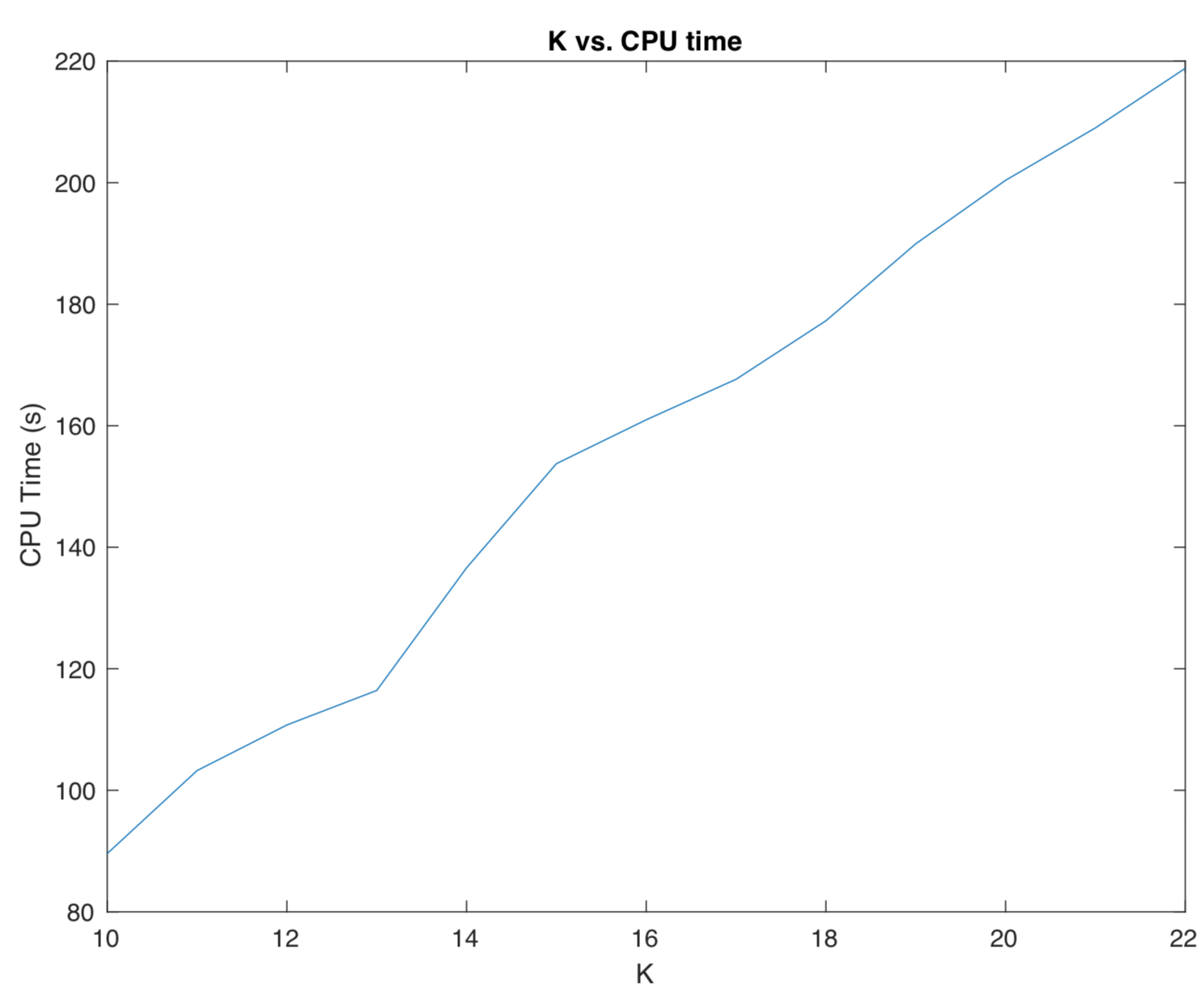}&
\includegraphics[height = 1.5in]{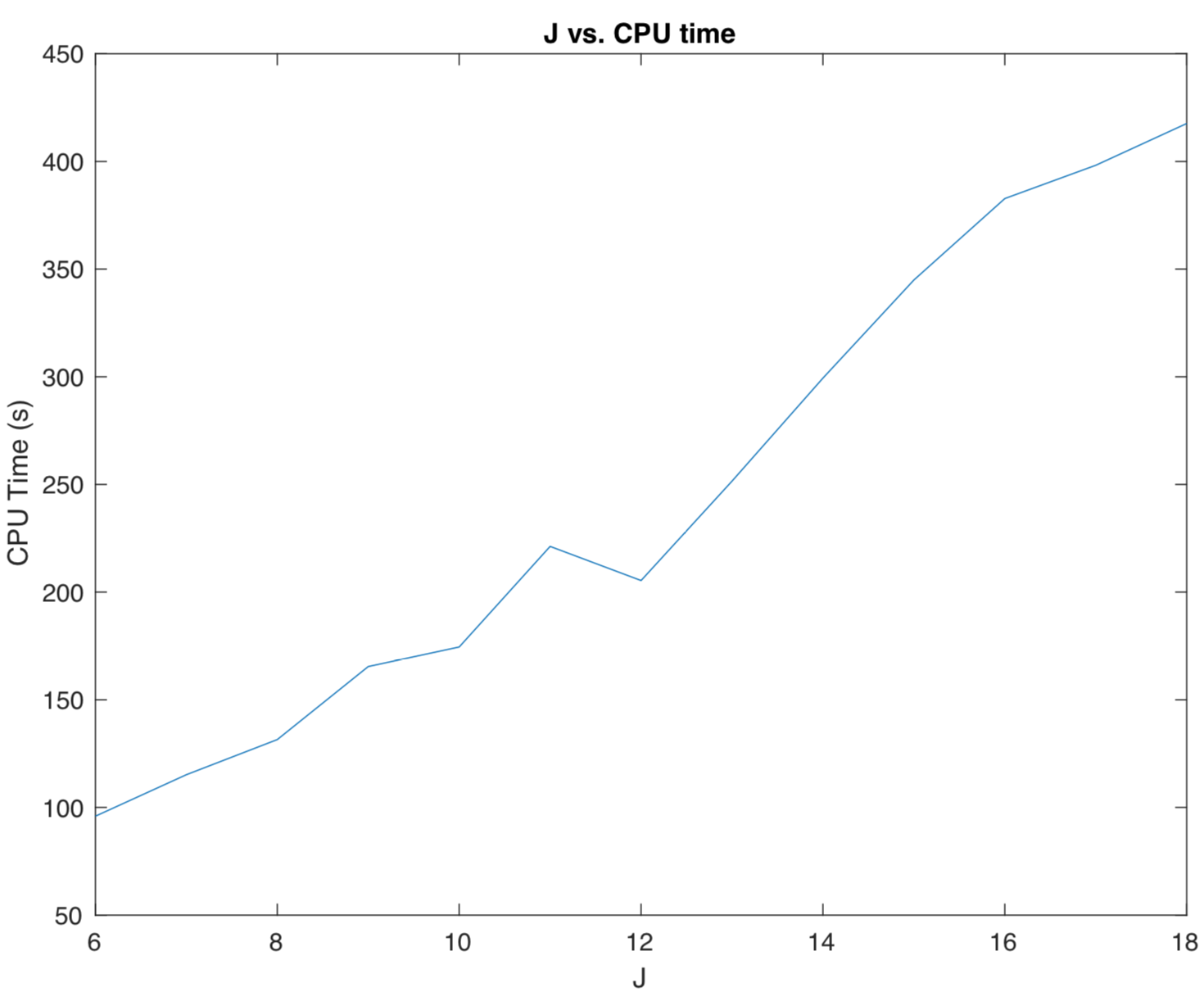}
\end{tabular}
\end{center}
\caption{[CPU time of LISA] Fixing two lattices, the base image width is $119$, $K=10$, $J=6$. Changing image size (a), the number of stabilizing iteration $K$ (b) or the number of connected components $J$ (c),  while keeping the other two as in the base case, the CPU times (in sec) of LISA are plotted respectively.}\label{runtimetest}
\end{figure}

\section{Summary} \label{sec:conc}

This paper addresses two questions of lattice identification and separation in superlattices. The first one is: \textit{What is a proper space where any two lattices can be compared quantitatively?} Starting from the positive minimal bases, we exploit the modular group theory and Poincar\'{e} metric to define a lattice space $\mathscr{L}$ with a natural metric structure. This new definition provides rich geometrical intuition for the collection of lattices. Computation of the metric $d_{\mathscr{L}}$ yields compact and visually consistent information about differences between lattice patterns. Compatible with wallpaper group theory, $\mathscr{L}$ provides finer classifications, which is more suitable for the purpose of measurement.\par
 The second question is: \textit{How to practically identify and separate lattices from a superlattice?} We introduce the algorithm LISA. Without prior knowledge of the lattices and number of layers of superposed lattices, LISA sequentially identifies and extracts lattice patterns until the remainder has insufficient intensity. We show the importance of density restriction when evaluating lattice candidates, which are indicated by  pairs of high responses on the spectrum surface. This evaluation method renders LISA's robustness against moir\'{e} patterns. An analytical framework is presented to consider the effects of relative translations and Gaussian perturbation. The metric space $(\mathscr{L},d_{\mathscr{L}})$ allows more discussion about special families of lattices, and its geometrical properties are interesting to explore. LISA produces a series of regular lattice patterns, which can be extended to the separation of near-regular lattices, and identification of grain boundaries.

\begin{appendices}

\section{Sub-lattices and Parent-lattices in Lattice Space} \label{A:subL}

In section~\ref{sec:lattice}, we regard the collection of M\"{o}bius transforms as a group, and exploit its subgroup, the modular group $\Gamma$, to address the problems of basis representation. More generally, the group of M\"{o}bius transforms has a monoid structure, i.e., inverse elements are not required compared with the group definition. 
In this section, we explore further the value of M\"{o}bius transforms by investigating one of its submonoids, $M_{2}(\mathbb{Z})$. We present the close relation between sub-lattices of a lattice and the monoid $M_{2}(\mathbb{Z})$. A one-to-one correspondence between sub-lattices and parent-lattices of a lattice is then proved to extend this relation to that between parent-lattices and $M_{2}(\mathbb{Z})$.

This section also has a practical significance. In section~\ref{sec:LISA}, when evaluating lattice candidates, the confusion caused by inhomogeneous texton regions and moir\'{e} effect is eliminated by attaching a density restriction in (\ref{eval}).  The necessity of the density term is theoretically driven from the framework of this section.

\subsection{Sub-lattices and Parent-lattices using Descriptors}  The classical notions of sub- and parent-lattices can be paraphrased using descriptors $\beta$ and $\rho$.
\begin{Def}[Sub-lattice]\label{sublatticedef}
	Let $\Lambda=\Lambda\langle\beta,\rho\rangle$ and $\Lambda'=\Lambda\langle\beta',\rho'\rangle$ be two lattices. We say that $\Lambda'$ is a sub-lattice of $\Lambda$, if there exists $(k_{1},k_{2},k_{3},k_{4})\in\mathbb{Z}^{4}$ with $k_{1}k_{4}-k_{2}k_{3}>0$ such that:
	\begin{align*}
	\begin{cases}
		\beta'=\beta(k_{1}+k_{2}\rho)\\
		\rho' = (k_{3}+k_{4}\rho)/(k_{1}+k_{2}\rho)
	\end{cases}.
	\end{align*}
And $\Lambda\langle\beta',\rho'\rangle$ is said to be a sub-lattice of $\Lambda\langle\beta,\rho\rangle$ induced by $(k_{1},k_{2},k_{3},k_{4})$.
\end{Def}
This definition is derived from the equivalent expression:
\begin{align*}
	\begin{cases}
		\beta'=k_{1}\beta+k_{2}\beta\rho\\
		\beta'\rho'=k_{3}\beta+k_{4}\beta\rho
	\end{cases}
\end{align*}
with $k_{1}k_{4}-k_{2}k_{3}>0$, which says that the basis for a sub-lattice comes from a non-degenerate linear combination of the basis of the original lattice using integer coefficients. The transformations\begin{align*}
	z\mapsto\frac{k_{3}+k_{4}z}{k_{1}+k_{2}z},~\{k_{i}\}_{i=1}^{4}\subset\mathbb{Z}, \text{such that~}k_{1}k_{4}-k_{2}k_{3}>0,~\text{for any} z\in\mathcal{H}.
\end{align*}
form a monoid with function composition, which is denoted by $M_{2}(\mathbb{Z})$. In the category of monoids, $\text{PSL}_{2}(\mathbb{Z})\leq M_{2}(\mathbb{Z})\leq\text{PGL}_{2}(\mathbb{Z})$, hence the discussion here is a generalization of section~\ref{sec:lattice}. Symmetrically, we define parent-lattices as follows:
\begin{Def}[Parent-lattice]\label{parentlatticedef}
Let $\Lambda=\Lambda\langle\beta,\rho\rangle$ and $\Lambda'=\Lambda\langle\beta',\rho'\rangle$ be two lattices. We say that $\Lambda'$ is a parent-lattice of $\Lambda$, if there exists $(k_{1},k_{2},k_{3},k_{4})\in\mathbb{Z}^{4}$ with $v:=1/(k_{1}k_{4}-k_{2}k_{3})>0$ such that:
	\begin{align*}
	\begin{cases}
		\beta'=v\beta(k_{1}+k_{2}\rho)\\
		\rho' = (k_{3}+k_{4}\rho)/(k_{1}+k_{2}\rho)
	\end{cases}.
	\end{align*}
$\Lambda'$ is said to be a parent-lattice of $\Lambda$ induced by $(k_{1},k_{2},k_{3},k_{4})$.	
\end{Def}
This definition is equivalent to the normal notion of parent-lattice. Suppose $\Lambda'$ is a parent-lattice of $\Lambda$ in the normal sense, then there exist real numbers $a,b,c,d$ with $u:=ad-cb>0$ such that:
\begin{align}
\begin{cases}
\beta'=a\beta+b\beta\rho\\
\beta'\rho'=c\beta+d\beta\rho	
\end{cases}\iff \begin{cases}
\beta=d\beta'/u-b\beta'\rho'/u\\
\beta\rho=-c\beta'/u+a\beta'\rho'/u	
\end{cases}\label{eq2}.
\end{align}
Since symmetrically $\Lambda$ is a sub-lattice of $\Lambda'$, we have that $k_{4}=d/u,k_{2}=b/u,k_{3}=c/u$ and $k_{1}=a/u$ are actually integers.	Therefore,  (\ref{eq2}) becomes:
\begin{align*}
\begin{cases}
\beta'=\beta(k_{1}+k_{2}\rho)u\\
\rho'=(k_{3}+k_{4}\rho)/(k_{1}+k_{2}\rho)	
\end{cases}.
\end{align*}
But also notice that $k_{1}k_{4}-k_{2}k_{3}=(ad-cb)/u^{2}=1/u$, so we prove our claim. After checking the equivalence relations among descriptors, we have the following proposition,
\begin{Prop}
Given a lattice $\Lambda=\Lambda\langle\beta,\rho\rangle$, 	we have the following one-to-one correspondence:
\begin{align}
\big\{\text{Sub-lattices of }\Lambda\big\}\overset{\varphi}{\longleftrightarrow}	 \big\{\text{Parent-lattices of }\Lambda\big\}\label{correspondence}
\end{align}
via $\varphi$ well-defined as follows: if $(k_{1},k_{2},k_{3},k_{4})$ determines a sub-lattice via~(\ref{sublatticedef}), then they determines a parent-lattice via~(\ref{parentlatticedef}).
\end{Prop}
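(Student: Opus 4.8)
The plan is to recast the tuple-defined map $\varphi$ as an intrinsic rescaling, after which well-definedness and bijectivity become transparent. First I would observe that the two definitions differ only in the scale descriptor: for a fixed tuple $(k_{1},k_{2},k_{3},k_{4})$ with $n:=k_{1}k_{4}-k_{2}k_{3}>0$, the shape descriptors agree, both equal to $(k_{3}+k_{4}\rho)/(k_{1}+k_{2}\rho)$, while the parent's scale descriptor is $v=1/n$ times the sub-lattice's. Since rescaling a lattice by a nonzero scalar $c$ acts as $\Lambda\langle\beta',\rho'\rangle\mapsto\Lambda\langle c\beta',\rho'\rangle$ (multiplying every lattice point by $c$ multiplies the basis by $c$ and leaves the ratio $\rho'$ fixed), this identifies $\varphi(\Lambda_{s})=\frac{1}{n}\Lambda_{s}$, where $\Lambda_{s}$ is the sub-lattice induced by the tuple.

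Next I would verify that $n$ equals the index $[\Lambda:\Lambda_{s}]$, so that the image depends only on the intrinsic lattice $\Lambda_{s}$ rather than on the chosen tuple. Using the fundamental-volume formula $\det\Lambda\langle\beta,\rho\rangle=|\beta|^{2}\text{Im}(\rho)$ together with the transformation law $\text{Im}(\rho')=n\,\text{Im}(\rho)/|k_{1}+k_{2}\rho|^{2}$ for the M\"obius action, a short computation yields $\det\Lambda_{s}=n\det\Lambda$, whence $n=[\Lambda:\Lambda_{s}]$. This settles well-definedness: any two tuples inducing the same $\Lambda_{s}$ are related by a left unimodular change of basis, and the positivity of both determinants forces that matrix into $\mathrm{SL}_{2}(\mathbb{Z})$, so the two tuples share the same $n$ and hence the same image $\frac{1}{n}\Lambda_{s}$.

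I would then check that $\frac{1}{n}\Lambda_{s}$ is genuinely a parent-lattice of index $n$. Because $\Lambda/\Lambda_{s}$ is a finite abelian group of order $n$, every $x\in\Lambda$ satisfies $nx\in\Lambda_{s}$, i.e. $n\Lambda\subseteq\Lambda_{s}$, equivalently $\Lambda\subseteq\frac{1}{n}\Lambda_{s}$; and the volume count $\det(\frac{1}{n}\Lambda_{s})=\frac{1}{n^{2}}\det\Lambda_{s}=\frac{1}{n}\det\Lambda$ gives $[\frac{1}{n}\Lambda_{s}:\Lambda]=n$. For the inverse I would define $\psi(\Lambda_{p})=m\Lambda_{p}$ with $m:=[\Lambda_{p}:\Lambda]$ and run the symmetric argument (paralleling the verification preceding the proposition that Definition~\ref{parentlatticedef} coincides with the classical notion) to see that $m\Lambda_{p}$ is a sub-lattice of index $m$. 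Finally $\psi\circ\varphi=\mathrm{id}$ and $\varphi\circ\psi=\mathrm{id}$ follow from composing rescalings, since $\psi(\frac{1}{n}\Lambda_{s})=n\cdot\frac{1}{n}\Lambda_{s}=\Lambda_{s}$ and $\varphi(m\Lambda_{p})=\frac{1}{m}\cdot m\Lambda_{p}=\Lambda_{p}$, which establishes the bijection $\varphi$.

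The main obstacle is the well-definedness of $\varphi$: a single sub-lattice is induced by infinitely many tuples, so I must ensure the output parent-lattice is independent of that choice. The reformulation $\varphi(\Lambda_{s})=\frac{1}{n}\Lambda_{s}$ reduces this entirely to showing that the determinant $n$ is an invariant of $\Lambda_{s}$, which is exactly the identification of $n$ with the basis-independent index $[\Lambda:\Lambda_{s}]$. Once that identification is in place, every remaining step is routine index and fundamental-volume bookkeeping.
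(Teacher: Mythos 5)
Your proof is correct, and it is more explicit than what the paper provides. The paper essentially defines $\varphi$ at the level of tuples (the same $(k_{1},k_{2},k_{3},k_{4})$ produces both the sub-lattice and the parent-lattice) and then disposes of well-definedness with the single remark that it holds ``after checking the equivalence relations among descriptors,'' i.e.\ by appealing to the modular-group machinery of Section~\ref{sec:lattice}; no index or covolume argument appears, and no explicit inverse map is constructed. You instead make the map intrinsic: identifying $\varphi(\Lambda_{s})=\tfrac{1}{n}\Lambda_{s}$ and proving $n=[\Lambda:\Lambda_{s}]=\det\Lambda_{s}/\det\Lambda$ via the transformation law $\mathrm{Im}(\rho')=n\,\mathrm{Im}(\rho)/|k_{1}+k_{2}\rho|^{2}$ reduces well-definedness to the basis-independence of the covolume, which is cleaner and fully self-contained; the explicit inverse $\psi(\Lambda_{p})=m\Lambda_{p}$ with $m=[\Lambda_{p}:\Lambda]$ then makes bijectivity a two-line computation rather than an appeal to symmetry of the definitions. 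What the paper's route buys is brevity and consistency with its descriptor/equivalence-class formalism (the correspondence is literally read off from Definitions~\ref{sublatticedef} and~\ref{parentlatticedef}); what your route buys is a verifiable proof of the one point the paper leaves implicit, namely that two different tuples inducing the same sub-lattice necessarily share the same determinant $n$ and hence the same image. One cosmetic redundancy: your unimodular-change-of-basis argument for the invariance of $n$ is subsumed by the covolume identity $n=\det\Lambda_{s}/\det\Lambda$, so you could drop either one.
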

An example of this correspondence is in Figure~\ref{sublatDemo}.  Once we find all the sub-lattices of a lattice, all its parent-lattices come for free.  For a given lattice $\Lambda\langle\beta,\rho\rangle$, we only need to focus on finding $(k_{1},k_{2},k_{3},k_{4})\in\mathbb{Z}^{4}$ such that $(k_{3}+k_{4}\rho)/(k_{1}+k_{2}\rho)$ remains in $\mathcal{P}$.

\begin{figure}
\begin{center}
\begin{tabular}{ccc}
(a) & (b)  & (c) \\
    \includegraphics[height=1.3in]{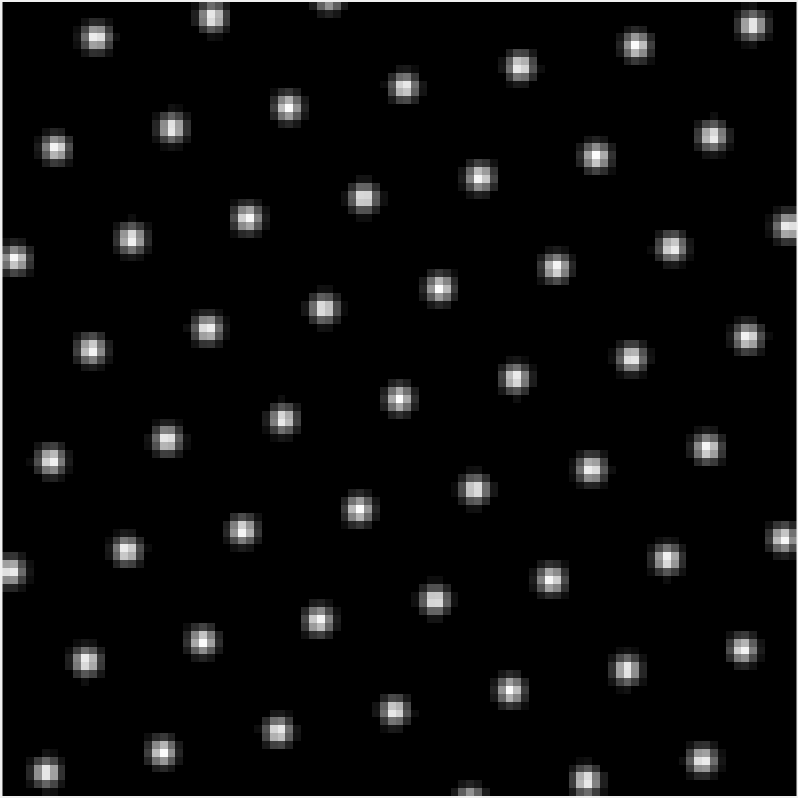} &
    \includegraphics[height=1.3in]{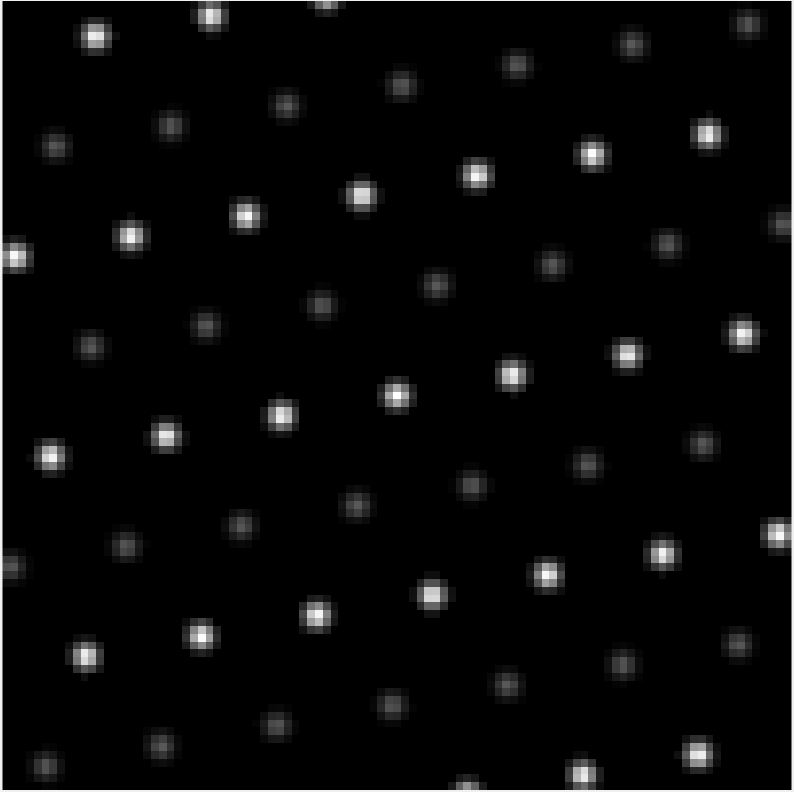} &
    \includegraphics[height=1.3in]{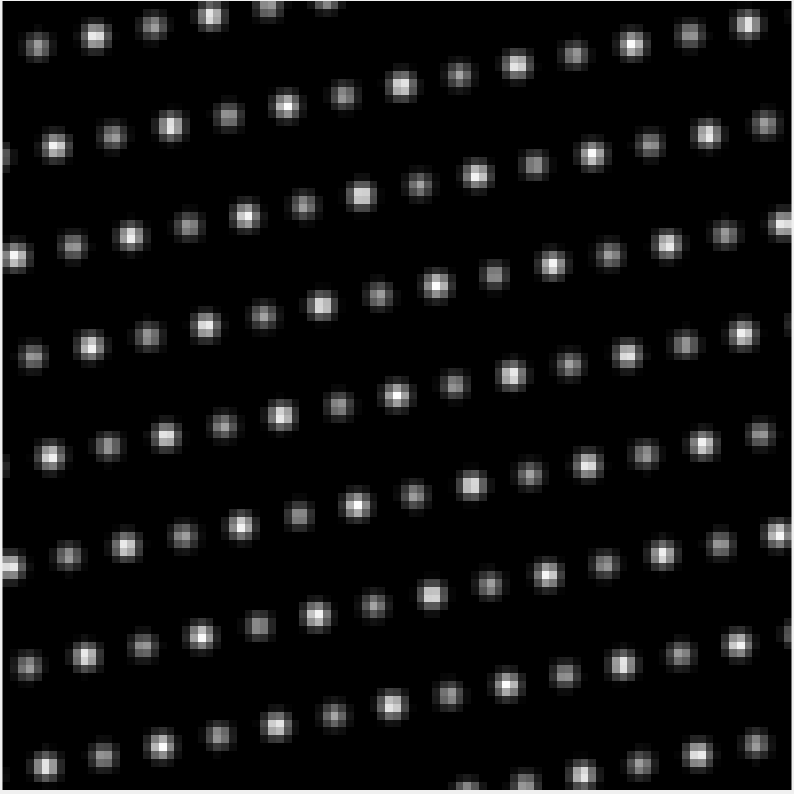} 
 \end{tabular}
 \end{center}
  \caption{[One-to-one correspondence between sub- and parent-lattice] (a) is a lattice $\Lambda\langle\beta,\rho\rangle=\Lambda\langle14.7721 + 2.6047i,e^{i\pi/3}\rangle$. $\Lambda\langle\beta,2\rho+1\rangle$ gives a sub-lattice in (b) and it corresponds to a parent-lattice $\Lambda\langle\beta/2,2\rho+1\rangle$. }\label{sublatDemo}
\end{figure}

\subsection{$M_{2}(\mathbb{Z})$-actions and Sub-lattices} 
As suggested above, we consider an element in $M_{2}(\mathbb{Z})$ 
whose image of $\mathcal{P}$ has non-empty intersection with $\mathcal{P}$. It suffices to see its action on three distinct points. Generally, given a $\rho\in\mathcal{P}$, finding all such possible integer coefficients is hard and not fruitful. There are two most easily found families of sub-lattices:
\begin{enumerate}
\item Corresponding to $(k_{1},k_{2},k_{3},k_{4})=(n,0,0,1)$, if $[\beta,\rho]\in\mathcal{L}$ with $|\rho|\geq n$, $n\in\mathbb{N}_{\geq 1}$, then for any $m\leq n$, $[m\beta,\rho/m]$ is a sub-lattice of $[\beta,\rho]$.
\item Corresponding to $(k_{1},k_{2},k_{3},k_{4})=(1,0,0,n)$, if $[\beta,\rho]\in\mathcal{L}$ with $|\text{Re}(\rho)|\leq 1/(2n)$, $n\in\mathbb{N}_{\geq 1}$, then for any $m\leq n$, $[\beta,m\rho]$ is a sub-lattice of $[\beta,\rho]$.
\end{enumerate}
In some special cases, it is also easy to find conditions for $(k_{1},k_{2},k_{3},k_{4})\in\mathbb{Z}^{4}$ such that there exists some $\rho\in\mathcal{P}$ who can be mapped to $\mathcal{P}$. For example, when $k_{2}=0$  (which forces $k_{1}\neq 0$), we find $\infty\to\infty$, $0\mapsto k_{3}/k_{1}$ and $\pm1/2\mapsto (\pm k_{4}/2+k_{3})/k_{1}$ by the $M_{2}(\mathbb{Z})$-action determined by this $(k_1,k_2,k_3,k_4)$. In order to have non-empty intersection with $\mathcal{P}$, we must require:
\begin{align*}
&\min\{(\pm k_{4}/2+k_{3})/k_{1}\}\leq 1/2 \quad \text{or}\\
& k_{3}^{2}-k_{4}^{2}/4<0~\text{and}~\min\{(\pm k_{4}/2+k_{3})/k_{1}\}\geq 1/2.
\end{align*} 
By the correspondence in (\ref{correspondence}), these results also extend symmetrically to parent-lattices. 

\section{Psudo-code for computing $d_{\mathscr{L}}$}  \label{A:code}
The definition of $d_{\mathscr{L}}$~(\ref{dLdef}) requires multiple comparisons. For paths passing through $E = \{(\beta,\rho) \mid \beta\in\mathcal{K}, |\rho|=1, \rho\in\mathcal{P}\}$, minimizations are involved.

\noindent\textbf{Inputs}: two lattice bases $(b_{1},b_{2})$ and $(b_{1}',b_{2}')\in\mathbb{C}^{2}$.\\[5pt]
\textbf{Step 1.} Transfer to descriptors: $\beta\gets b_{1}$, $\beta'\gets b_{1}'$, $\rho\gets b_{2}/b_{1}$, and $\rho'\gets b'_{2}/b'_{1}$.\\[5pt]
\textbf{Step 2.} Define two sub-routines:
\begin{align*}
D_{\mathcal{K}}&: (x,y)\in\mathbb{C}^{2}\mapsto\sqrt{w(|x|-|y|)^{2}+(1-w)(\cos^{-1}\frac{\text{Re}(x\overline{y})}{|x||y|})^{2}},~(w=0.05)\\
D_{\mathcal{P}}&: (x,y)\in\mathbb{C}^{2}\mapsto 2\ln\frac{|x-y|+|x-\overline{y}|}{2\sqrt{\text{Im}(x)\text{Im}(y)}}.	
\end{align*}
Extend them to two new sub-routines by:
\begin{align*}
d_{\mathcal{K}}&:(x,y)\in\mathbb{C}^{2}\mapsto\min\{D_{\mathcal{K}}(x,y), D_{\mathcal{K}}(-x,y)\}.\\
d_{\mathcal{P}}&:(x,y)\in\mathbb{C}^{2}\mapsto\min\{D_{\mathcal{P}}(x,y),D_{\mathcal{P}}(x-1,y),D_{\mathcal{P}}(x+1,y)\}.	
\end{align*}
Then define:
\begin{align*}
D:(x,y,z,w)\in\mathbb{C}^{2}\mapsto \sqrt{d_{\mathcal{K}}(x,z)^{2}+d_{\mathcal{P}}(y,w)^{2}}	
\end{align*}
\textbf{Step 3.} Fix an integer $N$.\\[5pt]
For $j=0,1,\cdots,N$:\\
\indent For $k=0,1,\cdots,N$:\\
\indent\indent $D_{j,k}\gets d(\beta,\rho',\beta,\rho')$;\\
\indent\indent $D_{j,k}\gets \min\{D_{j,k},d(\beta,\rho,\beta',e^{i(\pi/3+k\pi/3)})+d(\beta',e^{i(\pi/3+k\pi/3)},\beta',\rho')\}$;\\
\indent\indent $D_{j,k}\gets \min\{D_{j,k},d(\beta,\rho,e^{i(\pi/3+k\pi/3)}\beta',-1/e^{i(\pi/3+k\pi/3)})+d(\beta',e^{i(\pi/3+k\pi/3)},\beta',\rho')\}$;\\
\indent\indent $D_{j,k}\gets \min\{D_{j,k},d(\beta,\rho,\beta,e^{i(\pi/3+j\pi/3)})+d(\beta,e^{i(\pi/3+j\pi/3)},\beta',\rho')\}$;\\
\indent\indent $D_{j,k}\gets \min\{D_{j,k},d(\beta,\rho,\beta,e^{i(\pi/3+j\pi/3)})+d(\beta,e^{i(\pi/3+j\pi/3)},\beta',e^{i(\pi/3+k\pi/3)})+d(\beta',e^{i(\pi/3+k\pi/3)},\beta',\rho')\}$;\\
\indent\indent $D_{j,k}\gets \min\{D_{j,k},d(\beta,\rho,\beta,e^{i(\pi/3+j\pi/3)})+d(\beta,e^{i(\pi/3+j\pi/3)},e^{i(\pi/3+k\pi/3)}\beta',-1/e^{i(\pi/3+k\pi/3)})+d(e^{i(\pi/3+k\pi/3)}\beta',-1/e^{i(\pi/3+k\pi/3)},\beta',\rho')\}$;\\
\indent\indent $D_{j,k}\gets \min\{D_{j,k},d(\beta,\rho,\beta,e^{i(\pi/3+j\pi/3)})+d(e^{i(\pi/3+j\pi/3)}\beta,-1/e^{i(\pi/3+j\pi/3)},\beta',\rho')\}$;\\
\indent\indent $D_{j,k}\gets \min\{D_{j,k},d(\beta,\rho,\beta,e^{i(\pi/3+j\pi/3)})+d(e^{i(\pi/3+j\pi/3)}\beta,-1/e^{i(\pi/3+j\pi/3)},\beta',e^{i(\pi/3+j\pi/3)})+d(\beta',e^{i(\pi/3+j\pi/3)},\beta',\rho')\}$;\\
\indent\indent $D_{j,k}\gets \min\{D_{j,k},d(\beta,\rho,\beta,e^{i(\pi/3+j\pi/3)})+d(e^{i(\pi/3+j\pi/3)}\beta,-1/e^{i(\pi/3+j\pi/3)},...$\\
\indent\indent\indent$e^{i(\pi/3+j\pi/3)}\beta',-1/e^{i(\pi/3+j\pi/3)})+d(\beta',e^{i(\pi/3+j\pi/3)},\beta',\rho')\}$;\\
\indent End For\\
End For\\[5pt]
$d_{\mathscr{L}}((\beta,\rho),(\beta',\rho'))\gets\min_{j,k}D_{j,k}$.
\end{appendices}


\end{document}